\newtheorem{theorem}{Theorem}[section]
\newtheorem{corollary}[theorem]{Corollary}
\newtheorem{lemma}[theorem]{Lemma}
\newtheorem{observation}[theorem]{Observation}
\newtheorem{claim}[theorem]{Claim}
\theoremstyle{definition}
\newtheorem{definition}[theorem]{Definition}
\newtheorem{remark}[theorem]{Remark}
\newtheorem{fact}[theorem]{Fact}
\Crefname{claim}{Claim}{Claims}
\Crefname{observation}{Observation}{Observations}
\Crefname{algorithm}{Algorithm}{Algorithms}
\Crefname{fact}{Fact}{Facts}
\crefname{theorem}{Theorem}{Theorems}
\crefname{proposition}{Proposition}{Propositions}
\crefname{definition}{Definition}{Definition}
\crefname{lemma}{Lemma}{Lemma}
\crefname{corollary}{Corollary}{Corollaries}
\crefname{ineq}{inequality}{inequalities}
\Crefname{equation}{Equation}{Equations}
\newcommand{\N}{\mathbb{N}}
\newcommand{\ra}{\rightarrow}
\newcommand{\la}{\leftarrow}
\newcommand{\eps}{\varepsilon}
\newcommand{\tO}{\widetilde{O}}
\DeclareMathOperator{\polylog}{polylog}
\DeclareMathOperator{\poly}{poly}
\DeclareMathOperator*{\E}{\mathbb{E}}
\newcommand{\cA}{\mathcal{A}}
\newcommand{\cC}{\mathcal{C}}
\newcommand{\cE}{\mathcal{E}}
\newcommand{\cG}{\mathcal{G}}
\newcommand{\cL}{\mathcal{L}}
\newcommand{\cM}{\mathcal{M}}
\DeclareMathOperator{\ID}{ID}
\newcommand{\diam}{\mathrm{diam}}
\newcommand{\ER}{Erd\H{o}s-R\'{e}nyi~}
\newcommand{\Gnp}{G(n,p)}
\newcommand{\Geom}{\mathrm{Geom}}
\newcommand{\Gpa}{G_{pa}(n,\mu)}
\newcommand{\Gua}{G_{ua}(n,\mu)}
\DeclareMathOperator{\MIS}{MIS}
\DeclareMathOperator{\edge}{Edge}
\newcommand{\ts}{\texorpdfstring}
\newcommand{\sm}{\mathrm{Smallest}}
\newcommand{\lr}{\mathrm{Largest}}
\newcommand{\admin}{\mathrm{Admin}}
\newcommand{\Gnps}{G(n,p^*)}
\newcommand{\ps}{p^*}
\newcommand{\nbr}{\texttt{Nbr}}
\newif\ifdraft
\newif\ifnames
\newcommand{\toc}[1]{\pagenumbering{roman}\setcounter{tocdepth}{#1}\tableofcontents\newpage\pagenumbering{arabic}}
\title{Beyond Worst Case Local Computation Algorithms}
\author{Amartya Shankha Biswas\thanks{Supported by NSF award CCF-2310818.}\\MIT\\\texttt{asbiswas@mit.edu} \and Ruidi Cao\\MIT\\\texttt{ruidicao@mit.edu}\and 
Cassandra Marcussen\thanks{Supported in part by an NDSEG fellowship, and by NSF Award 2152413 and a Simons Investigator Award to Madhu Sudan.}\\Harvard University\\\texttt{cmarcussen@g.harvard.edu} \and
Edward Pyne\thanks{Supported by a Jane Street Graduate Research Fellowship and NSF awards CCF-2310818 and CCF-2127597.}\\MIT\\\texttt{epyne@mit.edu} \and Ronitt Rubinfeld\thanks{Supported by NSF awards CCF-2006664, DMS-2022448
 and CCF-2310818.}\\MIT\\\texttt{ronitt@csail.mit.edu} 
 \and Asaf Shapira\thanks{Supported in part
by ERC Consolidator Grant 863438.}\\Tel Aviv University\\\texttt{asafico@tauex.tau.ac.il} \and Shlomo Tauber\thanks{Supported in part
by ERC Consolidator Grant 863438.}\\Tel Aviv University\\\texttt{shlomotauber@mail.tau.ac.il}
}
\date{}
\begin{document}
\begin{titlepage}
\maketitle
\begin{abstract}
    We initiate the study of Local Computation Algorithms
    on average case inputs. In the Local Computation Algorithm (LCA) model, 
    we are given probe access to a huge graph, 
    and asked to answer membership queries about some combinatorial structure on the graph, answering each query with sublinear work. We define a natural model of average-case local computation algorithms.

    For instance, an LCA for the $k$-spanner problem gives access to a sparse subgraph $H\subseteq G$ that preserves distances up to a multiplicative factor of $k$. Our first result builds LCAs for this problem assuming the input graph is drawn from a variety of well-studied random graph models -- Preferential Attachment, Uniform Attachment, and \ER with a variety of parameters. Our spanners achieve size and stretch tradeoffs that are impossible to achieve for general graphs, 
    while having dramatically lower query complexity than known worst-case LCAs.

    Finally, we investigate the intersection of LCAs with Local Access Generators (LAGs). Local Access Generators provide efficient query access to a random object. We explore the natural problem of generating an \ER random graph \textit{together} with a combinatorial structure on it. We show that this combination can be easier to solve than focusing on each problem by itself, by building a fast, simple algorithm that provides access to an \ER random graph together with a maximal independent set.
\end{abstract}

\addtocounter{page}{-1}
\thispagestyle{empty}
\end{titlepage}

\toc{2}

\section{Introduction and Our Results}

When computing on a very large object, it can be important to find fast algorithms which answer user queries to the solution, while neither considering the whole input, nor computing the full output solution.
In the local computation model~\cite{RTVX11,ARVX12}, we are given probe access to a large object, such as a graph, 
and receive queries about some combinatorial structure on the graph.
We desire Local Computation Algorithms (LCAs) that can quickly answer such queries while making very few probes to the graph. 
Moreover, we require the answers returned by the algorithm on different queries to be consistent with a fixed global structure. 
This consistency requirement is challenging since
typically, we require the LCA to be memoryless -- it does not store information about its previous answers. 
\newcommand{\vr}{\vec{r}}
\begin{definition}\label{def:LCA}
	A \textbf{Local Computation Algorithm (LCA)} for a problem $\Pi$ is an oracle $\cA$ with the following properties.
	$\cA$ is given probe access to input $G$, a sequence of random bits $\vr$
    and a local memory.
	For any query $q$ in a family of admissible queries to the output,
	$\cA$ must use only its oracle access to $G$ (which we refer to as probes to $G$), random bits $\vr$,
	and local memory to answer the query $q$.
	After answering the query, $\cA$ erases its local memory (including the query $q$ and its response). 
    Let $T_{\cA}(G,q)$ denote the expected (over the choice of $\vr$)  number of
    probes it takes for the LCA $\cA$ to answer query $q$ on input $G$, and
    set $T_{\cA}(G) = \max_q T_{\cA}(G,q)$.
    We say the LCA has probe complexity $T(n)$ if the maximum of $T_{\cA}(G)$ over all possible inputs $G$ parametrized by size $n$ is $T(n)$.
	All the responses to queries given by $\cA$ must be consistent
	with a single valid solution $X$ to the specified computation problem on input $G$. 
\end{definition}
There has been extensive work on fast LCAs for 
a variety of
natural problems.
For example, on bounded degree graphs, there are LCAs with 
polylogarithmic query complexity for maximal independent set (MIS)~\cite{ghaffari2016improved, LRY17, ghaffari2019sparsifying, Gha22}, maximal matching~\cite{mansour2013local, yoshida2009improved, LRY17, BRR23}, and $(\Delta+1)$ vertex coloring~\cite{even2014deterministic, feige2017probe, czumaj2018sublinear, chang2019complexity}. LCAs have found applications in well-studied algorithmic problems \cite{LCAapp1, lange2025local} (such as matching) and have contributed to breakthroughs in learning theory \cite{LCAapp2, lange2025agnostic, lange2025robust}.

For other problems, polylogarithmic complexities for LCAs are ruled out by lower bounds.
For example, given a graph $G$, for the task of providing local access to
a spanner of $G$,
the best known query complexities~\cite{LRR16, LL18, PRVY19, LRR20, ACLP23,BF24}
are $\tO(n^{2/3})$,
and known lower bounds imply that $\Omega(\sqrt{n})$ time is required even for constant degree graphs~\cite{LRR16}.

A natural question is whether we can build improved LCAs when we assume the input graph is drawn from some distribution, 
and ask the LCA to succeed with high probability over a random graph from 
this distribution.
This motivates our definition of an average-case LCA:
\begin{definition}
	We say that $\cA$ is an \textbf{average-case local computation algorithm}
 for a distribution over objects $\cG$  parametrized by size $n$ 
 for problem $\Pi$
 if, with probability $(1-1/n)$ over $G\la \cG$, $\cA^G$ (the algorithm when given probe access to input $G$) is an LCA.
 We say that the LCA $\cA$ has \textbf{average-case probe complexity} $T(n)$ if the expected probe complexity $T_{\cA}(G)$ over $G\la \cG$ is $T(n)$. We say the LCA has \textbf{worst-case probe complexity} $T(n)$ if the maximum probe complexity $T_{\cA}(G)$ over $G\la \cG$ is $T(n)$.
\end{definition}
Note the requirement that with high probability over the object $G$, the LCA succeeds for \textit{every} query on this object.

\begin{remark}
    Prior work~\cite{PA3,PA1,PA2} has studied \say{local information algorithms (LIAs)} for preferential attachment graphs, a well-studied average-case graph model. LIAs are sublinear algorithms that use local information to return a set of nodes possessing some property. Probes are allowed only to vertices directly neighboring the already explored set.
    Certain LIA algorithms imply LCAs for spanners on preferential attachment graphs, and we give a detailed comparison in~\Cref{app:PA}.
\end{remark} 

We now describe our results on average case LCAs for graph spanners. 
We then describe a new model of local access
generation which locally generates a random object together with a combinatorial
structure, and give our results for locally generating a random graph
together with a maximal independent set.

\subsection{Our Results: Spanner LCAs for Average-Case Graphs}
Our first set of results focus on the well-studied problem of LCAs for spanners~\cite{LRR16, LL18, LRR20, PRVY19, ACLP23, BF24}. We study LCAs for spanners over the Erd\H{o}s-R\'{e}nyi, Preferential Attachment, and Uniform Attachment random graph models.
\begin{definition}
	A {\bf $k$-spanner} of a graph $G$ is a subgraph $H\subseteq G$ such that distances are preserved up to a multiplicative factor of $k$, which we refer to as the \textbf{stretch}.
\end{definition} 
For general graphs, a spanner with size $O(n^{1+1/k})$ and stretch $(2k+1)$ can be constructed in linear time~\cite{BS03}. Moreover, conditional on Erdos' girth conjecture~\cite{ErdosGirthBound} this size-stretch tradeoff is tight.

An LCA for the spanner problem has probe access to $G$, and answers queries of the form \say{is $(u,v)\in H$?}. We desire to minimize the number of edges retained in $H$, the per-query work, and the stretch. The recent work of Arviv, Chung, Levi, and Pyne~\cite{ACLP23} (building off several prior works~\cite{LRR16, LL18, LRR20, PRVY19}) constructed LCAs for spanners of stretch $\polylog(n)$ and size $\tO(n)$ with query complexity $\tO(\Delta^2n^{2/3})$, where $\Delta$ is a bound on the maximum degree, and $3$-spanners of size $\tO(n^{3/2})$ with query complexity $\tO(\sqrt n)$. 

For general graphs there is a lower bound of $\Omega(\sqrt n)$ work per query~\cite{LRR16,PRVY19}, even for graphs of bounded degree. 

Thus, to obtain faster algorithms we \textit{must} make a ``beyond worst case" assumption.
For the random graph models we consider, our algorithms achieve a size-stretch tradeoff that is impossible to achieve for general graphs under Erdos' girth conjecture, while simultaneously achieving a query time that is impossible for LCAs for general graphs.

\subsubsection{Local Computation Algorithms for the \ER Model}
Recall that $\Gnp$ denotes the \ER graph model with edge probability $p$, where each edge $(u,v)$ for $u\neq v$ is present independently with probability $p$.

To motivate our results, we first overview two simple constructions that we will compare against. First, for a graph $G\la \Gnp$, if we keep each edge in $G$ with probability $p'/p$ for some $p'<p$, we effectively sample a graph $H\subseteq G$ that is itself distributed as $G(n,p')$. It is well known that for any $p'\ge p_0=(2+\eps)\log(n)/n$, this graph will be connected with high probability. Moreover, the graph will be an expander whp and hence will have diameter (and thus stretch, when considered as a spanner of $G$) of $O(\log n)$. It is immediate that we can implement an LCA that keeps each edge of $G$ with probability $\min\{p_0/p,1\}$\footnote{The LCA uses its random tape to answer future queries to the same edge in a consistent fashion.}, and we can summarize the resulting algorithm in the following:

\begin{observation}\label{obs:ER_spanner_trivial}
	There is an average-case LCA for $G\la \Gnp$ for every $p$ that whp provides access to an $O(\log n)$-spanner with $O(n\log n)$ edges. Moreover, the LCA has probe complexity $1$.
\end{observation}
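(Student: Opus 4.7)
The plan is to formalize the construction described immediately before the observation. Fix the threshold $p_0 = (2+\eps)\log(n)/n$ and set $q = \min\{p_0/p, 1\}$. On a query ``is $(u,v)\in H$?'', the LCA first makes a single probe to $G$ to determine whether $(u,v)\in G$, returning \textsc{no} if not. Otherwise it uses the shared random string $\vr$, indexed by the edge $\{u,v\}$ (e.g.\ by hashing $\{u,v\}$ into a block of $\vr$), to draw an independent $\{0,1\}$-value that is $1$ with probability $q$, and outputs this value. Because the same index of $\vr$ is consulted on every future query to the same edge, the answers are consistent across queries, satisfying \Cref{def:LCA}. Total probe complexity is $1$ per query.

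The subgraph $H$ implicitly defined this way is exactly a $G(n,p')$ graph with $p' = \min\{p,p_0\}$, embedded inside $G$. I would next invoke the standard random graph facts: for $p'\ge p_0=(2+\eps)\log n/n$, with probability $1-o(1/n)$ the graph $G(n,p')$ is connected and has diameter $O(\log n)$ (for instance by the usual argument that BFS layers grow geometrically until they cover most of the vertices, followed by a union bound). Since $H\subseteq G$ and $H$ itself has diameter $O(\log n)$, for any pair of vertices $u,v$ we have $d_H(u,v)\le O(\log n)\le O(\log n)\cdot d_G(u,v)$, so $H$ is an $O(\log n)$-spanner of $G$. If instead $p<p_0$ then $H=G$ and the spanner property is trivial.

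For the size bound, the expected number of edges of $H$ is at most $\binom{n}{2} p_0 = O(n\log n)$, and since edges are included independently a Chernoff bound concentrates the count at $O(n\log n)$ with probability $1-o(1/n)$. A final union bound over the connectivity/diameter event and the edge-count event gives that $H$ satisfies all required properties with probability $1-o(1/n)$, so $\cA^G$ is an LCA with high probability over $G\la \Gnp$, as required. No step poses a genuine obstacle; the only mild subtlety is arranging the shared randomness so that the answer to each edge query is deterministic given $\vr$, which is handled by indexing into $\vr$ by the edge label.
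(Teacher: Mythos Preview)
Your proposal is correct and follows essentially the same approach as the paper's justification: subsample each edge with probability $\min\{p_0/p,1\}$ using the shared random tape indexed by the edge, so that $H$ is distributed as $G(n,\min\{p,p_0\})$, and then invoke standard random-graph facts (connectivity and $O(\log n)$ diameter above the threshold, plus a Chernoff bound on the edge count). The only cosmetic difference is that the paper phrases the diameter bound via ``$H$ is whp an expander,'' whereas you phrase it via geometric growth of BFS layers; these are two routes to the same standard fact, and your explicit handling of the $p<p_0$ case (where $H=G$ trivially) is a nice touch the paper leaves implicit.
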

However, such a construction cannot provide constant stretch with nearly-linear edges, nor linear edges with any stretch (as any $p'$ that results in a linear number of expected edges will result in a disconnected graph with high probability). 

Furthermore, it can be shown~\cite{ZamirPersonal} that for $p\ge p_0=(2+\eps)\log(n)/n$, we can likewise obtain a spanner by having each each vertex retain two random edges (which can be implemented by scanning down the adjacency list), giving an LCA with the following properties:
\begin{observation}[\cite{ZamirPersonal}]\label{obs:ER_spanner_Or}
    There is an average-case LCA for $G\la \Gnp$ for every $p\ge p_0$ that whp provides access to an $O(\log n)$-spanner with $2n$ edges. Moreover, the LCA has probe complexity $O(np)$.
\end{observation}
This algorithm improves the edge count of~\Cref{obs:ER_spanner_trivial}, but retains superconstant stretch (and has a slower query time).
We improve on both constructions, by obtaining ultra-sparse spanners (i.e. with $n+o(n)$ edges) and constant stretch. For dense graphs, our results are as follows:

\begin{restatable}{theorem}{ERspanner}\label{thm:ER_spanner}
	For every $np=n^{\delta}$, there is an average-case LCA for $G\la G(n,p)$ that whp gives access to a $(2/\delta+5)$-spanner $H$ with $n+o(n)$ edges. Moreover, the LCA has probe complexity $1$ where we have access to a sorted adjacency list in $G$, and $O\left(\min\left\{n^{\delta},n^{1-\delta}\log n\right\}\right)$ otherwise.
\end{restatable}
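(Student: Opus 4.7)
The plan is to construct $H$ as the union of a small ``core'' plus one edge per high-ID vertex. Fix a threshold $L = \Theta\!\left(n^{2(1-\delta)/(2-\delta)}\right)$, and for each $v$ let $P(v)$ denote $v$'s smallest-ID neighbor in $G$. Take
\[
H \defeq \{(u,v) \in E(G) : u,v \leq L\} \;\cup\; \{(v, P(v)) : v > L\}.
\]
The LCA is then immediate: given a query $(u,v)$ with $u < v$, if $v \leq L$ a single probe to $G$ decides whether $(u,v) \in G$ (hence in $H$); if $v > L$, a single probe to the first entry of $v$'s sorted adjacency list retrieves $P(v)$, and we accept iff $P(v) = u$. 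Without sortedness, $P(v)$ is recovered in $O(\min\{n^\delta, n^{1-\delta}\log n\})$ probes whp by taking whichever is cheaper between scanning $v$'s $\sim np$ neighbors and probing candidate IDs $1, 2, \ldots$ until the first neighbor appears (which is at position at most $O(n^{1-\delta}\log n)$ whp).

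For the edge count, a Chernoff bound gives $|E(G[\{1,\ldots,L\}])| \leq L^2 p = n^{\,1 - \delta^2/(2-\delta)} = o(n)$ whp, and the parent-edge contribution is at most $n - L$, so $|E(H)| = n + o(n)$. For the stretch, I bound $\diam(H)$. Since $L p = n^{\delta(1-\delta)/(2-\delta)} = n^{\Omega(\delta)} = \omega(\log n)$, a Chernoff-plus-union bound gives $P(v) \leq L$ for every $v$ whp, so each vertex is at $H$-distance $\leq 1$ from the core. Standard \ER diameter concentration then yields $\diam(G[\{1,\ldots,L\}]) \leq \lceil \log L / \log(Lp)\rceil + O(1) = 2/\delta + O(1)$, so $\diam(H) \leq 2 + 2/\delta + O(1) \leq 2/\delta + 5$ with the constants in $L$ tuned appropriately. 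The stretch bound follows: for any $(u,v) \in E(G)$, $d_H(u,v) \leq \diam(H) \leq (2/\delta + 5)\cdot d_G(u,v)$, and for non-edges the diameter bound dominates. The rare event $P(u) = v$ with both $u,v > L$ would make the LCA miss a parent edge, but a union bound shows it occurs with probability $n^{-\omega(1)}$, well within the $(1-1/n)$ slack in the average-case LCA definition.

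The main technical subtlety is the balancing act in choosing $L$: it must be large enough that (i) $Lp = n^{\Omega(\delta)}$, so the core has diameter $\approx 2/\delta$ and lies above the connectivity threshold, and (ii) $Lp = \omega(\log n)$, so every vertex has a neighbor in the core whp; yet small enough that (iii) $L^2 p = o(n)$, keeping the core edges negligible. The exponent $\alpha = 2(1-\delta)/(2-\delta)$ for $L = n^\alpha$ is exactly the value at which constraints (i) and (iii) simultaneously meet, yielding diameter $2/\delta$ together with a sublinear core; inserting polylogarithmic slack handles (ii) without disturbing the other two bounds. I expect this exponent bookkeeping to be the only step demanding care; everything else follows from the standard random-graph toolkit (Chernoff, union bounds, and diameter concentration).
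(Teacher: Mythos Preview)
Your proposal is correct and follows essentially the same approach as the paper: designate the low-index vertices as a center, keep all intra-center edges, and have every other vertex keep a single edge to its smallest-index neighbor. The only difference is the choice of threshold---you pick $L=n^{2(1-\delta)/(2-\delta)}$ so that $\log L/\log(Lp)=2/\delta$ on the nose, whereas the paper takes $T=n^{1-\delta/2-\delta^2/8}$ with a bit of built-in slack; both choices satisfy the three constraints you identify and yield the same $2/\delta+5$ stretch via the same \ER diameter bound, so no ``tuning'' of constants is actually required.
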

In particular, for highly dense \textit{and} highly sparse graphs, we obtain a runtime $n^\eps$ for small $\eps$, beating the worst-case lower bound of $\Omega(\sqrt n)$.

Our last result in the \ER model focuses on sparse input graphs 
(for instance, those with $np=n^{o(1)}$.)
Here we consider the relaxed problem of producing a sparse spanning subgraph (LCAs for which have been studied before~\cite{ Expansion,LL18, PRVY19,LRR20, BF24}), where we do not bound the stretch. We note that known lower bounds~\cite{Expansion,PRVY19} imply a $\sqrt{n}$ query lower bound even for this problem on sparse graphs.

We are able to obtain an ultra-sparse connectivity-preserving subgraph for all edge probabilities greater than $p^*=7\log(n)/n$, only a constant factor above the connectivity threshold. Moreover, we achieve query time $\tO(\Delta)$, where $\Delta=np$ is the expected degree of the graph. 

\begin{restatable}{theorem}{ERSSS}\label{thm:ER_SSS}
    There is an average-case LCA for $G\la G(n,p))$ for every $p \ge 7\log n / n$ that w.h.p provides access to a sparse connected subgraph $H\subseteq G$, such that $H$ has $n+o(n)$ edges. Moreover, the LCA has probe complexity $O(\Delta\polylog(n))$.
\end{restatable}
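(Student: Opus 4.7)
The plan is to construct $H$ as the union of a priority-based 1-out pseudoforest and a small set of bridging edges that merge its components.

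For the first phase, using shared randomness I assign iid uniform priorities $\pi(v) \in [0,1]$ to each $v$, and define $p(v) \defeq \arg\min_{u \in N(v)} \pi(u)$, which is computable in $O(\Delta)$ probes by enumerating $N(v)$. We include every edge $(v, p(v))$ in $H$; this contributes at most $n$ edges and yields a pseudoforest whose connected components are rooted at ``2-cycles'' $(u,v)$ satisfying $p(u) = v$ and $p(v) = u$. I would first show that the number of 2-cycles, hence the number of components, is $\Theta(1/p) = O(n/\log n)$ whp: for each ordered pair $(u,v)$, the probability of forming a 2-cycle factors as $p \cdot \Pr[\pi(u)\text{ is min of }\pi(N(v)\setminus\{u\})] \cdot \Pr[\pi(v)\text{ is min of }\pi(N(u)\setminus\{v\})]$, which evaluates to $\Theta(p/(np)^2)$ per pair; summing over pairs and applying a second-moment concentration argument upgrades this expectation to whp. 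I would also show that, starting from any $v$, the chain $v, p(v), p^2(v), \ldots$ absorbs into its 2-cycle within $O(\polylog n)$ steps whp, since at each step there is independent probability $\Theta(1/(np))$ of entering the 2-cycle.

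For the second phase, each 2-cycle vertex $v$ (detected in $O(\Delta)$ probes by checking $p(p(v)) = v$) contributes $k$ additional bridging edges to its 2nd, 3rd, \ldots, $k$-th smallest-priority neighbors, for a suitable parameter $k$. The total additional edge count is $O((n/\log n) \cdot k)$. For connectivity, I would view the quotient multigraph $\bar G$ whose vertices are the 1-out components and whose edges are bridges: each bridge from $v$ to a neighbor $w$ connects distinct components with probability $1 - 1/e + o(1)$, since the failing event is exactly $p(w) = v$, which occurs with probability $(1-\pi(v))^{np-1} \approx e^{-1}$ given $\pi(v) = \Theta(1/(np))$. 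Comparing $\bar G$ to a sufficiently dense random multigraph on $O(n/\log n)$ nodes would yield whp connectivity. For query implementation, the LCA probes $N(u)$ and $N(v)$ to compute $p(u), p(v), p(p(u)), p(p(v))$, classifies $(u,v)$ as a 1-out edge whenever $v = p(u)$ or $u = p(v)$, and otherwise tests whether one of $u,v$ is a 2-cycle vertex with the other endpoint among its top-$k$ smallest-priority neighbors. Every step uses $O(\Delta)$ probes, well within the $O(\Delta \polylog n)$ budget.

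The main obstacle I anticipate is establishing whp connectivity while keeping the number of bridges $o(n)$. A naive per-component union bound over the $\Theta(n/\log n)$ components would demand $k = \Omega(\log n)$ bridges per 2-cycle (since each bridge fails with constant probability $\approx 1/e$), yielding $\Omega(n)$ bridges total---too many. To obtain $o(n)$ one must exploit the global structure of $\bar G$: for instance, arguing that a giant merged component forms after a single round of $k = O(\log\log n)$ bridges per 2-cycle and that the residual small components can be absorbed by a short Bor\r uvka-style iteration. Correctly handling the correlations among bridges from different 2-cycle vertices and the variability of component sizes will be the crux of the analysis.
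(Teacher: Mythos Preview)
Your approach is genuinely different from the paper's, and the obstacle you flag at the end is exactly the gap that prevents it from going through as stated.

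The paper also begins by having each vertex point to its smallest(-index) neighbor, but it does \emph{not} stop at the resulting $1$-out components. Instead it singles out \textbf{leaders}: vertices that are minimal in their entire \emph{two-hop} neighborhood. Since that neighborhood has size $\Theta((np^*)^2)=\Theta(\log^2 n)$ whp, there are only $O(n/\log^2 n)$ leaders, and one extra round attaches every non-leader candidate into some leader's tree. Each leader then nominates an \textbf{administrator} (its largest-index neighbor), and the administrator keeps its \emph{entire} neighborhood of $O(\log n)$ edges. The total bridging cost is therefore $O(n/\log^2 n)\cdot O(\log n)=O(n/\log n)=o(n)$. Connectivity is proved by showing that, after conditioning on everything needed to fix the leaders and administrators, each administrator still has $\Theta(n)$ unconditioned potential edges into a designated ``medium-index'' block; a union bound over cuts of the administrator set then finishes.

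Your $2$-cycle roots are the natural analogue one level shallower, and that is precisely why you get $\Theta(n/\log n)$ of them rather than $\Theta(n/\log^2 n)$: a $2$-cycle $\{u,v\}$ with $\pi(u)<\pi(v)$ only certifies that $u$ is minimal in $N(u)\cup N(v)$, a set of size $\Theta(\log n)$, not $\Theta(\log^2 n)$. That missing $\log n$ factor is exactly the budget needed to let each root keep its whole neighborhood, which is what makes the paper's connectivity argument clean. Your Bor\r uvka rescue is plausible in spirit but would require (i) a real bound on the chain length to reach the $2$-cycle---the ``independent $\Theta(1/(np))$ per step'' heuristic is not correct, since successive steps reveal correlated information about the priorities; (ii) a local implementation of the iterated merging that still stays within $O(\Delta\,\polylog n)$ probes; and (iii) a connectivity analysis of the quotient multigraph under non-uniform component sizes and correlated bridges. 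Also, a smaller correctness issue: your claim that a bridge from a $2$-cycle vertex $v$ to a neighbor $w$ fails exactly when $p(w)=v$ undercounts failures---the bridge stays inside $v$'s component whenever $w$ is any descendant of $v$ or of its $2$-cycle partner in the $1$-out forest, not only a direct child.
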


\subsubsection{Local Computation Algorithms for the Preferential and Uniform Attachment Models}\label{subsec:PA}

Next, we construct spanner LCAs for preferential and uniform attachment graphs with a sufficiently high degree parameter. In the preferential attachment model (formally defined in~\Cref{def:PA}), the graph is constructed by sequentially inserting $n$ vertices. For each new vertex $v_i$, $\mu=\mu(n)$ edges are added to the graph; these may either be self loops or edges from $v_i$ to existing vertices $(v_1,\ldots,v_{i-1})$, where an edge is added to $v_j$ with probability proportional to the degree of $j$. (Afterwards, the vertices are permuted randomly, so that the algorithm cannot use the IDs to determine insertion order).
Such a model captures the property that high-degree nodes tend to accumulate additional connections.
The generation of preferential attachment graphs (and variants of it) have been extensively studied~\cite{BAgen1, BAgen2, BAgen3, BAgen4, BAgen5, BAgen6}.

Our spanner LCA for preferential attachment graphs of sufficiently high degree constructs a low stretch spanning tree, a stronger object than a spanning subgraph. A low-stretch spanning tree $H\subseteq G$ is a spanner with exactly $n-1$ edges, the minimum required even to preserve connectivity. 
\begin{restatable}{theorem}{PASpanner}
	\label{thm:PA_Spanner}
	For every preferential attachment process with parameter $\mu>c_{\mu}\log (n)$ for a global constant $c_{\mu}$, there is an average-case LCA for $G\la \Gpa$ that w.h.p gives access to an $O(\log n)$-spanner $H\subseteq G$, and moreover $H$ contains $n-1$ edges. On query $(u,v)$ the LCA has time complexity $O(d_u+d_v)$, which is $O(\mu \sqrt{n})$ in the worst-case and $O(\mu \log^3 n)$ in expectation (over all possible queries).
\end{restatable}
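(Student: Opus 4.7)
The plan is to define a spanning tree $H$ via a \emph{maximum-degree parent pointer}: for every non-root vertex $v$, set $\text{parent}(v)$ to be the neighbor of $v$ of largest current degree, with ties broken by smallest vertex ID; the root is the unique vertex of globally maximum $(\text{degree}, -\text{ID})$. Then $H = \{ \{v, \text{parent}(v)\} : v \neq \text{root}\}$. The LCA, on query $(u,v)$, probes the full adjacency lists of $u$ and $v$, queries the degree of each neighbor, computes $\text{parent}(u)$ and $\text{parent}(v)$, and returns true iff $\text{parent}(u) = v$ or $\text{parent}(v) = u$; this takes $O(d_u + d_v)$ work modulo a polylogarithmic degree-query overhead.

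I would first verify combinatorial correctness. Parent pointers strictly increase the pair $(\text{degree}, -\text{ID})$, so parent chains are acyclic; there is exactly one global maximum (the root), and every other vertex has exactly one outgoing parent edge, so the oriented parent graph is a spanning in-tree yielding $|H| = n - 1$. Standard connectivity facts for $\Gpa$ with $\mu \ge c_\mu \log n$ ensure $G$ is connected whp, so every chain terminates at the root. For the query complexity bounds, the worst-case maximum degree in $\Gpa$ is $O(\sqrt n)$ whp, giving the $O(\sqrt n)$ worst-case bound, and $\E_v[d_v] = \Theta(\mu) = \Theta(\log n)$ over a uniformly random vertex, giving $O(\log^3 n)$ in expectation after accounting for $\polylog(n)$ slack from degree queries and concentration.

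The main task is the stretch bound, which reduces to showing $\text{depth}(H) = O(\log n)$ whp (since then $d_H(u,v) \le 2\,\text{depth}(H) = O(\log n)$ for every pair). The key claim is \textbf{multiplicative growth along the parent chain}: there exists an absolute constant $\alpha > 1$ such that whp, for every non-root $v$, $d_{\text{parent}(v)} \ge \alpha \cdot d_v$. I would prove this using the following ingredients: (i) the expected final degree of the vertex inserted at step $i$ is $\Theta(\mu \sqrt{n/i})$, with tight concentration when $\mu = \Omega(\log n)$; (ii) at $v$'s insertion step $i_v$, each of its $\mu$ preferential-attachment choices lands in the earliest $i_v/4$ vertices with constant probability (the sum-of-degrees weighting favors old vertices), so a Chernoff bound over the $\mu \ge c_\mu \log n$ choices yields whp an insertion-neighbor of $v$ inserted at time $\le i_v/4$, hence of final degree $\ge 2 d_v$; (iii) the contribution of later-inserted neighbors to $v$'s neighborhood is whp smaller than this insertion-neighbor's degree, so $\text{parent}(v)$ in fact realizes it. A union bound over all $v$ (relying on $c_\mu$ large enough to drive each failure probability below $1/n^2$) completes the claim. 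Iterating $d_{\text{parent}(v)} \ge \alpha d_v$ from any leaf (degree $\ge \mu$) to the root (degree $O(\mu\sqrt n)$) yields depth $O(\log \sqrt n) = O(\log n)$.

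The hard part will be establishing the constant-factor multiplicative jump in step (ii). The straightforward monotonicity of degrees along the parent chain only yields a $O(\sqrt n)$ depth bound, which is useless. Extracting a constant-factor jump requires careful martingale/concentration control of both the normalizing sum of degrees at each insertion epoch and the top-order statistic among $v$'s $\mu$ attachment choices, as well as ruling out that a later-inserted vertex overtakes the earliest insertion-neighbor in $v$'s neighborhood by the final time. The assumption $\mu \ge c_\mu \log n$ is used precisely to make these concentration events each fail with probability $o(1/n)$, allowing a uniform union bound over all $n$ parent steps.
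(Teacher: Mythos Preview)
Your approach is essentially the paper's: the connectivity rule (retain the edge to the maximum-degree neighbor) and the key structural claim (every non-root vertex has a neighbor of at least twice its degree, proved via the $\Theta(\mu\sqrt{n/i})$ degree profile together with an ``edge to an early-index vertex'' argument) are exactly \Cref{lem:PAMainLemma} and its proof. Two points need correction.

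First, acyclicity and the $n-1$ edge count do \emph{not} follow from the definition of the parent map. In an arbitrary graph a non-root vertex can have strictly larger degree than every one of its neighbors, in which case the parent pointer strictly \emph{decreases} $(\text{degree},-\text{ID})$; nothing in the definition rules this out. That it never happens in $\Gpa$ is precisely the content of the multiplicative-growth lemma, so the tree structure must be derived as a corollary of that lemma rather than asserted as a preliminary ``combinatorial'' fact. (The paper also treats the $O(1)$ earliest-inserted vertices separately, since the concentration in your step (i) is unavailable there; it shows directly that they form a clique containing the global maximum.) Relatedly, your step (iii) is unnecessary: once \emph{some} neighbor $u$ of $v$ has $d_u\ge 2d_v$, the maximum-degree neighbor automatically satisfies $d_{\text{parent}(v)}\ge d_u\ge 2d_v$, so there is no need to argue that $\text{parent}(v)$ coincides with the early insertion-neighbor.

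Second, the expected query time must be averaged over a uniformly random \emph{edge}, not a uniformly random vertex. The correct quantity is
\[
\frac{1}{|E|}\sum_{\{u,v\}\in E}(d_u+d_v)=\frac{1}{|E|}\sum_v d_v^2,
\]
which is size-biased toward high-degree vertices. Plugging in $d_{v_i}=\Theta(\mu\sqrt{n/i})$ gives $\sum_v d_v^2=\Theta(\mu^2 n\log n)$ and hence average $\Theta(\mu\log n)$; your vertex-average $\E_v[d_v]=\Theta(\mu)$ undercounts by a $\log n$ factor, which you are currently hiding in unspecified ``polylog slack.''
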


We note that techniques from previous work on local information algorithms \cite{PA2} can be applied to obtain an LCA giving access to a $\Tilde{O}(\log n)$-spanner with $\Tilde{O}(n)$ edges with a query time that is $O(\mu \sqrt{n})$ in the worst case and $O(\mu \polylog(n))$ in expectation for any $\mu$ (see \Cref{app:PA}). We focus on the setting where $\mu$ is sufficiently large and the resulting graph is therefore not already sparse. Our LCA in this setting achieves improved bounds for the sparsity and query time.

We also construct spanner LCAs for uniform attachment graphs of sufficiently high degree. In the uniform attachment model (formally defined in \Cref{def:UA}), $n$ vertices are also inserted sequentially. At each time step, a new vertex $v_i$ joins the graph and $\mu = \mu(n)$ edges are added from $v_i$ to existing vertices $(v_1, \ldots, v_{i-1})$, which are each chosen independently and uniformly at random. We assume again that the insertion order of vertices is unknown to the LCA. Uniform attachment graphs \cite{tapia1967generation} are standard models of random circuits and randomly evolving networks with applications including the modeling of networks, physical processes, and the spread of contamination among organisms, as noted in \cite{zhang2015number}.

\begin{restatable}{theorem}{UASpanner}
	\label{thm:UA_Spanner}
	For every uniform attachment process with parameter $\mu>c_{\mu}\log^2(n)$ for a global constant $c_{\mu}$, there is an average-case LCA for $G\la \Gua$ that w.h.p gives access to an $O(\log n)$-spanner $H\subseteq G$, and moreover $H$ contains $n + c$ edges, for some constant $c$ independent of $n$. 
 
    Moreover, let $D := \mu \cdot (H_{n-1} - H_6) + \mu/2$ where $H_n$ denotes the $n$-th Harmonic number. On query $(u,v)$, if $\min\{d_u, d_v\} > D$, the time complexity is $O(1)$. Otherwise, the time complexity is $O(d_u + d_v)$ which is $O(\mu \log n)$ in the worst-case and $O(\mu)$ in expectation (over all possible queries).
\end{restatable}

\subsection{Our Results: Joint Sampling of \ER Graphs and Maximal Independent Sets}
A natural topic relating to local algorithms and random graphs is to sample the random graph itself in a local fashion, rather than assuming we have probe access to one that already exists. Several recent works~\cite{GGN03,AN08,BRY20,BPR22,MSW22, ELMR} studied exactly this question, under the label of Local Access Generators (LAGs).
These algorithms provide efficient query access to a random instance of some structure.  
\begin{definition}\label{def:LAG}
	A \textbf{Local Access Generator} (LAG) of a random object $G$ sampled from a distribution $\cG$, is an oracle that provides access to $G$ by answering various types of
	\emph{supported queries}, given a sequence of random bits $\vec r$. We say the LAG is \textbf{memoryless} if it does not store its answers to prior queries. We require that (fixing a random tape) the responses of the local-access generator to all queries must be consistent with a single object $G$. Moreover, the distribution $\cG'$ sampled by the LAG must be within $n^{-c}$ from $\cG$ in TV distance, for any desired constant $c$.
\end{definition}
As in the case of LCAs, we desire Local Access Generators to be as efficient as possible per query. We also strongly desire the LAG to be memoryless (a requirement in the setting of LCAs,
but not always achieved for LAGs), and our result achieves this goal.

Given the two lines of work (local computation algorithms and local access generators), we ask if they can be unified. Rather than solving both problems independently, build an algorithm which provides access to a random graph $G\la \cG$ \textit{together} with a combinatorial structure $M$ on that graph. By jointly solving both problems, one could hope to exploit the ability for the local access generator and local computation algorithm to coordinate. 

Prior work has studied exactly this question in the setting of polynomial time algorithms. Work of Bach~\cite{Bach} showed that one could generate random numbers \textit{together} with their factorization, whereas factoring numbers that have been generated \say{in advance} is widely considered to be hard.

We show that such an approach is also fruitful in the setting of LCAs. We again focus on the dense \ER model, and this time on the extensively studied problem~\cite{ghaffari2016improved, LRY17, ghaffari2019sparsifying, Gha22} of Maximal Independent Set (MIS). 
The frontier result of Ghaffari~\cite{Gha22} provides an LCA for MIS with per-query runtime $\poly(\Delta\log n)$, and a local sampling implementation of dense \ER graphs is straightforward. However, composing these algorithms does not give a sublinear runtime. Our result achieves runtime $\polylog(n)$ for $p\ge 1/\polylog(n)$ per query, both for queries to the random graph and to its MIS:
\begin{restatable}{theorem}{ERMIS}\label{thm:ER_MIS}
	There is a memoryless Local Access Generator $\cA$ for $(G,M)$, where $G\la G(n,p)$ and $M\subseteq [n]$ is an MIS in $G$. Moreover, the per-query complexity of $\cA$ is $\polylog(n)/p$ with high probability. 
\end{restatable}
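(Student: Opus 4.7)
The plan is to jointly sample $(G, M)$ by first sampling the MIS $M$ in an edge-free way and then sampling the edges of $G$ conditional on $M$. The key structural fact is that the greedy MIS on $G \sim G(n,p)$, computed in a uniformly random vertex ordering, has size $K = O(\log n / p) = \polylog(n)$ with high probability when $p \ge 1/\polylog(n)$, so $M$ is compactly representable and fully computable in time $\polylog(n)/p$.

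Using the shared random tape, I fix a uniformly random permutation $\pi$ of $[n]$ accessible via a standard memoryless LAG for permutations (which answers $\pi(i)$ and $\pi^{-1}(v)$ queries in $\polylog(n)$ time). The MIS $M$ is defined as the output of greedily processing vertices in the order $\pi(1), \pi(2), \ldots, \pi(n)$, adding each to $M$ iff it has no earlier neighbor already in $M$. The key observation is that when processing vertex $\pi(i)$ with current MIS of size $s$, the probability of joining $M$ is exactly $(1-p)^s$, since the edges from $\pi(i)$ to current MIS members are i.i.d.\ Bernoulli$(p)$ and independent of any edges not yet revealed. Therefore $M$ can be sampled \emph{without any edge queries}: starting with $M = \emptyset$, the gap in $\pi$-position to the next MIS insertion is $\Geom((1-p)^{|M|})$. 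Summing these gaps until they exceed $n$ yields the $K$ MIS positions $i_1 < \cdots < i_K$ in $O(K) = \polylog(n)/p$ work, after which $M = \{\pi(i_1), \ldots, \pi(i_K)\}$ is obtained through $K$ calls to the permutation LAG.

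Given $M$, edges of $G$ are sampled case-wise using designated random-tape bits indexed by vertex pairs: within $M$ no edges; between two non-MIS endpoints, or between a non-MIS $v$ and an MIS vertex of larger $\pi$-position, an independent Bernoulli$(p)$. The coupled case is a non-MIS vertex $v$ together with its set of prior MIS neighbors $M_v = M \cap \{w : \pi^{-1}(w) < \pi^{-1}(v)\}$: the edge vector from $v$ to $M_v$ is jointly distributed as $|M_v|$ independent Bernoulli$(p)$ variables conditioned on at least one being $1$. I sample this entire vector deterministically from bits indexed by $v$ alone, for instance by direct sampling of a truncated binomial. A query ``$v \in M$?'' reduces to a $\pi^{-1}$ lookup and a comparison against the list of $K$ MIS positions; an edge query $(u,v)$ determines the MIS status of both endpoints and retrieves the edge value from the appropriate case above.

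The main obstacle is ensuring memoryless consistency for the coupled edge vector: a query on a single edge $(v, m)$ with $v$ non-MIS and $m \in M_v$ must return the same bit across all possible query sequences, yet the conditional distribution genuinely couples the edges $\{(v,w) : w \in M_v\}$. My approach sidesteps this by always re-sampling the \emph{entire} length-$|M_v|$ vector for $v$ from its deterministic hash, then indexing into that vector. Since $|M_v| \le K = \polylog(n)/p$, this recomputation is cheap, and the overall per-query cost remains $\polylog(n)/p$, dominated by the MIS simulation and the calls to the permutation LAG.
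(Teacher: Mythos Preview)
Your proposal is correct and follows essentially the same approach as the paper: sample the greedy MIS first via geometric jumps (exploiting that each new vertex joins with probability $(1-p)^{|M|}$), then sample edges conditionally on $M$, handling the coupled case by regenerating the full edge vector from a non-MIS vertex to its earlier MIS neighbors. The only differences are cosmetic: the paper processes vertices in the natural order $1,2,\ldots,n$ rather than introducing a random permutation (which is unnecessary since $G(n,p)$ is vertex-symmetric, and which saves you the permutation LAG), and it uses rejection sampling rather than direct sampling for the coupled vector.
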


\section{Proof Overviews}

\subsection{Spanners for \ER}
Next, we overview our proofs. For our spanner results, we first give a \say{global} description of the connectivity condition, then describe how we implement this condition in a local fashion.

\ERspanner*
Our connectivity rule is as follows. We designate a sublinear-size set of vertices in $G$ as \textbf{centers}, which we denote $\cC$. We then retain in $H$ all edges between centers. Finally, every non-center vertex adds the first edge from itself to $\cC$ in $H$.

By choosing the size of $\cC$ appropriately, we ensure that the following three conditions hold with high probability: there are $o(n)$ intra-center edges (enforced by choosing the size so that  $|\cC|^2p=o(n)$), every non-center vertex has an edge to the center with high probability (enforced by choosing the size so that $|\cC|p=\Omega(\log n)$), and the center has constant diameter (which follows as the center is itself distributed as $G(|\cC|,p)$).

To implement this connectivity rule as an LCA, we break into the sparse case (where a non-center vertex simply queries its entire adjacency list and chooses the least-ranked edge to keep) or the dense case (where a non-center vertex queries the adjacency matrix until it finds its first edge). If we additionally assume that the adjacency list of each vertex is sorted in ascending order, we can perform this check in constant time.

\subsection{Sparse Connected Subgraphs for \ER}
Recall we are given probe access to $G\la \Gnp$ and wish to provide local access to a sparse connected subgraph $H\subseteq G$ with very few edges. Here we focus on the case where the input graph is itself somewhat sparse.
Without essential loss of generality, we assume the edge probability is exactly $p^* = 7\log n/n$ (as otherwise we can use the idea of~\Cref{obs:ER_spanner_trivial} to subsample as a first step).

We first describe the LCA as a $4$-round distributed algorithm, then use the approach of \cite{PR07} to show that we can implement it as an LCA with per-query work $O(\Delta \polylog n)$ (where we again use pre-sparsification to lower the probe complexity). For the formal proof, see~\Cref{sec:ER-SSS}.

First, we assume that all vertices have distinct indices drawn from some universe. Let $\Gamma(v)$ be the neighborhood of $v$ in $G$, and let $\sm(v)$ be the smallest index vertex in $\Gamma(v)$. First, if $\sm(v)<v$, we keep the edge $(v,\sm(v))$ in $H$, and broadcast to all other neighbors that we made this choice. Otherwise, if $v<\sm(v)$, we call $v$ a \textbf{candidate leader}. If $v$ is a candidate leader and receives at least one broadcast that it is \textit{not} being selected (which occurs if and only if $v$ does not have the least index in its two-hop neighborhood), we connect $v$ to the neighbor which allows it to reach the smallest $2$-hop neighbor.

After this connectivity rule, which has a simple two-round distributed algorithm, we call $v$ a \textbf{leader} if it has not added any out edges. This occurs if and only if it has the smallest index in its two-hop neighborhood.
Next, each leader retains an edge to its \textit{highest} index neighbor, which we call its \textbf{administrator}. Finally, each administrator keeps its entire neighborhood.

\textbf{Connectivity.} We define a set of events $\cE$ that partition the space of possible graphs, and denote a subset of events $\cE_G\subset \cE$ as \textbf{good}. We first show that a random graph lies in a good event with high probability. Next, we show that for \textit{every} good event $E\in \cE_G$, sampling a random $G$ that satisfies $E$ results in a graph that the algorithm succeeds on (in fact, we prove this with high probability over $G$).
    
Each such event specifies the presence or absence of a subset of edges in the graph. At a high level, these specifications capture the view of the algorithm up to the point that the leader vertices select their administrators. We define good events as those in which all administrators have many bits of entropy remaining in their neighborhoods, which allows us to argue they maintain connectivity with high probability. 

\textbf{Subgraph Size.} It is easy to see that each edge keeps at most one edge to its lowest index neighbor, and each leader candidate that is not a leader keeps at most one edge, so it suffices to show that the number of edges added in the final phase (when administrator vertices add their entire edge set) is sublinear in $n$. To do this, we show that the number of administrators is $O(n/\log^2 n)$, which itself follows from the fact that each leader has minimal rank in its $2$-hop neighborhood. Then as the degree of the graph is $O(\log n)$, we obtain a bound of $o(n)$ edges added in the final phase.

\textbf{Local Implementation.} One can see that the algorithm constitutes a $4$-round distributed algorithm, and hence can be implemented in per-query work $O(\Delta^4)$ via the reduction of Parnas and Ron~\cite{PR07}. However, we note that we can first subsample the graph $G$ to have edge probability $p^*=7\log n$ (which we do in a global fashion using the random tape of the LCA). By~\Cref{obs:ER_spanner_trivial}, this produces whp a connected subgraph of $G$ that is itself distributed $\Gnps$. Subsequently, in each distributed round we explore only the neighbors of $v$ that are retained in the subsampled graph, resulting in total work $O(\Delta \log^4 n)$.

\subsection{Spanners for Preferential and Uniform Attachment}\label{sec:PA-overview}
We highlight the main ideas behind the proofs of \Cref{thm:PA_Spanner} and \Cref{thm:UA_Spanner}. We are interested in the case where the degree parameter $\mu$ is sufficiently high (and thus the total number of edges $n \cdot \mu$ is high and constructing spanners is compelling). When $\mu$ is large, both preferential and uniform attachment graphs are well-structured, in the sense that higher degree vertices typically have an earlier arrival time. At a high level, we can leverage this structure by having each vertex keep an edge to its highest-degree neighbor, thus generally having the spanner keep paths from vertices to the earliest-added vertices in the process.

For preferential attachment graphs (\Cref{thm:PA_Spanner}), we use the following algorithm to build low-stretch spanning trees. Because preferential attachment graphs are multigraphs, the algorithm's input specifies the two nodes adjacent to the edge as well as the lexicographic indexing of the edge. (See \Cref{remark:multigraph}.)
\\\\
\textbf{Algorithm:} On query $(u,v, i)$: \\
Check if $v$ is the highest-degree neighbor of $u$ or vice versa. If so, keep the edge if it is the lexicographically first edge between $u$ and $v$ (i.e. $i = 1$). Otherwise, discard the edge.
\\\\
The proof is a direct consequence of a structural result about preferential attachment graphs with $\mu(n)\ge c_\mu \log n$: With high probability, every vertex $v$ that is not the highest-degree vertex is either a neighbor of the highest-degree vertex, or $v$ has a neighbor $u$ such that $d_u > 2\cdot d_v$.

To show this structural result, we prove that there is a global constant $c$ such that with high probability, every $v_{i}$ with arrival time $i\ge c$ has a neighbor with degree at least $2d_{v_i}$, and every $v_{i},v_j$ for arrival times $i,j\le c$ are connected.
The second item is a simple consequence of our choice of $\mu$, and the first follows from a tail bound for the degrees of preferential attachment graphs established by~\cite{DKR18}.
Given this result, connectivity is direct, and a simple potential argument (that the degree cannot increase by a factor of $2$ more than $\log(m)$ times, where $m$ is the total degree of the graph) establishes a stretch bound of $O(\log(d_{\max}))=O(\log n)$. Furthermore, we show that the worst-case runtime is $O(\mu \sqrt{n})$ and the average-case runtime over all possible queries is $\mu \cdot \polylog(n)$. To see the size bound, note that every vertex that is not of globally highest degree adds exactly one edge to $H$.
For a formal proof, see \Cref{app:PA}.

Moving to the setting of uniform attachment graphs (\Cref{thm:UA_Spanner}), we utilize a similar algorithm that keeps an edge $(u, v)$ if $u$ is the highest-degree neighbor of $u$ or vice-versa. We additionally keep an edge between vertices that have a degree above some threshold (to ensure that edges are kept between the earliest-added vertices, whose degrees will all be similar). The similarity between the algorithms for preferential and uniform attachment demonstrates the robustness of our approach to different settings of randomly growing graph processes. We present the algorithm in \Cref{sec:UA_spanners}. 

We prove that every vertex $v_i$ is connected to at least one vertex added sufficiently earlier in the process, and such a vertex can be identified because it will have a higher degree than neighbors added later. The LCA keeps an edge to such a vertex. More formally, we define subsets of vertices $C_1 \supset C_2 \supset \dots \supset C_M$ for some $M$ that is $O(\log n)$; each $C_m$ is defined as the first $|C_m|$ vertices added to the process. We show that, with high probability, all vertices in $[n] \setminus C_1$ are connected to a vertex in $C_1$, and similarly all vertices whose time of arrival $i$ satisfies $|C_{m+1}| < i < e^2 \cdot |C_m|$ are connected to a vertex in $C_{m + 1}$, for all $m \in \{1, 2, \dots, M\}$. The $e^2$ arises because degrees are not precise indicators of times of arrival (see \Cref{sec:ua-structural}). 

Finally, \Cref{thm:UA_Spanner} is proven by arguing that, when 
each vertex keeps an edge to its highest-degree neighbor, the LCA keeps paths of length $O(\log n)$ from each vertex to the vertices added in the first $t$ steps, for a small constant $t$. All adjacencies between the vertices added in the first  $t$ steps are preserved, resulting in a low-stretch sparse spanner. For the formal proofs, see \Cref{sec:UA_spanners}.

\subsection{Joint Sampling of \ER Graphs and Maximal Independent Sets}
We describe the sampling algorithm in a global fashion, and then describe how to implement it via an LCA. We gradually grow the MIS $M$, by instantiating $M=(1)$ where $1$ is the first vertex, and sequentially determining the smallest vertex that is not connected to $M$, and add it to $M$, continuing until we exhaust the vertex set. For a fixed $M=(v_1,\ldots,v_t)$, each subsequent vertex $u>v_t$ is not connected to any element of $M$ in $G(n,p)$ with probability $(1-p)^{|M|}$. For a fixed $u$, we can determine if $(v_i,u)$ is in $G$ for every $i$ by simply sampling the edge, and thus determine if $u$ should be added to $M$. However, this procedure would take linear time to determine the MIS. Instead, we use local sampling of the Geometric distribution to find the next element of $M$. Once we sample $r \sim \Geom(1-p)$, we let the next element of $M$ be $u=v_t+r$. For all $u' \in (v,u)$, this virtually conditions on the event that at least one of $(u,v_i)$ is present in the graph for $i\le t$. 

We can determine the entire $M$ in this fashion in time $\polylog(n/p)$. Then we can answer queries as follows. On an $\MIS(a)$ query, we recompute $M$ (using the same random bits) and answer whether $a\in M$. To answer edge queries, we must be careful to not contradict the queries made to the MIS. To achieve this, on receiving the query $\edge((a,b))$ we first re-determine $M$ (using the same random bits as before), and then answer the query as follows:
\begin{itemize}
	\item If $a,b\in M$, we say the edge is not present
	\item If $a,b\notin M$, we use independent random bits to sample if the edge is present
	\item If $a\in M, b\notin M$, we work as follows. First, let $v_i<b<v_{i+1}$ be the elements of $M$ that bracket $b$. Note that by the setup of the sampling procedure, we have conditioned on the event that there is at least one edge $(v_j,b)\in G$ for $j\le i$. All other edges to centers have not been determined by the sampling process, so if $a=v_j$ for $j>i$, we use independent random bits to sample if the edge is present. Otherwise, we must determine the set of edges
    \[
        (v_1,b),\ldots,(v_i,b)
    \]
    subject to the constraint that at least one such edge is present.
    To do this, we sample all edges in this set independently at random, and reject and retry if no edges are added. This will determine the edge set after $O(\log n)$ retries with high probability. Once we do this, 
	we can answer the query on $(a,b)$.
\end{itemize}
\section{Preliminaries}\label{sec:prelims}
We first define our access model and define some required lemmas.
\paragraph{Access Model.}
We assume that an LCA has access to a graph $G=([n],E)$ via the following probes:
\begin{itemize}
    \item \texttt{Exists}$(u, v)$ returns true/false based on whether the edge $(u, v)\in E$
    \item \texttt{Deg}$(v)$ returns the degree of vertex $v$ in $G$
    \item \nbr$(v, i)$ returns the $i^{th}$ neighbor of $v$ from the adjacency list
    if $i\le$ \texttt{Degree}$(v)$, and $\bot$ otherwise
\end{itemize}

\begin{remark}[Remark on multigraphs]\label{remark:multigraph}
    Certain random models we consider will produce multigraphs with high probability. Without loss of generality, we assume that if $\nbr(v,i)=\nbr(v,j)$ for $i<j$ (i.e. there is a multi-edge), the second query returns $\perp$ (alternatively, our algorithm only retains the lexicographically first edge).
\end{remark}

\paragraph{Graph Models}
We formally define the \ER graph model. Note that in all proofs, $\Gamma(v)=\Gamma_G(v)$ refers to the neighbors of $v$ in the original graph $G$ that the LCA has probe access to. We will always denote the original graph as $G$, and subgraphs or spanners as $H$.

\begin{definition}[\ER graphs]\label{def:ER}
    For a function $p=p(n)$, we say $G\la \Gnp$ is an \textbf{\ER random graph} if it is constructed as follows. For every pair of vertices $u,v\in [n]$ with $u\neq v$, we add the edge $(u,v)$ with probability $p$, independently.
\end{definition}

\paragraph{Sampling Algorithms.}
We recall an efficient algorithm for sampling from the geometric distribution with parameter $\lambda$, such as the one used in \cite{BRY20}.
\begin{lemma}\label{lem:GeomSamp}
    There is a randomized algorithm that, given $n$ and $\lambda>0$, samples from $\Geom(\lambda)$ in $\polylog(n/\lambda)$ time with high probability.
\end{lemma}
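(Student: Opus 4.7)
The plan is to implement inverse transform sampling using finite-precision arithmetic. The starting point is the classical identity that if $U$ is uniform on $(0,1)$, then $X := \lceil \log(U)/\log(1-\lambda) \rceil$ is distributed as $\Geom(\lambda)$. So the algorithm will draw a dyadic rational approximation $\widetilde U$ to $U$, compute the two logarithms to matching precision, divide, and take the ceiling. The key quantities to control are (i) the representation size of the output and (ii) the precision needed to commute the ceiling with the rounding.

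First I would bound the output range. Since $\Pr_{X\sim\Geom(\lambda)}[X \ge k] \le (1-\lambda)^{k-1} \le e^{-\lambda(k-1)}$, setting $k = O(\log(n)/\lambda)$ makes this tail at most $n^{-\omega(1)}$. Thus with probability at least $1 - 1/\poly(n)$ the true sample fits in $O(\log(n/\lambda))$ bits, which justifies truncating the procedure (and outputting e.g.\ the tail cutoff in the exponentially rare overflow case, which only costs $1/\poly(n)$ in TV distance). Next, I would draw $B = C\log(n/\lambda)$ independent random bits to form $\widetilde U$, and compute $\log(\widetilde U)$ and $\log(1-\lambda)$ to precision $2^{-B}$ by truncated Taylor series (or Newton iteration on $\exp$); both tasks reduce to $\polylog(B)$ arithmetic operations on $O(B)$-bit numbers, so the total running time is $\polylog(n/\lambda)$. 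If the initial $B$ bits are unlucky and leave us too close to an integer in the quotient, the algorithm can lazily request additional random bits and redo the computation, and I would argue this rarely happens.

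The main obstacle is the precision analysis: I need to show that the output $\widetilde X$ of the truncated computation equals the \emph{true} ceiling $X$ except with probability $1/\poly(n)$. Equivalently, I need that the true value $Y := \log(U)/\log(1-\lambda)$ lands within $2^{-B}$ of an integer only with tiny probability. For this I would bound the density of $Y$: a direct computation gives $f_Y(y) = |\log(1-\lambda)|\,(1-\lambda)^y \le \lambda\,e^{-\lambda y}$ for $y > 0$. Summing the mass in a $2^{-B}$-neighborhood of each integer $k$ in the high-probability range $[0, O(\log(n)/\lambda)]$ gives a total bad-event probability of
\[
\sum_{k=1}^{O(\log(n)/\lambda)} 2^{-B}\cdot \lambda\, e^{-\lambda k} \;=\; O\!\left(2^{-B}\right),
\]
which is $1/\poly(n)$ once $C$ is chosen large enough. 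Combining the tail truncation from Step 1 with this precision bound shows the output distribution is within $1/\poly(n)$ of $\Geom(\lambda)$ in total variation, while the running time is $\polylog(n/\lambda)$ with high probability, completing the lemma.
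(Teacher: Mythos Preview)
The paper does not actually prove this lemma: it is stated in the preliminaries as a recalled fact, with a pointer to \cite{BRY20}, and no argument is given. So there is no in-paper proof to compare against, and your inverse-CDF approach is exactly the standard way one establishes such a statement; in that sense you are supplying strictly more than the paper does.

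Your outline is sound, but one inequality is stated backwards. You claim $f_Y(y)=|\log(1-\lambda)|\,(1-\lambda)^y \le \lambda\,e^{-\lambda y}$, yet $|\log(1-\lambda)|\ge \lambda$ for all $\lambda\in(0,1)$, so already at $y=0$ the bound fails. The conclusion you want still holds with a clean direct computation: summing the exact interval masses,
\[
\sum_{k\ge 1}\Pr\bigl[Y\in(k-2^{-B},\,k+2^{-B})\bigr]
=\Bigl((1-\lambda)^{-2^{-B}}-(1-\lambda)^{2^{-B}}\Bigr)\cdot\frac{1-\lambda}{\lambda}
= O\!\left(2^{-B}\cdot\frac{(1-\lambda)\,|\log(1-\lambda)|}{\lambda}\right)
= O\!\left(2^{-B}\right),
\]
since $(1-\lambda)|\log(1-\lambda)|/\lambda$ is uniformly bounded on $(0,1)$. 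With that fix (and the truncation you already described for the $1/\poly(n)$ tail), the argument goes through.
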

\section{Spanners for \ER Graphs}\label{sec:ER_spanners}
In this section,  
we present our results on spanners for \ER graphs.
We show that we can achieve a provably superior size-stretch than what would be obtained by naive subsampling.

\ERspanner*
We first note that we require a bound on the diameter of random graphs:
\begin{theorem}[Theorem 7.1~\cite{FK15}]\label{thm:ER_diam}
    There is a constant $c$ such that for every $d\in \N$ and $p=p(n)$, if 
    \[
    (pn)^d \ge n\log(n^2/c)
    \]
    then the diameter of $G\la \Gnp$ is at most $d+1$ with high probability.
\end{theorem}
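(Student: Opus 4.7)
The plan is to prove the diameter bound via a standard BFS exploration argument on $G\la \Gnp$, combined with a union bound over the starting vertex. For a fixed vertex $u$, let $L_i(u)$ denote the vertices at distance exactly $i$ from $u$ and $B_i(u) = \bigcup_{j\le i} L_j(u)$. Since the edges from $L_{i-1}(u)$ to $V \setminus B_{i-1}(u)$ are only revealed when BFS advances to level $i$, the variable $|L_i(u)|$, conditioned on the previous levels, is distributed as $\mathrm{Bin}\!\left(n - |B_{i-1}(u)|,\, 1 - (1-p)^{|L_{i-1}(u)|}\right)$, with conditionally independent trials across different candidate vertices. The goal is to show $B_{d+1}(u) = V$ with high probability for every $u$, and then union bound.

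I would split the analysis into two regimes separated by the threshold $\tau := (1/p)\log(n^2/c)$. In the \emph{growth regime}, while $|L_{i-1}(u)| < \tau$ and $|B_{i-1}(u)| \le n/2$, the per-vertex inclusion probability $1 - (1-p)^{|L_{i-1}(u)|}$ is $\approx p\,|L_{i-1}(u)|$, so $\E|L_i(u)| \approx np\,|L_{i-1}(u)|$. A Chernoff bound yields $|L_i(u)| \ge (np/2)\,|L_{i-1}(u)|$ with probability $1 - \exp(-\Omega(np\,|L_{i-1}(u)|))$ per step. Iterating, $|L_i(u)| \gtrsim (np/2)^i$ until the threshold $\tau$ is reached; the hypothesis $(pn)^d \ge n\log(n^2/c)$ is calibrated exactly so that this happens by some level $i^\star \le d$, and a union bound over the at most $d$ growth steps keeps the failure probability polynomially small.

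In the \emph{saturation regime}, once $|L_{i^\star}(u)| \ge \tau$, any remaining vertex $v \notin B_{i^\star}(u)$ fails to join $L_{i^\star+1}(u)$ with probability at most $(1-p)^{\tau} \le e^{-p\tau} = c/n^2$. A union bound over the at most $n$ such $v$ shows $B_{i^\star+1}(u) = V$ with probability $1 - c/n$, and a further union bound over the $n$ choices of $u$ concludes $\diam(G) \le d+1$ with high probability, with the constant $c$ chosen small enough that both union bounds succeed.

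The main technical obstacle is carrying Chernoff concentration through the very early BFS rounds, where $|L_{i-1}(u)|$ is so small that $np\,|L_{i-1}(u)|$ may be only $O(1)$ and the Binomial is not well-concentrated; in particular, the first step has $|L_0|=1$ and $|L_1| \sim \mathrm{Bin}(n-1,p)$, which concentrates only when $np \gg 1$. I would handle this by absorbing the constant-factor deviations of the first few steps into the constant $c$: the hypothesis provides a slack factor of $\log(n^2/c)$ above the bare equality $(pn)^d = n$, which is exactly enough budget to tolerate $O(1)$-factor multiplicative losses across all $\le d$ growth steps and still reach the saturation threshold $\tau$ by depth $d$, after which the one extra BFS level gives the claimed diameter $d+1$.
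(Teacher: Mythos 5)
This statement is not proved in the paper at all: it is imported verbatim from the cited textbook \cite{FK15} (Theorem 7.1), so there is no internal argument to compare against, and your BFS/neighborhood-growth strategy is indeed the standard way one would prove it from scratch. The skeleton is fine: for fixed $d$ the hypothesis forces $np \ge (n\log(n^2/c))^{1/d} = n^{\Omega(1)}$, so every Chernoff step in the growth phase (including the very first one, $|L_1|\sim \mathrm{Bin}(n-1,p)$) fails with probability $\exp(-\Omega(n^{1/d}))$, and the issue you flag about poorly concentrated early levels is a non-issue in the regime where the theorem has content. (Read the statement with $d$ fixed, as in \cite{FK15} and as the paper uses it with $d=\lceil 2/\delta\rceil+1$; if $d$ were allowed to grow with $n$ the hypothesis can hold with $np=O(1)$, where the graph is disconnected and the conclusion fails.)

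The genuine gap is in the probability accounting of the saturation step, and in where you locate the slack. With $\tau=\log(n^2/c)/p$ the per-vertex miss probability is $(1-p)^{\tau}\le c/n^2$; the union bound over the $\le n$ uncovered vertices gives $c/n$ per source $u$, and the union bound over the $n$ sources gives the \emph{constant} $c$ --- not $o(1)$, let alone a $1-1/\mathrm{poly}(n)$ guarantee. Choosing $c$ ``small enough'' cannot fix this, since $c$ is a fixed constant; the calibration $e^{-p\tau}=c/n^2$ is exactly the borderline at which this event has constant probability (which is precisely why the constant $c$ appears in the source's statement). Relatedly, the claim that the factor $\log(n^2/c)$ is spare budget for absorbing $O(1)$-factor losses in the growth phase is wrong: the hypothesis is equivalent to $(np)^{d-1}\ge\tau$, so that log factor is consumed, with no room to spare, by the saturation step itself. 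The real slack is the extra allowed level ($d+1$ rather than $d$) combined with $np=n^{\Omega(1)}$. The repair is easy: target a larger threshold, say $3\log n/p$, which is still reachable by level $d$ because the polynomial size of $np$ absorbs both the constant-factor growth losses and the extra factor $3/2$; then the per-vertex miss probability is $n^{-3}$ and both union bounds give $O(1/n)$. (Alternatively, argue per pair $(u,v)$ by growing spheres of depths summing to $d$ from both endpoints, so the no-edge probability is $\exp(-(1-o(1))p(np)^d)\le \exp(-\omega(\log n))$.) With such a patch your argument becomes a correct self-contained proof; as written it only yields $\Pr[\diam(G)\le d+1]\ge 1-c-o(1)$.
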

At a high level, our proof designates the first $T$ vertices in the graph as centers, for appropriate chosen $T$. We keep all edges between centers, which results in a sublinear number of edges, and for each non-center vertex keep the lowest ranked edge into the center cluster.

\begin{proof}[Proof of Theorem ~\ref{thm:ER_spanner}]
    We give a global description of the process and then describe the (simple) local implementation of the process. Let
    \[ T = n^{1-\delta/2-\delta^2/8}.
    \]
    Recall that we are given probe access to $G\la \Gnp$.
    For every vertex with $\ID(v)\le T$, let $v$ be a center vertex. Denote the set of center vertices as $C\subseteq V$. 
    When queried on an edge $(u,v)\in G$,
    we let our connectivity rule be as follows. 
    We keep all edges in $G$ internal to the center, 
    and for each non-center vertex $v$ keep only the edge in $G$ that connects $v$ to the lowest-ranked element of the center. 
    More formally:
    \[
    (u,v)\in H\iff \{u\in C,v\in C\} \textrm{ OR } \{u\not\in C, v\in C, v=\min\{\Gamma(u)\cap C\}\}
    \]
    We now prove that the edge counts, diameter, and query times are as claimed. For both, observe that the center graph $\cC=H_{C\times C}$ is itself distributed as $G(|T|,p)$. 

    \paragraph{Sparsity.} Observe that $|E(H)|\le n+|E(\cC)|$ as each non-center vertex contains at most one edge. Next, note that for every pair $u,v\in C$ we have that the event that $(u,v)\in G$ occurs with probability $p$ and is independent. We have that the expected edge count is
    \[
    \E[|E(\cC)|] = \binom{|C|}{2} p \le n^{2-\delta-\delta^2/4}n^{\delta-1}=o(n).
    \]
    Moreover, we can apply a Chernoff bound and conclude that the edge count in $\cC$ is at most $o(n)$ with overwhelming probability.

    \paragraph{Diameter.} We first claim that $\diam(H)\le 2+\diam(\cC)$. For every vertex $v\notin C$, we claim that $|\Gamma(v)\cap C|\ge 1$ with overwhelming probability, and hence $v$ will be connected to a center vertex and so the above bound holds by considering paths through the center. We have
    \[
    \E[|\Gamma(v)\cap C|] = |C|p=n^{\delta/2-\delta^2/8}
    \]
    and again applying the Chernoff bound, we have that this set is of nonzero size with probability at least $1-n^{-5}$.
    
    Then by~\Cref{thm:ER_diam} with $d=\lceil 2/\delta\rceil+1$, $p=p$, $n=|C|$, and verifying that 
    \[
    (|C|p)^d = (n^{\delta/2-\delta^2/8})^{\lceil 2/\delta \rceil +1} > n > |C|\log(|C|^2/c)
    \]
    we obtain that the center graph has diameter at most $2/\delta+3$, so we are done.

    \paragraph{Local Implementation.} For every query $(u,v)$, we can determine whether $u,v$ are in the center by querying their IDs. If $u\in C$ and $v\in C$, we are done, and likewise if $u\notin C,v\notin C$. Finally, suppose $v\in C$ and $u\notin C$. 
        
    If the adjacency list that we have probe access to is sorted, we probe \nbr$(u,1)$ to obtain the first neighbor of $u$ (which is the edge we keep to the center), and if this neighbor is $v$ we return $(u,v)\in H$, and otherwise return $(u,v)\notin H$. If the adjacency list is not sorted, we find if $(u,v)$ is the least ranked edge from $u$ to $C$ in one of two ways. If $n^\delta \le n^{1-\delta}$, we enumerate the neighborhood of $u$ using \nbr$(u,i)$ queries and use this information to decide. Otherwise, we query \[
    \text{\texttt{Exists}}(u,1),\ldots,\text{\texttt{Exists}}(u,k)\]
    until we find an edge, which occurs after $O(\log(n)/p)$ probes with high probability.
\end{proof}

\addtocontents{toc}{\protect\setcounter{tocdepth}{1}}

\section{Sparse Connected Subgraphs for \ER}\label{sec:ER-SSS}

We first give a distributed algorithm when $p$ is exactly equal to $p^*=7\log(n)/n$, and then extend this into a full LCA for larger $p$ (\Cref{thm:ER_SSS}).
\begin{restatable}{theorem}{ER_SSS_pstar}
\label{thm:ER_SSS_pstar}
    There is a 4-round distributed algorithm providing access to a subgraph $H\subseteq G\la G(n,p^*)$ such that with high probability, $H$ is connected and has at most $(1 + c/\log(n))n$ edges.
\end{restatable}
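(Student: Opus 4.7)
The plan is to formalize the sketched rule as an explicit $4$-round distributed algorithm, then analyze its edge count and connectivity separately. The distributed description and edge count are short; the connectivity argument is the main technical difficulty and I would handle it via the event-partition strategy indicated in the proof overview.

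\textbf{Distributed implementation.} In Round $1$, every vertex $v$ learns $\Gamma(v)$, computes $\sm(v)=\min\Gamma(v)$, and if $\sm(v)<v$ adds $(v,\sm(v))$ to $H$ and broadcasts its choice to each neighbor. A vertex with $v<\sm(v)$ is a \emph{candidate leader}. In Round $2$, a candidate leader $v$ that received at least one broadcast in Round $1$ adds the edge $(v,u^{*})$ where $u^{*}=\arg\min_{u\in\Gamma(v)}\sm(u)$; the remaining candidate leaders are \emph{leaders} and are exactly the vertices of minimum index in their closed $2$-hop neighborhood. In Round $3$, each leader $\ell$ adds the edge $(\ell,\admin(\ell))$ where $\admin(\ell)$ is its maximum-index neighbor. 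In Round $4$, each administrator adds all of its incident edges to $H$. Each round uses only constant-radius information, so this is a legitimate $4$-round distributed algorithm.

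\textbf{Edge count.} Non-leader vertices contribute at most one edge in Rounds $1$--$2$; each leader contributes one edge in Round $3$; each administrator contributes at most $\Delta$ edges in Round $4$. Since $p=\ps=7\log(n)/n$, Chernoff bounds give $\Delta=O(\log n)$ whp, and that every $2$-hop neighborhood has size $\Omega(\log^{2} n)$ whp. Hence the number of leaders --- each of which is the minimum index in its $2$-hop neighborhood --- is $O(n/\log^{2} n)$ whp by a first-moment plus concentration argument. Summing: $|E(H)|\le n+O(n/\log^{2} n)\cdot O(\log n)=(1+c/\log n)n$.

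\textbf{Connectivity, the main obstacle.} The difficulty is that $\admin(\ell)$ is itself a function of edges incident to $\ell$, so once we condition on the set of administrators the edges incident to an administrator are no longer independent $\ps$-coins and we cannot directly appeal to Chernoff for Round $4$. I would follow the event-partition strategy of the overview: let $\cE$ partition the sample space of $G(n,\ps)$, where each $E\in\cE$ specifies exactly the edges needed to determine the sets of leaders and administrators and every $H$-edge added in Rounds $1$--$3$. Call $E$ \emph{good} if (i) every vertex is within distance $2$ of some leader in the Round $1$--$2$ forest, (ii) the number of administrators is $O(n/\log^{2} n)$, and (iii) each administrator still has $\Omega(\log n)$ neighbors whose membership in $G$ is \emph{unexposed} by $E$. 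Standard Chernoff bounds show $G\la G(n,\ps)$ lies in a good event whp. Conditioned on any good $E$, the unexposed edges out of administrators are independent $\ps$-coins, so each administrator's freshly revealed neighborhood has size $\Omega(\log n)$ whp, and each such neighbor lies in some leader's Round $1$--$2$ tree. The resulting meta-graph on the $N=O(n/\log^{2} n)$ leaders, with an edge $\ell\sim\ell'$ whenever $\admin(\ell)$ exposes a fresh neighbor in the tree of $\ell'$, stochastically dominates an Erd\H{o}s--R\'enyi graph with expected degree $\Omega(\log n)=\omega(\log N)$ and is therefore connected whp. Combined with the fact that every vertex has a path to its leader after Rounds $1$--$2$ and to its administrator after Round $3$, this yields connectivity of $H$ whp. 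The bulk of the technical work is in defining the partition carefully so that property (iii) actually leaves the claimed independent randomness intact; that bookkeeping of exposed versus unexposed edges is where I expect the proof to require the most care.
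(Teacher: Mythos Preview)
Your overall plan matches the paper's: the same four-round algorithm, the same edge-count bound via a leader count of $O(n/\log^{2}n)$, and the same event-partition strategy for connectivity. Two specific steps in your connectivity sketch diverge from the paper and, as written, would not go through.

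First, condition (iii)---``each administrator still has $\Omega(\log n)$ neighbors whose membership in $G$ is unexposed''---is too weak: with $p^{*}=7\log(n)/n$ you need $\Omega(n)$ unexposed \emph{potential} edges per administrator to obtain $\Omega(\log n)$ realized fresh neighbors, and your event $E$ must be defined so as to certify this. The paper secures this by an index trick rather than abstract bookkeeping. A baseline condition is declared \emph{favorable} when every vertex of index $\ge n/3$ is a non-candidate and every vertex of index $<n/3$ has a neighbor of index $\ge 2n/3$; this forces all leaders into $[1,n/3)$ and all administrators into $[2n/3,n]$. Then for any administrator $a$ and any medium vertex $v\in[n/3,2n/3)$, both endpoints are non-candidates with $\sm(\cdot)<n/3$, so neither $f$ nor $g$ fixes the edge $(a,v)$; all $n/3$ such potential edges per administrator are genuinely independent $p^{*}$-coins conditional on the favorable event. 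This is the concrete ``bookkeeping'' you anticipate needing.

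Second, the claim that your meta-graph on leaders ``stochastically dominates an Erd\H{o}s--R\'enyi graph with expected degree $\Omega(\log n)$'' is false as stated and is not what the paper proves. The trees $T(\ell)$ can have wildly unequal sizes---a single tree may contain a constant fraction of the medium vertices while another contains none---so meta-edge probabilities are far from uniform and no coupling to $G(N,q)$ exists. The paper bypasses this with a direct cut argument: for every partition $A_{1}\sqcup A_{2}$ of the administrators, one of $T(A_{1}),T(A_{2})$ contains at least $n/6$ medium vertices, yielding at least $|A_{1}|\cdot n/6$ independent $p^{*}$-coins any one of which defeats the cut, whence $\Pr[\mathrm{Split}(A_{1})]\le(1-p^{*})^{|A_{1}|n/6}$ and a union bound over all $A_{1}\subseteq\cA$ gives connectivity. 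Replacing your Erd\H{o}s--R\'enyi domination step with this cut argument closes the gap.
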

We first use~\Cref{thm:ER_SSS_pstar} to prove the main result. The algorithm of~\Cref{thm:ER_SSS_pstar} is described formally as Algorithm~\ref{alg:ER_SSS_pstar}.

Using this distributed algorithm and the subsampling technique, we obtain the full result.
\ERSSS*
\begin{proof}
    For every edge $(u,v)\in G$, we retain the edge independently with probability $p^*/p$ into the subgraph $G'$, and simulate Algorithm~\ref{alg:ER_SSS_pstar} using the reduction of Parnas and Ron~\cite{PR07} on $G'$. First, note that $G'$ is connected w.h.p. (as $p^*$ is above the connectivity threshold), and moreover is distributed as $\Gnps$ (over the randomness of $G$ and the sparsification step).

    The Parnas-Ron reduction takes a $k$-round distributed algorithm, and simulates an LCA for a query on vertex $v$ by exploring the $k$-neighborhood ov vertex $v$ (which has size $O(\Delta^k)$),
    and simulating the local execution of the distributed algorithm within the $k$-neighborhood.
    Considering the $k$ frontiers of BFS starting at the vertex $v$,
    notice that we can simulate the $i^{th}$ round of the distributed algorithm on every vertex within the $i^{th}$ frontier by performing this process starting at $i=1$ and proceeding for successive rounds $i$.
    
    We use a modified version of the Parnas-Ron reduction to take into account the nature of the pre-sparsification step.
    Instead of exploring the entire $O(\Delta^k)$ sized neighborhood, we can instead prune away the edges which are not included in the pre-sparsification step at each frontier.
    Since the max degree after sparsification is $O(\log n)$ w.h.p., the total number of edges explored in order to find the sparsified neighborhood is $O(\Delta \log^{k-1} n)$.
    This is obtained by performing a BFS traversal and pruning away non-sparsified edges at each frontier.
    Once the $O(\log^k n)$ sized neighborhood is discovered, we can then simulate the distributed algorithm in $O(\polylog (n))$ time.
    
    Thus,~\Cref{thm:ER_SSS_pstar} can be converted to an LCA with the aforementioned runtime.
\end{proof}

We now prove Theorem~\ref{thm:ER_SSS_pstar}, which gives  a 4-round distributed algorithm
for access to a connected subgraph of the input graph $G$. We begin by defining quantities used in the algorithm:
\begin{definition}
    For a vertex $v$, let $\sm(v)$ and $\lr(v)$ be the smallest and largest index vertices connected to $v$ in $G$ (not including $v$ itself).
\end{definition}
Next, we recall types of vertices in the algorithm.
\begin{definition}
    If $v$ is such that $v<\sm(v)$, we call it a \textbf{candidate leader}, and otherwise call it a \textbf{non-candidate}.
    If $v$ is a candidate leader, and additionally its index is smaller than all of its $2$-hop neighbors, then we call $v$ a \textbf{leader}.
    If $v$ is a leader, we define its \textbf{administrator} $\admin(v) \leftarrow \lr(v)$ to be its largest-index neighbor.
    Let $\cL$ be the set of all leaders. 
\end{definition}
Once we have performed the first two rounds of the algorithm, we decompose the graph into a set of directed trees, where the root of each tree is the administrator chosen by the leader vertex.
\begin{definition}
	For $G\la \Gnps$, its \textbf{sparsified subgraph} $H$ is a random graph that is the deterministic output of~Algorithm~\ref{alg:ER_SSS_pstar} on input $G$.
\end{definition}

In order to lower bound the probability of $H$ being connected, we will define sets of events called \textbf{baseline conditions},
and argue that each base graph $G$ satisfies exactly one baseline condition. Then we prove the desired properties of the algorithm, conditioned on the baseline condition obeying certain properties (and we show that these properties are satisfied with high probability).

A baseline condition is a minimal collection of statements of the form $(u,v) \in E$ or $(u,v) \notin E$.
Essentially, these are the edges ``viewed'' by the algorithm in the first three rounds.
A baseline condition uniquely fixes the set of leaders and the representative leader of each vertex.
\begin{definition}
	A \textbf{baseline condition} is an event parameterized as $B(f, g)$
	where 
	\[f: V \ra \{\bot\}\cup V, \qquad  g: \{v | f(v) = \bot \} \ra 2^{V}\] 
	are two functions. 
	The function $f$ maps vertices to their smallest neighbor if they are not a candidate leader, and $\bot$ otherwise, and $g$ maps candidate leaders to the set of their neighbors. We say that a graph $G$ \textbf{satisfies} $B$ if it is compatible with $B$ in the obvious way. 
\end{definition}
Note that $f(v)=i$ implies that for every graph $G$ satisfying $B$, we have $(v,i)\in G$ and $(v,j)\notin G$ for every $j<i$. However, all edges from $v$ that are not fixed by values of $f,g$ on other vertices are present in a random graph $G$ satisfying $B$ independently with probability $p$.

We require several properties of these baseline conditions, which we will now prove. First, they partition the space of all possible graphs.
\begin{restatable}{lemma}{baseline_condition_incompatible}
	\label{lem:baseline_condition_incompatible}
	For every pair of distinct baseline conditions $B_1=B(f_1,g_1)$ and $B_{2}=B(f_2,g_2)$, there is no graph $G$ that satisfies both.
\end{restatable}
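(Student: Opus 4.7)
The plan is to argue that a graph $G$ uniquely determines the pair of functions $(f,g)$ defining its baseline condition, so two distinct baseline conditions cannot both be satisfied by the same $G$. This is essentially a tautological consequence of how the functions $f$ and $g$ are defined, but it needs to be spelled out carefully by unpacking the definition of ``$G$ satisfies $B$.''

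First I would fix an arbitrary graph $G$ and observe that $G$ canonically induces a pair $(f_G, g_G)$: for each vertex $v$, check whether $v < \sm_G(v)$ (equivalently, whether $v$ is a candidate leader in $G$). If not, set $f_G(v) = \sm_G(v)$; if so, set $f_G(v) = \bot$ and $g_G(v) = \Gamma_G(v)$. Both $\sm_G(v)$ and $\Gamma_G(v)$ are intrinsic to $G$, so the pair $(f_G, g_G)$ is well-defined and uniquely determined by $G$.

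Next I would argue that if $G$ satisfies a baseline condition $B(f,g)$, then necessarily $(f,g) = (f_G, g_G)$. This follows directly from the compatibility relation: the statement $f(v) = i$ forces $(v,i)\in E(G)$ and $(v,j)\notin E(G)$ for all $j < i$, which means $i$ must literally be the smallest neighbor of $v$ in $G$, i.e.\ $i = \sm_G(v)$; the statement $f(v) = \bot$ forces all edges $(v,j)$ with $j \le v$ to be absent, so $v$ is a candidate leader in $G$; and for such $v$, the statement $g(v) = S$ forces $\Gamma_G(v) = S$. Hence $f = f_G$ and $g = g_G$.

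The conclusion is immediate: if $G$ satisfies both $B_1 = B(f_1,g_1)$ and $B_2 = B(f_2,g_2)$, then $(f_1,g_1) = (f_G, g_G) = (f_2,g_2)$, so $B_1 = B_2$, contradicting the assumption that $B_1 \neq B_2$. The only subtlety (and the main thing to get right in the write-up) is confirming that the domain of $g$ agrees on both sides, which follows from the fact that $f_1 = f_2$ forces the sets $\{v : f_1(v) = \bot\}$ and $\{v : f_2(v) = \bot\}$ to coincide. There is no real obstacle here; the lemma is essentially unpacking the definitions.
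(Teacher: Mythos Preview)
Your proposal is correct and essentially the same as the paper's argument. The paper phrases it as a direct case analysis on a witness vertex $v$ where $(f_1,g_1)$ and $(f_2,g_2)$ differ, while you phrase it as ``$G$ determines a unique canonical pair $(f_G,g_G)$''; these are two sides of the same trivial observation, and either write-up is fine.
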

\begin{proof}
	There must be some vertex $v$ such that $f_1(v)\neq f_2(v)$ or $f_1(v)=f_2(v)$ and $g_1(v)\neq g_2(v)$, and fix such a vertex. In the first case, without loss of generality $f_1(v)=u\neq \perp$. In this case, every graph satisfying $B_1$ must have $v$'s smallest neighbor $u$, and hence it cannot satisfy $B_2$. In the second case, every graph satisfying $B_1$ must have $\Gamma(v)=g_1(v)\neq g_2(v)$, and hence it cannot satisfy $B_2$.
\end{proof}

Furthermore, all graphs $G$ satisfying $B$ have the same set of leaders and administrators.
\begin{lemma}\label{lem:base_unique}
	Every baseline condition $B$ uniquely specifies the set of candidate leaders, leaders, and administrators for every graph satisfying $B$. 
\end{lemma}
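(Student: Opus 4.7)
The plan is to reconstruct each of the three designations---candidate leader, leader, and administrator---as an explicit function of the pair $(f,g)$ that indexes $B$, which immediately gives the claim. Candidate leaders are read off directly: by the definition of $f$, the set of candidate leaders is $\{v : f(v)=\bot\}$. Likewise, for any leader $v$ we have $\admin(v)=\lr(v)=\max \Gamma(v) = \max g(v)$ since $v$ is a candidate leader and so $g(v)$ is its full neighborhood. Both assignments depend only on $B$.

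The one step with actual content is deciding, for a given candidate leader $v$, whether $v$ has the minimum index in its 2-hop neighborhood. Since $v$ is a candidate leader, its 1-hop neighborhood $g(v)$ lies entirely above $v$, so the 1-hop contributes nothing below $v$. For each $w \in g(v)$ I split on whether $w$ is itself a candidate leader. If $f(w)=\bot$ then every element of $\Gamma(w)=g(w)$ exceeds $w > v$, so this branch of the 2-hop contributes nothing below $v$. If $f(w)\neq \bot$ then the smallest neighbor of $w$ is $f(w)$, and since $v \in \Gamma(w)$ we must have $f(w) \le v$; all other neighbors of $w$ are at least $f(w)$. Combining the cases, a 2-hop neighbor of $v$ strictly below $v$ exists iff $f(w) < v$ for some $w \in g(v)$ with $f(w)\neq \bot$. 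Thus $v$ is a leader iff $f(w) = v$ for every $w \in g(v)$ with $f(w)\neq \bot$, a condition determined by $(f,g)$ alone.

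The only apparent obstacle is that $B$ does not record the complete neighborhood of non-candidate vertices, so a priori the full 2-hop neighborhood of $v$ is not pinned down. The observation that removes this obstacle is that any 2-hop vertex strictly below $v$ reachable through a non-candidate neighbor $w$ must already be witnessed by $f(w)=\sm(w)$, so no information about $\Gamma(w)$ beyond $f(w)$ is needed to test minimality. Putting the three bullet points together yields the lemma.
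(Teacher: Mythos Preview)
Your proof is correct and follows the same idea as the paper's (one-line) argument that $(f,g)$ contains enough information to simulate the first three rounds of the algorithm; you simply make this explicit by deriving closed-form descriptions of the candidate leaders, leaders, and administrators directly in terms of $(f,g)$. As a minor aside, your case $f(w)=\bot$ for $w\in g(v)$ is actually vacuous---since $v<w$ and $v\in\Gamma(w)$, any neighbor $w$ of a candidate leader $v$ is automatically a non-candidate---but this does not affect correctness.
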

\begin{proof}
    One can see that the information provided by $B(f,g)$ suffices to run~Algorithm~\ref{alg:ER_SSS_pstar} for the first three rounds, and these rounds uniquely determine the set of administrators.
\end{proof}
Due to~\Cref{lem:base_unique}, we refer to the leaders and administrators of $B$ as the unique set of leaders and administrators of any graph satisfying $B$.

We now define favorable graphs and favorable baseline events. A favorable graph is one that induces a favorable event, and a favorable event is one such that a random graph satisfying it results in the algorithm succeeding with high probability (over the graph).

\begin{definition}
	\label{def:pis}
	Define $\Pi = \Pi_{1} \bigcap \Pi_{2}$ to be the event that $G\la \Gnps$ simultaneously satisfies all of the following conditions:
	\begin{itemize}
		\item $\Pi_{1}$: All $v \geq \frac{n}{3}$ are not candidate leaders;
		\item $\Pi_{2}$: Each $v < \frac{n}{3}$ has at least one neighbor whose index is greater than or equal to $\frac{2n}{3}$. 
	\end{itemize}
	If $G$ satisfies $\Pi$, it is called a \textbf{\emph{favorable graph}}.
\end{definition}

\begin{definition}
	A baseline condition $B(f, g)$ is called a \textbf{\emph{favorable condition}} if all of the following conditions hold:
	\begin{itemize}
		\item $\Phi_{a}$: All leaders of $B(f, g)$ have index less than $\frac{n}{3}$; 
		\item $\Phi_{b}$: For every $v$ such that $\frac{n}{3} \leq v < \frac{2n}{3}$, $v$ is not a candidate leader (i.e. $f(v) \neq \bot$);
		\item $\Phi_{c}$: All administrators of $B(f, g)$ have index at least $\frac{2n}{3}$.
	\end{itemize}
\end{definition}
As motivation for the definition, we desire the event $B(f,g)$ to leave the administrators with many unfixed edges, such that these administrators connect all subtrees with high probability. 

\begin{restatable}{lemma}{favorable_implication}
	\label{lem:favorable_implication}
	If an instance of $G$ is a favorable graph, then the baseline condition that $G$ satisfies is a favorable condition.
\end{restatable}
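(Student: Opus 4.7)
The plan is to verify each of the three defining conditions $\Phi_a$, $\Phi_b$, $\Phi_c$ of a favorable condition directly from the two conditions $\Pi_1$, $\Pi_2$ in the definition of a favorable graph. No probabilistic argument is needed here, since by~\Cref{lem:base_unique} the baseline condition $B(f,g)$ that $G$ satisfies uniquely determines its set of candidate leaders, leaders, and administrators in terms of the edges of $G$; so the claim reduces to three deterministic implications.

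First, I would establish $\Phi_a$. A leader is by definition a candidate leader (with the additional property of minimizing the index over its $2$-hop neighborhood), so it suffices to show that every candidate leader has index less than $n/3$. But a candidate leader $v$ satisfies $v < \sm(v)$, i.e., $v$ has no neighbor of smaller index. By $\Pi_1$, every $v \geq n/3$ \emph{does} have such a neighbor, so no $v \geq n/3$ can be a candidate leader. Hence every leader has index less than $n/3$, proving $\Phi_a$.

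Second, $\Phi_b$ is essentially immediate from the same observation: for any $v$ with $n/3 \leq v < 2n/3$ we have $v \geq n/3$, so by $\Pi_1$ the vertex $v$ is not a candidate leader, which by definition of $B(f,g)$ means $f(v) \neq \bot$.

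Finally, for $\Phi_c$, recall that each administrator of $B(f,g)$ has the form $\lr(v)$ for some leader $v$. From $\Phi_a$ we have already concluded that every such $v$ satisfies $v < n/3$. Now $\Pi_2$ guarantees that every such $v$ has at least one neighbor of index at least $2n/3$, so $\lr(v) \geq 2n/3$. This proves $\Phi_c$ and completes the argument. The lemma is thus essentially a clean bookkeeping step linking the ``graph-level'' event $\Pi$ to the ``algorithm-level'' event that $B(f,g)$ is favorable; the main content was absorbed earlier into the careful choice of the thresholds $n/3$ and $2n/3$ in~\Cref{def:pis}, so there is no real obstacle in this proof beyond matching the definitions.
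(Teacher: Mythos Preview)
Your proof is correct and follows exactly the same approach as the paper's proof, which simply notes that $\Pi_1$ immediately implies $\Phi_a$ and $\Phi_b$, and $\Pi_2$ immediately implies $\Phi_c$. You have just unpacked these implications in more detail than the paper does.
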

\begin{proof} 
	Notice that $\Pi_{1}$ immediately implies $\Phi_{a}$ and $\Phi_{b}$, and $\Pi_{2}$ immediately implies $\Phi_{c}$.
\end{proof}

Furthermore, a graph is favorable with high probability. 
\begin{restatable}{lemma}{FavGraph}
	\label{lem:favGraph}
	We have that $G\la \Gnps$ is a favorable graph with probability at least $1-n^{-4/3}$.
\end{restatable}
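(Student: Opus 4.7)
The plan is to bound $\Pr[\neg \Pi_1]$ and $\Pr[\neg \Pi_2]$ separately by elementary union bounds, exploiting the fact that $p^* = 7\log n/n$ was chosen precisely so that being disconnected from a set of size $\Theta(n)$ is polynomially rare.

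First I would handle $\Pi_1$. Fix an arbitrary $v \geq n/3$. The event that $v$ is a candidate leader is exactly the event that every vertex with index strictly less than $v$ is not a neighbor of $v$ in $G$. Since there are at least $\lceil n/3 \rceil - 1$ such candidate smaller vertices, and each is present as an edge independently with probability $p^*$, we have
\[
\Pr[v \text{ is a candidate leader}] \leq (1-p^*)^{n/3 - 1} \leq \exp\!\bigl(-p^* (n/3 - 1)\bigr) \leq n^{-7/3 + o(1)}.
\]
A union bound over the (at most $n$) vertices $v \geq n/3$ yields $\Pr[\neg \Pi_1] \leq n^{-4/3 + o(1)}$.

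Next I would handle $\Pi_2$ symmetrically. Fix $v < n/3$. The set $S := \{u : u \geq 2n/3\}$ has size at least $\lceil n/3 \rceil$, and each $u \in S$ is independently a neighbor of $v$ with probability $p^*$. Therefore
\[
\Pr\!\left[\Gamma(v) \cap S = \emptyset\right] \leq (1-p^*)^{n/3} \leq n^{-7/3 + o(1)},
\]
and a union bound over the fewer than $n/3$ vertices with $v < n/3$ gives $\Pr[\neg \Pi_2] \leq n^{-4/3 + o(1)}$. Combining the two bounds via a final union bound yields $\Pr[\neg \Pi] \leq n^{-4/3}$ for sufficiently large $n$ (absorbing the small constant into the $o(1)$), proving the lemma.

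There is essentially no difficult step; the entire proof is a calibration check verifying that the constant $7$ in $p^* = 7\log n / n$ is large enough that a single $(1-p^*)^{n/3}$ factor beats the union bound over $n$ vertices with room to spare. The only point requiring modest care is tracking the lower bound $n/3 - 1$ (versus $n/3$) on the relevant set sizes to ensure the constants align with the claimed $n^{-4/3}$ bound; this is handled by the $o(1)$ in the exponent.
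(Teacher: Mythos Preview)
Your proposal is correct and follows essentially the same route as the paper: a union bound over vertices, using that the probability a fixed vertex misses all of a set of size $\Theta(n)$ is at most $(1-p^*)^{n/3}\le e^{-p^*n/3}=n^{-7/3}$. The paper packages $\Pi_1$ and $\Pi_2$ together via the slightly stronger sufficient condition ``every $v$ has an edge into both $\{1,\dots,n/3\}$ and $\{2n/3,\dots,n\}$,'' while you treat the two events separately, but the computation and the resulting bound are identical.
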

\begin{proof}
	If for every vertex $v$, $v$ retains an edge to $\{1,\ldots,n/3\}$ and $\{2n/3,\ldots,n\}$, we have a favorable graph. This occurs for an arbitrary fixed $v$ with probability
	\[
	2(1-\ps )^{n/3}\le 2e^{-\ps n/3}
	\]
	and hence the total probability of a failure is at most $n2e^{-\ps n/3}\le n^{-4/3}$.
\end{proof}
Finally, for every favorable baseline condition, a random graph satisfying this condition induces a connected graph with high probability.
\begin{restatable}{lemma}{FavCond}
	\label{lem:favCond}
	For every favorable condition $B(f, g)$, 
	\[
	\Pr_{G\la \Gnps}[H \text{ is connected }|G\text{ satisfies $B$}] \ge 1-2n^{-4/3}.
	\]
\end{restatable}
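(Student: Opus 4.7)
My plan is to reduce connectivity of $H$ to a random-graph argument on the administrators, exploiting the independent $\mathrm{Ber}(\ps)$ randomness of the edges left free by $B$. By~\Cref{lem:base_unique}, conditioning on $B(f,g)$ fixes the first three rounds of the algorithm and hence the forest $F \subseteq H$ they produce; write its components as $\{T_a\}_{a \in A}$, where each $T_a$ contains exactly one administrator $a \in A$. Because in round 4 each administrator retains all of $\Gamma_G(a)$, $H$ is connected iff the auxiliary graph $\mathcal{A}$ on vertex set $A$, with $(a, a') \in E(\mathcal{A})$ iff $\Gamma_G(a) \cap T_{a'} \neq \emptyset$ or the symmetric condition holds, is connected.

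Next I would quantify the free randomness. For an administrator $a$ (of index $\ge 2n/3$ by $\Phi_c$) and any $w \in [n/3, 2n/3)$ (a non-candidate leader by $\Phi_b$, so $g$ does not constrain its neighborhood), the edge $(a,w)$ is unfixed by $B$ whenever $w > \sm(a)$. A union bound over the $\le n$ administrators yields $\sm(a) < n/3$ simultaneously for all of them, except with probability at most $n \cdot (1-\ps)^{n/3} \le n^{-4/3}$, which already accounts for one of the two $n^{-4/3}$ terms in the lemma. On this event, each administrator has at least $n/3$ free edge slots inside $[n/3, 2n/3)$, each an independent $\mathrm{Ber}(\ps)$ random variable, so by Chernoff each admin has $\Omega(\log n)$ free neighbors there w.h.p.

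To analyze connectivity of $\mathcal{A}$, I would examine cuts. For any non-trivial partition $A = A_1 \sqcup A_2$, set $V_i = \bigcup_{a \in A_i} T_a$ and let $V_j$ be the larger side, so $|V_j| \ge n/2$ and $|V_j \cap [n/3, 2n/3)| \ge n/6$. The partition witnesses a disconnection of $\mathcal{A}$ only if no admin in $A_{3-j}$ has a free $G$-edge into $V_j$, a product of $|A_{3-j}|$ independent events each of probability at most $(1-\ps)^{n/6} \le n^{-7/6}$.

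The main obstacle is closing the union bound over partitions to match the claimed bound: naively summing over all $2^{|A|}$ partitions does not close, since $|A|$ may be as large as $n / \log^2 n$. I would address this by stratifying by $s = |A_1|$ and exploiting that $V_1$ must be a \emph{union of trees} $\{T_a\}_{a \in A_1}$ rather than an arbitrary subset of $V$, which sharply constrains the count of feasible cuts. The $s=1$ case is the most delicate and likely needs complementary high-probability structural bounds on $|T_a|$ (which should follow from the favorability conditions via a direct analysis of the chain structure), while for larger $s$ the exponential decay $n^{-7s/6}$ readily dominates $\binom{|A|}{s}$. Balancing these regimes should yield the $n^{-4/3}$ failure probability stated in the lemma.
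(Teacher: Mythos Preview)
Your overall framework matches the paper's: reduce connectivity of $H$ to connectivity of an auxiliary graph on the administrators, identify that administrator-to-medium-vertex edges are free under $B$, bound the probability that each cut survives, and union-bound over cuts. However, there are two places where your understanding of what is fixed by $B$ goes astray.

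First, your probabilistic step for $\sm(a) < n/3$ is unnecessary: given a favorable $B$, this holds \emph{deterministically}. Each administrator $a$ is by definition $\lr(\ell)$ for some leader $\ell$, so $\ell \in \Gamma(a)$; since $\Phi_a$ forces $\ell < n/3$, we get $\sm(a) \le \ell < n/3$. Together with $\Phi_b$ (which gives $\sm(w) < w < 2n/3 \le a$ for every medium $w$) and the fact that neither $a$ nor $w$ is a candidate leader (so $g$ constrains neither), the paper's \Cref{lem:cav_indep} concludes directly that every edge $(a,w)$ with $a$ an administrator and $w \in [n/3,2n/3)$ is unfixed by $B$. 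No $n^{-4/3}$ term is spent here; there is nothing random to union-bound.

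Second, and more seriously, your proposed resolution of the union bound cannot work as stated. Once you condition on $B$, the trees $T_a$, their sizes, and $|A|$ itself are all \emph{determined} by $B$ (\Cref{lem:base_unique}); there is no remaining randomness in them, so speaking of ``high-probability structural bounds on $|T_a|$'' that ``follow from the favorability conditions'' is a category error. Any bound on these quantities must be established as a deterministic consequence of $\Phi_a,\Phi_b,\Phi_c$, not as a w.h.p.\ statement. The paper makes no such attempt: it simply takes the per-cut bound $(1-\ps)^{|A_1|\cdot n/6}$ (choosing whichever side of the cut has $\ge n/6$ medium vertices, exactly as you do) and sums $\sum_{A_1\subseteq\cA} e^{-(\ps n/6)|A_1|}$ directly, with no special treatment of the $s=1$ case. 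Your instinct that this stratified sum is the crux is correct, but the fix cannot come from further randomness inside a fixed $B$.
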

We prove~\Cref{lem:favCond} in the following subsection. Using the lemma, we can show that the connectivity guarantee is satisfied.

\begin{restatable}{lemma}{ERSSSconn}
	\label{lem:ER_SSS_conn}
	Let $H$ be the output of~Algorithm~\ref{alg:ER_SSS_pstar} on $G\la \Gnps$. We have that $H$ is connected with probability at least $1 - \frac{1}{n}$.
\end{restatable}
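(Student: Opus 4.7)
The plan is to combine the three previously-established ingredients (\Cref{lem:baseline_condition_incompatible}, \Cref{lem:favGraph}, and \Cref{lem:favCond}) via a law of total probability over baseline conditions. First, I will observe that by \Cref{lem:baseline_condition_incompatible} the collection of baseline conditions $\{B(f,g)\}$ forms a partition of the set of graphs on $n$ vertices: every $G$ satisfies exactly one baseline condition $B(f_G, g_G)$ (existence follows because the functions $f, g$ are forced by $G$, and uniqueness follows from \Cref{lem:baseline_condition_incompatible}). This means we may decompose the event of interest by summing (or integrating) over baseline conditions.

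Next, I will restrict this decomposition to \emph{favorable} baseline conditions. By \Cref{lem:favorable_implication}, if $G$ is a favorable graph then the unique baseline condition it satisfies is favorable; equivalently, the event ``$G$ is favorable'' is contained in the union of the events $\{G \text{ satisfies } B\}$ over favorable $B$. Combining this with \Cref{lem:favGraph}, the total probability mass on favorable baseline conditions is at least $1 - n^{-4/3}$.

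The core calculation will then be:
\begin{align*}
\Pr_{G}[H \text{ is connected}]
&\ge \sum_{B \text{ favorable}} \Pr_G[H \text{ connected} \mid G \text{ satisfies } B] \cdot \Pr_G[G \text{ satisfies } B] \\
&\ge (1 - 2n^{-4/3}) \sum_{B \text{ favorable}} \Pr_G[G \text{ satisfies } B] \\
&\ge (1 - 2n^{-4/3})(1 - n^{-4/3}) \\
&\ge 1 - 3n^{-4/3},
\end{align*}
where the first inequality uses disjointness of the events $\{G \text{ satisfies } B\}$ (\Cref{lem:baseline_condition_incompatible}), the second uses \Cref{lem:favCond} applied uniformly to every favorable $B$, and the third uses \Cref{lem:favGraph} together with \Cref{lem:favorable_implication}. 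Finally, since $3 n^{-4/3} \le n^{-1}$ for $n \ge 27$ (and smaller $n$ can be handled trivially), this yields $\Pr[H \text{ is connected}] \ge 1 - 1/n$, as claimed.

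I expect essentially no obstacle here: the real content lives in \Cref{lem:favCond} (controlling connectivity conditional on a favorable baseline) and \Cref{lem:favGraph} (showing favorability is typical). The only subtlety to be careful about is the measure-theoretic bookkeeping, namely verifying that the baseline conditions genuinely partition the sample space so that the law of total probability applies cleanly; this is exactly what \Cref{lem:baseline_condition_incompatible} buys us, together with the observation that every $G$ determines a baseline condition by reading off $f$ and $g$ from its first-round/second-round view.
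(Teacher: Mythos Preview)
Your proposal is correct and follows essentially the same approach as the paper: decompose over baseline conditions, restrict to favorable ones, and apply \Cref{lem:favCond} and \Cref{lem:favGraph}. The paper's version is terser (and in fact plugs in the slightly sharper bound $2n^{-7/6}$ obtained inside the proof of \Cref{lem:favCond} rather than the $2n^{-4/3}$ stated), but your more explicit bookkeeping about the partition via \Cref{lem:baseline_condition_incompatible} and \Cref{lem:favorable_implication} is if anything cleaner.
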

\begin{proof}
	We have that
	\begin{align*}
		\Pr_{G\la \Gnps}[H \text{ connected}] &= \sum_{B(f,g)}\Pr[H \text{ connected }|G\text{ satisfies $B$}]\Pr[H \text{ satisfies B}]\\
		&\ge \sum_{\text{fav. }B(f,g)}\Pr[H \text{ connected }|G\text{ satisfies $B$}]\Pr[H \text{ satisfies B}]\\
		&\ge (1-2n^{-7/6})\sum_{\text{fav. }B(f,g)}\Pr[H \text{ satisfies B}] && \text{(\Cref{lem:favCond})}\\
		&\ge (1-2n^{-7/6})(1-n^{-4/3}) && \text{(\Cref{lem:favGraph})}\\
        &\ge 1-1/n && \qedhere
	\end{align*}
\end{proof}

\subsection{Proof of~\ts{\Cref{lem:favCond}}{Connectivity Given Favorable}}
We show~\Cref{lem:favCond} by identifying a large set of possible edges that are each present with probability $\ps $ for a random $G$ satisfying an arbitrary favorable condition, and the presence of (a small number of) these edges establishes connectivity between all subtrees.

\newcommand{\med}{\cM}
For the remainder of the subsection fix an arbitrary favorable condition $B(f,g)$, and recall that $\cA$ is the set of leaders. Let $\med=\{n/3,\ldots,2n/3\}\subseteq V$ be the set of vertices with \textbf{medium} index. 
\begin{definition}
	Fixing $B(f,g)$, for every $a\in \cA$ let $T(a)$ be the set of vertices  connected to $a$ through the first three rounds of the algorithm (and note that this set is uniquely determined given $B$). 
\end{definition}

We now define the set of edges that are still unfixed in every favorable condition.

\begin{restatable}{lemma}{conn_independence}
	\label{lem:cav_indep}
	For every administrator $a\in \cA$ and medium vertex $v\in \med$, let $C_{a,v}$ be the event that $(a,v)$ is in $G$. Then $C_{a,v}$ occurs independently with probability $\ps $.
\end{restatable}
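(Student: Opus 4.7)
The plan is to show that for a favorable baseline condition $B(f,g)$, none of the edges $(a,v)$ with $a \in \cA$ and $v \in \med$ is determined by $B$, and then appeal to edge-independence in $G(n,\ps)$ to conclude independence and probability $\ps$.

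First I would enumerate precisely which edges $B(f,g)$ fixes. For every non-candidate $u$ (i.e.\ $f(u) \neq \bot$), $B$ specifies that $(u,f(u)) \in G$ and $(u,u') \notin G$ for all $u' < f(u)$. For every candidate leader $u$, $B$ specifies that $\Gamma(u) = g(u)$ exactly. In particular, an edge $(x,y)$ is determined by $B$ if and only if (i) $x$ or $y$ is a candidate leader, or (ii) $\min\{f(x),f(y)\} \geq \min\{x,y\}$ in the relevant sense, i.e.\ $f(x) \geq y$ or $f(y) \geq x$. All other edges remain free, and by the product structure of $G(n,\ps)$, conditioning on $B$ leaves these free edges mutually independent, each present with probability $\ps$.

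Next I would verify that each pair $(a,v)$ with $a \in \cA$ and $v \in \med$ falls in the free category. Fix such a pair and let $\ell \in \cL$ be the leader for which $a = \admin(\ell) = \lr(\ell)$. The four things to rule out:
\begin{itemize}
    \item $v$ is not a candidate leader: this is exactly $\Phi_b$, since $n/3 \le v < 2n/3$.
    \item $a$ is not a candidate leader: since $(\ell,a) \in G$ and $\ell < a$ (by $\Phi_a$, $\ell < n/3 \le a$), we have $\sm(a) \le \ell < a$, so $a$ fails the candidate-leader condition.
    \item $f(a) < v$: using $\Phi_a$ again, $f(a) = \sm(a) \le \ell < n/3 \le v$.
    \item $f(v) < a$: since $v$ is not a candidate leader, $f(v) = \sm(v) < v \le 2n/3 \le a$ by $\Phi_c$.
\end{itemize}
Hence $B$ places no constraint on the edge $(a,v)$.

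Finally, since all edges of the form $(a,v)$ for $a \in \cA$, $v \in \med$ are free under $B$, the general structural observation above gives that the events $\{C_{a,v}\}$ are mutually independent and each has probability $\ps$, completing the proof. The only nontrivial step is the second one (showing $a$ is not itself a candidate leader and that $f(a)$ is small), which is where the favorability hypotheses $\Phi_a$ and $\Phi_c$ are used; the rest is essentially bookkeeping against the definition of a baseline condition.
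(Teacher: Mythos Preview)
Your proof is correct and follows essentially the same approach as the paper: show that for favorable $B(f,g)$ the edge $(a,v)$ is never among those fixed by $f$ or $g$, then invoke the product structure of $G(n,\ps)$. Your argument that $a$ is not a candidate leader (via its leader $\ell$ being a smaller-index neighbor) is in fact cleaner than the paper's phrasing, and your verification that $f(a)<v$ and $f(v)<a$ makes explicit what the paper summarizes as ``both have some neighbor with index at most $n/3$.''
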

\begin{proof}
	Recall that $B(f,g)$ is a favorable condition, and hence:
	\begin{itemize}
		\item $\Phi_{a}$ holds and so all candidate leaders have index at most $n/3$,
		\item $\Phi_{b}$ holds so all $v \in \med$ are not candidate leaders (and hence do not have their neighborhoods specified by $g$),
		\item $\Phi_{c}$ implies $\min(\cA) \ge \frac{2n}{3}$ (and hence no administrator is a candidate leader).
	\end{itemize} 
	For arbitrary $a,v$ we have that $a\ge 2n/3 > n/3$ and $v>n/3$ and hence neither $a$ nor $v$ is a candidate leader (and hence $g$ does not determine the full neighborhood of either vertex). Moreover, both have some neighbor with index at most $n/3$, and hence the status of edge $(a,v)$ is not determined by the value of $f$. Thus, the edge is not conditioned on by $B$, and hence occurs independently with probability $\ps $.
\end{proof}

Now we show that the graph is connected with high probability. 
\newcommand{\Split}{\mathrm{Split}}
\begin{definition}
	For every partition $A_1,A_2$ of $\cA$ with $|A_1|\le |\cA|/2$, let $\Split(A_1)$ be the event that there are no edges in $G$ from $A_1$ to $T(A_2)\cap \med$  and from $A_2$ to $T(A_1)\cap \med$.
\end{definition}
Observe that $\cap_{A_1}\neg \Split(A_1)$ suffices for the graph to be connected:
\begin{lemma}\label{lem:split_nec}
	For every $G$ satisfying $B$ such that none of $\Split(A_1)$
    occurs, we have that $H$ is connected.
\end{lemma}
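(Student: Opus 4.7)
The plan is to argue by contradiction: assume $H$ is disconnected, and exhibit some partition $(A_{1},A_{2})$ of $\cA$ for which $\Split(A_{1})$ holds, contradicting the hypothesis.

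The first (and main) step is the structural claim that every $v \in V$ belongs to $T(a)$ for some $a \in \cA$, and that each $T(a)$ is connected in $H$. Connectedness is immediate, since every edge added in rounds $1$–$3$ of Algorithm~\ref{alg:ER_SSS_pstar} also lies in $H$. For the covering claim, observe that every vertex adds exactly one ``parent'' edge in the first three rounds: a non-candidate $v$ points to $\sm(v)$, a non-leader candidate $w$ points to its neighbor that reaches the smallest $2$-hop neighbor, and a leader $\ell$ points to $\admin(\ell)$. Following parents from any starting vertex must enter a cycle after finitely many steps, and such a cycle must contain a leader. Indeed, let $v_{\min}$ be the smallest-index vertex on the cycle. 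It cannot be a non-candidate, for then its parent $\sm(v_{\min})$ would be strictly smaller than $v_{\min}$ yet still on the cycle. It also cannot be a non-leader candidate: in that case the next vertex $u$ on the cycle satisfies $u > v_{\min}$ (since $v_{\min}$ is a candidate), is therefore a non-candidate whose smallest neighbor $\sm(u)$ is at most the smallest $2$-hop neighbor of $v_{\min}$, and so $\pi(u) = \sm(u) < v_{\min}$ lies on the cycle, contradicting the minimality of $v_{\min}$. Hence $v_{\min}$ is a leader, and the leader's edge to its administrator places an element of $\cA$ in the weakly connected component of $v$.

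With this in hand, suppose $H$ is disconnected and let $C$ be a proper connected component. Since each $T(a)$ is connected in $H$, it lies entirely inside one component of $H$. Define
\[
A_{1} = \{a \in \cA : T(a) \subseteq C\}, \qquad A_{2} = \cA \setminus A_{1}.
\]
Both sets are nonempty, because $C$ and $V \setminus C$ are each nonempty and are unions of $T(a)$'s by the structural claim; after possibly swapping, assume $|A_{1}| \le |\cA|/2$.

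By hypothesis $\Split(A_{1})$ does not occur, so $G$ contains either an edge $(a_{1},v)$ with $a_{1} \in A_{1}$ and $v \in T(A_{2}) \cap \cM$, or an edge $(a_{2},v)$ with $a_{2} \in A_{2}$ and $v \in T(A_{1}) \cap \cM$. Because every administrator keeps \emph{all} of its $G$-edges in the fourth round, such an edge lies in $H$; but it joins a vertex inside $C$ to one outside $C$, contradicting that $C$ is a connected component of $H$. The main obstacle is establishing the structural claim that $V = \bigcup_{a \in \cA} T(a)$; once that is in place, the rest is a clean cut-edge argument that crucially exploits the administrators' retention of their entire neighborhood in the last round.
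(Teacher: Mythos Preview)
Your argument is correct and follows the same cut/component strategy as the paper: both show that any bipartition of $V$ that does not split a tree $T(a)$ induces a partition $(A_1,A_2)$ of $\cA$, across which $\neg\Split(A_1)$ supplies a round-4 administrator edge crossing the cut in $H$. You are in fact more careful than the paper's one-paragraph proof in explicitly verifying that the trees $T(a)$ cover $V$; that structural step is sound, since the needed inequality $\sm(u) < v_{\min}$ follows directly from $v_{\min}$ being a neighbor of $u$ with $v_{\min}\neq \sm(u)$.
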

\begin{proof}
	Fix an arbitrary cut $V_1,V_2$. If the cut bisects any tree $T(a)$ we are clearly done by the edges added in the first three rounds, so WLOG assume this does not occur and let $A_1=V_1\cap \cA$ and $A_2=V_2\cap \cL$. As $\Split(A_1)$ does not occur, there is some edge from $a\in A_1\subseteq V_1$ to $T(A_2)\subseteq V_2$ or an edge from $a' \in A_2\subseteq V_1$ to $T(A_1)\subseteq A_1$, and such an edge is retained in round 4 of the algorithm, so we preserve connectivity across the cut.
\end{proof}

Finally, we show that no such event occurs with more than negligible probability.
\begin{restatable}{lemma}{split_rare}
	\label{lem:split_rare}
	For every $A_1$ we have
	\[\Pr[\Split(A_1)] \le  (1-\ps )^{\frac{n}{6}|A_1|}.
	\]
\end{restatable}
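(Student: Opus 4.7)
The plan is to express $\Split(A_1)$ as a conjunction of independent non-edge events and then lower bound the number of potential edges involved.

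First, I unpack the event: $\Split(A_1)$ requires that every potential edge from $A_1$ to $T(A_2)\cap \med$ is absent, and every potential edge from $A_2$ to $T(A_1)\cap \med$ is absent. Each such potential edge connects an administrator in $\cA$ to a medium vertex in $\med$, so by Lemma~\ref{lem:cav_indep} it is present independently with probability $\ps$ given that $G$ satisfies the favorable condition $B$. Therefore
\[
\Pr[\Split(A_1)] = (1-\ps)^N, \quad \text{where} \quad N = |A_1|\cdot|T(A_2)\cap \med| + |A_2|\cdot|T(A_1)\cap \med|.
\]

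The crux is a counting lower bound on $N$. I will argue that the family $\{T(a)\}_{a\in \cA}$ partitions $V$: each non-candidate vertex has a round-$1$ edge to its strictly smaller neighbor $\sm(v)$, and following these edges yields a strictly decreasing chain in $V$ which must terminate at a candidate leader; every candidate leader is either itself a leader or routes to a leader in round $2$; and every leader has a round-$3$ edge to its administrator. Hence every vertex lies in exactly one tree $T(a)$. In particular, $T(A_1)\cap \med$ and $T(A_2)\cap \med$ are disjoint with union equal to $\med$, and $|\med|\ge n/3$.

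Combining this with $|A_2|\ge |A_1|$ (which holds since $|A_1|\le |\cA|/2$) gives
\[
N \;\ge\; |A_1|\bigl(|T(A_2)\cap \med| + |T(A_1)\cap \med|\bigr) \;=\; |A_1|\cdot|\med| \;\ge\; |A_1|\cdot \tfrac{n}{3},
\]
so $\Pr[\Split(A_1)] \le (1-\ps)^{n|A_1|/3} \le (1-\ps)^{n|A_1|/6}$, as claimed. The main point requiring care is the partition claim, since the tree structure is determined implicitly by the first three rounds; one must trace these rounds and invoke the favorability of $B$ (so that administrators are well-defined and medium vertices behave as expected) to confirm that every vertex is routed to some administrator. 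Once the partition is in hand, the remaining arithmetic is immediate.
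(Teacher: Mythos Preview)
Your approach is essentially the same as the paper's: both invoke \Cref{lem:cav_indep} to treat the administrator--medium edges as independent, and then lower-bound the number of such edges using that $T(A_1)\cup T(A_2)$ covers $\med$. The paper picks the larger side and counts only $|A_1|\cdot \tfrac{n}{6}$ edges; you count both sides and use $|A_2|\ge |A_1|$ to obtain the tighter exponent $|A_1|\cdot \tfrac{n}{3}$ before weakening, which is a nice refinement.

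One technical point: your claim that $\{T(a)\}_{a\in\cA}$ \emph{partitions} $V$ is stronger than what holds and stronger than what you need. Two distinct administrators $a_1,a_2$ can lie in the same round-$3$ component (e.g.\ $a_1$ is a non-candidate by $\Phi_c$, so it follows its $\sm$-chain to some leader $\ell$, and $\ell$'s administrator need not be $a_1$), in which case $T(a_1)=T(a_2)$ and the disjointness you assert fails. What is true, and what your chain-tracing argument actually establishes, is that every vertex is connected through rounds $1$--$3$ to \emph{some} administrator, i.e.\ $\bigcup_{a\in\cA}T(a)=V$. Covering alone gives $|T(A_1)\cap\med|+|T(A_2)\cap\med|\ge |\med|$, which is all your inequality uses; so the argument is sound once you replace ``partition'' by ``cover.''
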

\begin{proof}
	Either $T(A_1)\cap \med \ge n/6$ or $T(A_2)\cap \med \ge n/6$. WLOG supposing the latter occurs (as otherwise the bound is only stronger), we have that all events $C_{a,v}$ for $a\in A_1, v \in T(A_2)\cap \med$ imply the negation of $\Split(A_1)$, and moreover there are at least $|A_1|\cdot (n/6)$ such events. By~\Cref{lem:cav_indep} each such event occurs independently with probability $\ps $, so the bound is as claimed.
\end{proof}

We then recall the lemma to be proved.
\FavCond*
\begin{proof}
	We have that
	\begin{align*}
		\Pr[G \text{ is disconnected}|G\text{ satisfies $B$}] &\le \sum_{A_1\subseteq \cA} \Pr[\Split(A_1)]&& \text{(\Cref{lem:split_nec})}\\
        &\le \sum_{A_1\subseteq \cA} (1-p)^{\frac{n}{6}|A_1|} && \text{(\Cref{lem:split_rare})}\\
        &\le \sum_{A_1\subseteq \cA}e^{-(\ps n/6)|A_1|} \le 2n^{-7/6}. && \qedhere
	\end{align*}
\end{proof}

\subsection{Bounding Subgraph Size}
\label{sec:proof_of_size}

Now we prove that the size of $H$ is as claimed. 
\begin{restatable}{lemma}{ERSSSsize}
	\label{lem:ER_SSS_size}
	Let $H$ be the output of~Algorithm~\ref{alg:ER_SSS_pstar} on $G\la \Gnps$. We have that $H$ has size $n+o(n)$ with probability $1 - \frac{1}{n}$.
\end{restatable}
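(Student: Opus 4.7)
The plan is to split $E(H)$ by the round of Algorithm~\ref{alg:ER_SSS_pstar} that adds each edge. In rounds $1$--$3$, every vertex contributes at most one edge to $H$: a non-candidate $v$ contributes $(v,\sm(v))$; a candidate leader that fails to become a leader contributes one edge to the neighbor reaching its smallest $2$-hop vertex; and each leader $v$ contributes $(v,\admin(v))$. Hence rounds $1$--$3$ add at most $n$ edges in total. Since every administrator is $\lr(\cdot)$ of a unique leader, the number of administrators is at most $|\cL|$, so the round-$4$ contribution is at most $|\cL|\cdot \Delta(G)$, and it suffices to show that this product is $o(n)$ with probability at least $1 - 1/n$.

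First I would record two standard facts about $G \la G(n,p^*)$, both by a Chernoff bound plus a union bound over vertices: (i)~$\Delta(G) \le 15\log n$ with probability $1 - n^{-c}$ for any constant $c$; and (ii)~every $v$ has $|\Gamma_2(v)| \ge c_0 \log^2 n$, where $\Gamma_2(v)$ denotes the set of vertices within graph-distance $2$ of $v$ in $G$.

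Second, I would bound $\E[|\cL|]$ using the permutation symmetry of the $G(n,p^*)$ distribution. Regarding the model as first sampling an unlabelled random graph and then a uniform random labelling of the $n$ vertices, the conditional probability that $v$ attains the smallest label in $\Gamma_2(v) \cup \{v\}$ given the graph structure is exactly $1/(|\Gamma_2(v)|+1)$. Summing over $v$ and using (ii) yields $\E[|\cL|] = O(n/\log^2 n)$.

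Third, I would upgrade this to a high-probability bound via a vertex-exposure Doob martingale. Resampling all edges incident to a single vertex $u$ can alter the leader indicator of a vertex $w$ only when $w$ is within distance $2$ of $u$; by (ii), this affects at most $O(\log^2 n)$ vertices with high probability. Applying Azuma--McDiarmid with Lipschitz constant $L = O(\log^2 n)$ across the $n$ vertex-exposure steps and deviation $t = n/\log^{3/2} n$ gives $|\cL| \le O(n/\log^2 n) + n/\log^{3/2} n = o(n/\log n)$ with failure probability at most $1/n$. Combined with (i), the round-$4$ contribution is at most $|\cL|\cdot \Delta(G) = o(n/\log n) \cdot O(\log n) = o(n)$, so $|E(H)| \le n + o(n)$ with probability $\ge 1 - O(1/n)$. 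The most delicate point is that the natural Lipschitz constant of $|\cL|$ is itself random; I would dispatch this by first conditioning on the good event from (ii) before invoking Azuma, and absorbing the complementary $n^{-c}$ failure probability into the final union bound.
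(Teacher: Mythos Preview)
Your decomposition matches the paper's exactly: rounds~1--3 contribute at most $n$ edges (one minor inaccuracy---leaders actually add no edge in round~3; the edge $(v,\admin(v))$ appears only when the administrator adds its full neighborhood in round~4---but your bound of $n$ is unaffected), and round~4 contributes at most $|\cL|\cdot\Delta(G)$, so the task reduces to showing $|\cL|\cdot\Delta(G)=o(n)$ with high probability.

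The genuine difference is in bounding $|\cL|$. The paper gives a one-line packing argument: since a leader is minimal in its 2-hop ball, no other vertex of that ball can be a leader, and as every 2-hop ball has size $\Omega(\log^2 n)$ whp, this is asserted to give $|\cL|=O(n/\log^2 n)$ deterministically on the good degree event. You instead compute $\E[|\cL|]=\sum_v\E[1/|N_2[v]|]=O(n/\log^2 n)$ via permutation symmetry, and then upgrade to high probability with a vertex-exposure martingale of Lipschitz constant $O(\log^2 n)$. Your route is longer but arguably more complete: the paper's packing step tacitly treats the 2-hop balls of distinct leaders as disjoint, which would require leaders to be at pairwise distance $>4$, whereas the argument only yields distance $\ge 3$ (so a direct packing gives $O(n/\log n)$, not $O(n/\log^2 n)$). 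Your expectation-plus-concentration approach sidesteps this entirely. Two small fixes to your sketch: the Lipschitz bound needs the \emph{upper} bound $|\Gamma_2(u)|\le 1+\Delta+\Delta^2=O(\log^2 n)$, which follows from~(i), not the lower bound~(ii); and ``condition on the good event, then apply Azuma'' is not quite vanilla Azuma---you want a typical-bounded-differences inequality (or truncate $|\cL|$ first)---but this is standard and your acknowledgment of the issue is appropriate.
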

First, note that every non-leader adds exactly one edge to $H$, and each leader $v$ with administrator $a$ adds exactly $|\Gamma(a)|$ edges. Then the final size bound is at most $\Gamma_{\max}\cdot |\cA|$. As the first is $O(\log n)$ whp, it suffices to show $|\cA|\le |\cL|=O(n/\log^2n)$. Note that a vertex is a leader if and only if is has the least rank among its $2$-hop neighborhood, which follows from simple concentration bounds.

\begin{fact}\label{fct:size}
	With high probability, $|\Gamma(v)|\le 14\log n$ and $|\Gamma(\Gamma(v))|\ge \log^2(n)/4$ for every $v$.
\end{fact}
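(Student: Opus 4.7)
The plan is to establish both bounds via a two-stage Chernoff argument, exploiting the fact that the edges of $G \la G(n,p^*)$ are mutually independent.

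First I would handle the upper bound on $|\Gamma(v)|$. Since $p^* = 7\log n / n$, each $|\Gamma(v)|$ is distributed as $\mathrm{Bin}(n-1, p^*)$ with mean at most $7\log n$. A standard multiplicative Chernoff bound yields $\Pr[|\Gamma(v)| > 14\log n] \le e^{-c\log n}$ for some constant $c > 1$, so a union bound over the $n$ vertices establishes the first inequality with high probability. In the same step I would also invoke the lower-tail Chernoff bound to show $|\Gamma(v)| \ge 7\log n / 2$ for every $v$ whp, which will feed into the next paragraph.

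Next, for the two-hop neighborhood, I would condition on the first-level neighborhood $\Gamma(v) = S$. For each vertex $u \notin S \cup \{v\}$, the event $u \in \Gamma(\Gamma(v))$ is determined by the edges between $u$ and $S$. Crucially, these edge sets are disjoint across different external vertices $u$, so conditional on $S$ the indicators $\mathbf{1}[u \in \Gamma(\Gamma(v))]$ are independent Bernoullis, each with success probability $1 - (1-p^*)^{|S|}$. Combining $|S| \ge 7\log n/2$ with the inequality $1 - (1-p^*)^{|S|} \ge p^* |S| / 2$, which holds because $p^* |S| \le p^* \cdot 14\log n = O(\log^2 n / n)$ is small, gives a per-vertex success probability at least $49\log^2 n / (4n)$. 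Summing over the $n - O(\log n)$ candidate vertices, the conditional expected size of $\Gamma(\Gamma(v))$ is at least roughly $10 \log^2 n$, comfortably above the target $\log^2 n / 4$.

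Finally, one more Chernoff bound on this sum of conditionally independent Bernoullis shows $|\Gamma(\Gamma(v))| \ge \log^2 n / 4$ with probability $1 - n^{-\Omega(1)}$, and a union bound over $v$ completes the proof. There is no substantial obstacle: the only non-routine observation is the conditional independence invoked in the second step, which follows because edges from distinct external vertices to $S$ occupy disjoint locations in the edge set of $G$ and so remain independent after conditioning on $S$.
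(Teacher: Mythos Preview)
Your approach is correct and is exactly what the paper has in mind: the paper states this as a Fact without proof, merely remarking that it ``follows from simple concentration bounds,'' and your two-stage Chernoff argument is the natural way to cash that out.

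One small quantitative slip worth flagging: the intermediate lower bound $|\Gamma(v)|\ge 7\log n/2$ for \emph{every} $v$ does not quite survive the union bound. With $\mu\approx 7\log n$ and $\delta=1/2$, the lower-tail Chernoff bound gives failure probability $e^{-\delta^2\mu/2}=n^{-7/8}$ per vertex, which after multiplying by $n$ does not tend to zero. The fix is trivial: take a smaller constant, say $|\Gamma(v)|\ge \log n$ (so $\delta$ is close to $1$), which yields per-vertex failure probability $n^{-c}$ with $c>1$. This weaker lower bound still gives a per-vertex success probability of order $\log^2 n/n$ in the second step, so the conditional expectation of $|\Gamma(\Gamma(v))|$ remains $\Theta(\log^2 n)$, and the final Chernoff bound (whose failure probability is $e^{-\Theta(\log^2 n)}$) easily absorbs the union bound. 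Everything else in your outline is sound, including the key observation that conditioning on $\Gamma(v)=S$ leaves the edges from external vertices into $S$ mutually independent.
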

Next, we bound the number of leaders
We also have the following bound on the number of index-leaders which arises from
a bound on the size of the 2-hop neighborhood of all vertices.

\begin{restatable}{lemma}{leaderBound}
	\label{lem:leader_bound}
	With high probability, there are at most $O(n/\log^2 n)$ leaders.
\end{restatable}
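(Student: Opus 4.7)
The plan is to first bound $\E[|L|]$ by a relabeling-symmetry argument, and then concentrate via a transposition martingale on the vertex labels.

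\textbf{Expectation bound.} The distribution $G(n, p^*)$ is invariant under vertex relabeling, so a sample may be viewed as an unlabeled graph $G_0$ composed with a uniformly random labeling $\pi \in S_n$. For a fixed $G_0$ and any vertex $v$, the restriction of $\pi$ to the closed $2$-hop neighborhood $N_2[v]$ is a uniformly random injection, so $v$ holds the minimum label with probability exactly $1/|N_2[v]|$. Conditioning on the event of \Cref{fct:size} that $|\Gamma(\Gamma(v))| \ge \log^2(n)/4$ for every $v$,
\[
\E_\pi[\,|L| \mid G_0\,] \;=\; \sum_{v} \frac{1}{|N_2[v]|} \;\le\; \frac{4n}{\log^2 n}.
\]

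\textbf{Concentration.} I apply Maurey's transposition-Lipschitz inequality to the map $\pi \mapsto |L|$. Swapping the labels on two vertices $u, w$ only alters the labels appearing in $N_2[v]$ for $v \in N_2[u] \cup N_2[w]$, so $|L|$ changes by at most $|N_2[u]| + |N_2[w]|$. By the $|\Gamma(v)| \le 14 \log n$ bound from \Cref{fct:size}, $|N_2[v]| \le 1 + |\Gamma(v)| + |\Gamma(v)|^2 = O(\log^2 n)$, so $|L|$ is $c$-Lipschitz under transpositions with $c = O(\log^2 n)$. Maurey's inequality then yields
\[
\Pr_\pi\!\left[\,|L| > 2\,\E_\pi[\,|L| \mid G_0\,]\, \,\big|\, G_0\right] \;\le\; 2\exp\!\left(-\Omega(n/\log^8 n)\right) \;\le\; 1/n
\]
for $n$ sufficiently large. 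Union-bounding with the failure probability of \Cref{fct:size} concludes that $|L| = O(n/\log^2 n)$ with high probability.

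\textbf{Main obstacle.} The concentration is the delicate step. A direct Azuma on the $\binom{n}{2}$ independent edge variables has per-edge Lipschitz constant $O(\log n)$ across $\Theta(n^2)$ martingale steps, producing deviations of order $n\log^{3/2} n$---far larger than the target scale of $n/\log^2 n$. The permutation martingale has only $n$ steps, and this reduction (at the modest cost of a larger per-step Lipschitz constant) is what allows w.h.p.\ concentration at the right scale. A secondary subtlety is that, strictly speaking, $\pi$ is uniform on $S_n / \mathrm{Aut}(G_0)$ rather than on $S_n$, but for sparse random graphs $\mathrm{Aut}(G_0)$ is trivial with high probability and this can be absorbed into the bad event.
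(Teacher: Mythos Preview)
Your argument is correct and takes a genuinely different route from the paper. The paper's proof is a two-line packing argument: since a leader has minimal index in its closed $2$-hop neighborhood, no other vertex in that neighborhood can be a leader, and since every such neighborhood has size $\Omega(\log^2 n)$ by \Cref{fct:size}, there can be at most $O(n/\log^2 n)$ leaders. Read literally, this packing step would need the $2$-hop neighborhoods of distinct leaders to be pairwise disjoint---but that fails whenever two leaders sit at distance exactly $3$ or $4$ (e.g.\ on a path $v$--$a$--$b$--$w$ with $v,w$ small and $a,b$ large, both $v$ and $w$ are leaders yet $b\in N_2[v]\cap N_2[w]$). What \emph{is} forced is disjointness of the closed $1$-hop neighborhoods, which by itself yields only $O(n/\log n)$. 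Your symmetry-plus-concentration approach sidesteps this overlap issue entirely and rigorously delivers the $O(n/\log^2 n)$ bound that the subsequent size calculation in \Cref{lem:ER_SSS_size} genuinely needs. One small simplification: you can avoid the $\mathrm{Aut}(G_0)$ caveat altogether by sampling $G\sim G(n,p^*)$ directly, drawing an independent uniform $\pi\in S_n$, observing that $\pi(G)$ has the same law as $G$, and applying the transposition inequality to $|L(\pi(G))|$ as a function of $\pi$ with $G$ held fixed.
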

\begin{proof}
    By~\Cref{fct:size} we have that with high probability every vertex has at least $\Omega(\log^2 n)$ $2$-hop neighbors. Since a vertex being a leader implies no other member of its $2$-hop neighborhood can be a leader, (as that vertex would not have minimal index in its two-hop neighborhood) we are done.
\end{proof}
Putting the two results together implies the claimed size bound:
\begin{proof}[Proof of~\Cref{lem:ER_SSS_size}]
    It is clear that the first 3 rounds add at most $n-1$ edges, so we have that the edge count is bounded by $n$ plus
    \[
    \sum_{a\in \cA}|\Gamma(a)| \leq O(\log n)\cdot |\cA|\le O(\log n)\cdot |\cL| \le O(n/\log n)=o(n)
    \]
    where the first inequality is~\Cref{fct:size}, the second is that each administrator can be associated with at least one leader, and the third is~\Cref{lem:leader_bound}.
\end{proof}

\subsection{Distributed Erd\H{o}s-R\'{e}nyi Sparsifier}
\label{app:distrib_algo}

\begin{algorithm}[H]
\label{alg:ER_SSS_pstar}
\SetAlgoLined
 Initialize $H = \{ \}$ \\
 Round 1: 
 \For{$v \in V$} {
   $v$ sends its index to all of $v$'s neighbors 
 }
 Round 2: 
    \For{$v \in V$} {
    \eIf{$v<u$ for all $u\in \Gamma(v)$}{
        $v$ nominates itself as a candidate leader

        }{
        $v$ marks itself as a non-candidate \\
        $v$ sends ``you are not a leader'' to all $\Gamma(v)\setminus \sm(v)$ \\
        $v$ adds edge $(v,\sm(v))$ to $H$ \\
        
        }
 }
 Round 3: 
    \For{$v \in V$, $v$ is a candidate leader} {
    \eIf{$v$ received at least one ``you are not a leader'' in Round 2}{
        $v$ adds edge $(v, u)$ to $H$, where $u$ is the smallest neighbor of $v$ that sent $v$ ``you are not a leader''

        }{
            $v$ elects itself as a leader \\
        $v$ sends ``you are my administrator'' to $\lr(v)$
        
        }
 }
 Round 4: 
    \For{$v \in V$, $v$ received ``you are my administrator'' in round 3} {
        $v$ sets itself as an administrator
        
        \For{$u \in \Gamma(v)$} {
        $v$ adds $(u, v)$ to $H$
        }
 } 

 \caption{4-round Distributed Sparsified Connected Subgraph Algorithm}
\end{algorithm}

\section{Spanners on Preferential Attachment Graphs}\label{app:PA}

\subsection{The Model and Related Work}

We formally define the preferential attachment model:
\begin{definition}[Preferential Attachment graph]\label{def:PA}
For a function $\mu=\mu(n)$, we say $G\la \Gpa$ is a random \textbf{Preferential Attachment graph} if it is constructed as follows:
\begin{itemize}
    \item On round $1 \leq i \leq n$, add a vertex $v_{i}$ into the graph.
    \item Then, repeat the following process $\mu$ times:
    \begin{itemize}
        \item Add an edge from $v_{i}$ to a random vertex $v_j$
            (potentially a self loop to $v_{i}$ itself)
        \item The probability of the edge $(v_i, v_j)$ being added is
            $\frac{d_{v_{j}}}{\sum_{j=1}^{i}d_{v_{j}} + 1}$ if $j \neq i$,
            and $\frac{d_{v_{j}} + 1}{\sum_{j=1}^{i}d_{v_{j}} + 1}$ if $j = i$.
    \end{itemize}
\end{itemize}
\end{definition}
For clarity, in this section we let $v_i$ for $i\in [n]$ denote the vertex that was added in round $i$. Note that the initialization of the preferential attachment graph, by the definition above, is one vertex $v_1$ with $\mu$ self loops.

We now describe the connections to and differences with related work. Our result is incomparable to prior work~\cite{PA1,PA2}, which constructs an LCA\footnote{Their results are written as local information algorithms (LIAs), which are sublinear algorithms with the restriction that all queries are adjacent to already explored nodes. Our results for the preferential attachment model likewise obey this restriction. The LIAs constructed in prior work find a path from any node to the root node and therefore can be interpreted as LCAs.} for the \textbf{root-finding problem}, that of identifying a path to the first node. Their algorithm for that problem immediately implies the following:
\begin{theorem}[Implied by \cite{PA2}]
    Let $\nu=\omega(\log n)$ be any function that dominates $\log(n)$. 
    For every preferential attachment process with parameter $\mu$, there is an average-case LCA for $G\la \Gpa$ that w.h.p gives access to a $\nu$-spanner $H\subseteq G$. Moreover:
    \begin{itemize}
        \item on query $(u,v)$ the LCA has time complexity $O((d_u+d_v)\nu)$, which is $O(\mu \sqrt{n})$ in the worst-case and $O(\mu \polylog(n))$ in expectation (over all possible queries),
        \item $H$ contains at most $\nu \cdot n$ edges.\footnote{This bound may not be tight, but their result does not seem to give sparsity smaller than $O(n\log n)$ in any case.}
    \end{itemize} 
\end{theorem}
In particular, note that for every degree parameter $\mu \ge \polylog(n)$, the algorithm obtains a $\tO(\log n)$-spanner with $\tO(n)$ edges. As long as the edge parameter is sufficiently large, we obtain an improved sparsity and query time bound.
\begin{remark}
    Their result is not optimized for the spanner problem, and our algorithm is essentially a simplified version of their approach tailored to this task. In particular, we may assume that the parameter $\mu$ is sufficiently large, as otherwise the graph is already sparse, whereas they solve the root-finding problem even for highly sparse graphs.
\end{remark}

\subsection{Main Structural Lemma}
\label{sec:proof_PAMainLemma}

The proof of \Cref{thm:PA_Spanner} relies on the following structural result about preferential attachment graphs:
\begin{restatable}{lemma}{PAMainLemma}
	\label{lem:PAMainLemma}
	For $\mu(n)\ge c_\mu \log n$ for $c_\mu$ an absolute constant, the following holds. With high probability, every vertex $v$ that is not the highest-degree vertex is either a neighbor of the highest-degree vertex, or $v$ has a neighbor $u$ such that $d_u > 2\cdot d_v$.
\end{restatable}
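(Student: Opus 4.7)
The plan is to fix a vertex $v$ with final degree $d_v$ and show that the \textbf{bad event} $B_v$---that $v$ is not adjacent to $v^*$ and every neighbor of $v$ has final degree at most $2 d_v$---has probability $n^{-\Omega(c_\mu)}$. A union bound over all $n$ vertices then yields the lemma.

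The first step is to invoke the standard power-law tail bound for preferential attachment graphs: with high probability, the number of vertices of final degree at least $d$ is $\Theta(n \mu^2 / d^2)$ for $d$ in the meaningful range $[\mu, O(\mu\sqrt{n})]$; in particular $d^* = \Theta(\mu\sqrt{n})$ and $v^*$ is among the very earliest inserted vertices. Summing against this tail, the total final degree contributed by the \emph{heavy} vertices (those with final degree strictly exceeding $2 d_v$) is $\Omega(n \mu^2 / d_v)$, which is an $\Omega(\mu / d_v)$ fraction of the total degree $2 \mu n$. I will invoke this distributional fact as a black box from the literature and intersect all subsequent events with the high-probability event that it holds.

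The second step is to show that each of $v$'s $d_v$ edges lands on a heavy vertex with probability $\Omega(\mu/d_v)$, essentially independently. This follows from the preferential attachment rule: when an edge incident to $v$ is added (either one of $v$'s $\mu$ initial edges, or an edge from a later vertex that chose $v$), the other endpoint is chosen proportionally to the current degree sequence. Because degrees only grow with time and $v^*$ and the other heavy vertices reach their large final degrees well before time $n$, the \emph{current} total degree of heavy vertices is already $\Omega(\mu/d_v)$ of the current total degree at each such time, after some small initial warm-up period. Granting approximate independence, the probability that none of $v$'s $d_v$ edges connects to a heavy vertex is at most $\bigl(1 - \Omega(\mu/d_v)\bigr)^{d_v} \le \exp(-\Omega(\mu)) \le n^{-\Omega(c_\mu)}$, which (with $c_\mu$ a sufficiently large constant) beats the $n$ vertices in the union bound.

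The principal obstacle will be rigorously handling the dependencies in the PA process: the endpoints of $v$'s edges are not literally independent of each other or of the degree evolution of the heavy vertices. The cleanest route will be to couple PA with its Polya urn / martingale representation, or to condition on the (high-probability) event that the degree sequence has the expected power-law shape and then exploit the fact that, conditional on the sequence of endpoints' times of insertion, each endpoint choice is an independent draw proportional to the degree at that moment. A secondary subtlety is vertices with $d_v$ very close to $\mu$, where the power-law approximation is weakest; here we handle $B_v$ directly by observing that $v$'s $\mu \ge c_\mu \log n$ initial edges, chosen proportionally to degree at time $\tau_v$, all miss the bulk of heavy vertices (which dominate the total degree at time $\tau_v$) only with probability $\exp(-\Omega(\mu))$, giving the same bound.
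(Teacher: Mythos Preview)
Your proposal contains a concrete error and, once that is repaired, turns into a more complicated version of the paper's argument.

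\textbf{The error.} You write that for each of $v$'s $d_v$ incident edges, ``the other endpoint is chosen proportionally to the current degree sequence,'' including edges ``from a later vertex that chose $v$.'' This is false: when a later vertex $w$ places an edge to $v$, the other endpoint from $v$'s point of view is $w$ itself, which is fixed, not drawn proportionally to anything. Only the $\mu$ edges that $v$ itself placed at insertion time have endpoints chosen proportionally to current degree. So the $(1-\Omega(\mu/d_v))^{d_v}$ calculation is unsupported; you really only have $\mu$ usable trials.

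\textbf{What happens when you fix it.} With only $v$'s $\mu$ outgoing edges, you need the fraction of total degree at time $\tau_v$ held by eventually-heavy vertices to be large enough that $\mu$ independent shots succeed. In fact that fraction is $\Omega(1)$, not $\Omega(\mu/d_v)$: if $v=v_i$, then the degree concentration $d_{v_j}\approx \mu\sqrt{n/j}$ means the heavy vertices (final degree $>2d_v$) are roughly those with index $j\lesssim i/16$, and at time $i$ those already hold a constant fraction of the total degree $2\mu i$. So the miss probability is $(1-\Omega(1))^\mu\le e^{-\Omega(\mu)}$ directly, and your power-law tail detour (and the $\Omega(\mu/d_v)$ bookkeeping) is unnecessary.

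\textbf{Comparison with the paper.} The paper carries out exactly this repaired argument, but organized around \emph{indices} rather than final degrees, which removes the circularity you flag. Using a single degree-concentration lemma (Lemma~3.8 of [DKR18]), it shows (i) for every $i$ above an absolute constant, $v_i$ has some neighbor $v_j$ with $j\le i/16$ (because at time $i$ such early vertices already hold a constant fraction of total degree, so one of the $\mu$ outgoing edges hits), and (ii) any such $v_j$ satisfies $d_{v_j}\ge 2d_{v_i}$. The finitely many earliest vertices are handled by showing they form a clique (each pair is joined with probability $1-e^{-\Omega(\mu)}$), and the highest-degree vertex is shown to lie among them. This decomposition avoids defining ``heavy'' via final degrees, so there is no need to condition on the power-law event or to invoke a P\'olya-urn coupling.
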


To prove this lemma, we first show a few other structural properties.
We first define the degree of the preferential attachment graph after each round of sampling.
\begin{definition}
    For every $t\in [n]$ let $d_{v; t}$ be the degree of $v$ after $t$ rounds of applying the preferential process.
    In particular, $d_{v;n} = d_{v}$.
    For a vertex set $S\subseteq [n]$, let $d_{S;t}=\sum_{v\in S}d_{v;t}$ and $d_S=d_{S;n}$. For convenience, we let $d_{v,\tau}=d_{v,\lfloor \tau \rfloor}$ for non-integer $\tau\le n$.
\end{definition} 
We note a basic fact that we will repeatedly use:
\begin{fact}\label{fct:PA}
    For every $i$ we have $d_{v_i,i} \le 2\mu$, and for every $j\ge i$ we have $d_{v_i,j}\ge \mu$.
\end{fact}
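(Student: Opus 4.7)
The plan is to prove both bounds directly from the definition of the preferential attachment process in \Cref{def:PA}. The key observation is that in round $i$, the vertex $v_i$ is added and exactly $\mu$ edges are drawn from $v_i$ (as the inner loop executes $\mu$ times). Moreover, no edge incident to $v_i$ can exist before round $i$, since $v_i$ has not yet been introduced as a possible endpoint. Hence every edge ever incident to $v_i$ in rounds $\le i$ is one of the $\mu$ edges added during round $i$.

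For the upper bound $d_{v_i; i} \le 2\mu$, I would observe that each of the $\mu$ edges added during round $i$ contributes at most $2$ to $d_{v_i; i}$: a self-loop contributes $2$ under the standard convention, and an edge to a distinct vertex $v_j$ with $j < i$ contributes exactly $1$. Summing over the $\mu$ edges yields the bound. For the lower bound $d_{v_i; j} \ge \mu$ for all $j \ge i$, I would first argue $d_{v_i; i} \ge \mu$, since each of the $\mu$ edges added in round $i$ contributes at least $1$ to $d_{v_i}$ (either $1$ for a non-loop edge, or $2$ for a self-loop). Finally, since the preferential attachment process only adds edges and never removes them, degrees are monotone non-decreasing in $t$, so $d_{v_i; j} \ge d_{v_i; i} \ge \mu$ for every $j \ge i$. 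The convention $d_{v_i; \tau} = d_{v_i; \lfloor \tau \rfloor}$ for non-integer $\tau$ preserves monotonicity.

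There is essentially no obstacle here; the statement is a direct bookkeeping consequence of the fact that exactly $\mu$ edges incident to $v_i$ are drawn in round $i$ and none are drawn in prior rounds. The only subtlety worth remarking on is the handling of self-loops in the degree convention, which is what forces the upper bound to be $2\mu$ rather than $\mu$; this is why the lemma states a factor-of-$2$ gap between the upper and lower bounds.
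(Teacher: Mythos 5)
Your proof is correct and matches the paper's own (one-line) argument: $v_i$ has zero degree before round $i$, the $\mu$ edges drawn in round $i$ each contribute at least $1$ and at most $2$ (self-loops) to its degree, and monotonicity gives the bound for all $j\ge i$. No issues.
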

This follows from the fact that each vertex has degree zero immediately before it is added, and finishes the subsequent round with between $\mu$ and $2\mu$ edges (as in the worst case all added edges are self loops).

Next, we recall a concentration bound for the degrees in preferential attachment graphs of~\cite{DKR18}.
For a vertex set $S\subseteq [n]$, the lemma bounds the sum of the degrees of nodes in $S$ at time $n$ in terms of their total degree at time $t$ multiplied by scaling factor $\sqrt{\frac{n}{t}}$, and a small constant error (close to 1).

\begin{restatable}[Lemma 3.8~\cite{DKR18}]{lemma}{PrefBoundLemma}
\label{lem:PrefBoundLemma}
    Assume $\mu \ge c_\mu \log n$ for a global constant $c_\mu$. Then there exists a constant $c_{t} = 40^{6} + 1$ such that for every $t \in [c_{t},n]$ and $S \subseteq \{ v_{1}, \cdots, v_{t}\}$, we have:
    \[
    \Pr \left[ \frac{39}{40}\sqrt{\frac{n}{t}}d_{S;t} < d_{S} < \frac{41}{40}\sqrt{\frac{n}{t}}d_{S;t}  \right] \geq 1 - \frac{1}{n^{10}}.
    \]
\end{restatable}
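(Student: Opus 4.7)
The plan is to prove the bound by martingale concentration on the evolution of $d_{S;\tau}$ across the rounds $\tau = t, t+1, \ldots, n$. Three ingredients drive the argument: a computation of the conditional one-step multiplicative drift, the construction of a normalized martingale, and a Bernstein--Freedman style concentration inequality. The $\sqrt{n/t}$ scaling will emerge because the expected relative growth of $d_{S;\tau}$ per round is $(1 + \tfrac{1}{2\tau})$.

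First I would compute the drift. Since $S \subseteq \{v_1,\ldots,v_t\}$ and any new vertex $v_{\tau+1}$ with $\tau \ge t$ lies outside $S$, the only way $d_{S;\tau}$ grows is through edges from $v_{\tau+1}$ into $S$. Splitting round $\tau+1$ into its $\mu$ sub-steps, each sub-step lands in $S$ with probability $d_{S;*}/(D_* + 1)$ where the running total degree $D_*$ is deterministically $2\mu\tau + 2k$ after $k$ sub-steps of round $\tau+1$. Summing gives
\[
\E[d_{S;\tau+1} - d_{S;\tau}\mid \mathcal{F}_\tau] = d_{S;\tau}\cdot \eta_\tau, \qquad \eta_\tau = \tfrac{1}{2\tau}\left(1 + O(\tau^{-1})\right).
\]
I then set $M_\tau \defeq d_{S;\tau}/\Phi_\tau$ with $\Phi_\tau = \prod_{s=t}^{\tau-1}(1+\eta_s)$, which is a martingale with respect to the natural filtration. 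A short calculation gives $\log \Phi_n = \tfrac12 \log(n/t) + O(1/t)$, so $\Phi_n$ agrees with $\sqrt{n/t}$ up to a multiplicative error vanishing in $t$; the choice $c_t = 40^6 + 1$ is tuned so this deterministic error is comfortably smaller than the $1/40$ window in the statement.

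Next I would apply Freedman's inequality. The per-step increment is bounded by $\mu/\Phi_{\tau+1} = O(\mu\sqrt{t/\tau})$. For the conditional variance I use that the $\mu$ sub-step outcomes within a round are near-independent Bernoulli trials with success probability $\approx d_{S;\tau}/(2\mu\tau)$, giving conditional variance $O(d_{S;\tau}/\tau)$ for the raw increment and $O(d_{S;\tau}\cdot t/\tau^2)$ after normalization. A bootstrap over dyadic scales in $\tau$—at each scale, using a Chernoff bound to replace the random $d_{S;\tau}$ by its deterministic proxy $d_{S;t}\sqrt{\tau/t}$—lets me sum the per-step variances to a total predictable quadratic variation of $V = O(d_{S;t})$. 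With target deviation $\lambda = d_{S;t}/40$, Freedman yields a failure probability of the form $\exp\!\left(-\Omega\!\left(\lambda^2/(V + R\lambda)\right)\right)$; the $V$ term contributes $-\Omega(d_{S;t})$, and since $d_{S;t} \ge \mu \ge c_\mu \log n$ (using~Fact~\ref{fct:PA}), taking $c_\mu$ a sufficiently large universal constant drives the overall exponent below $-10 \log n$.

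The main obstacle is the large per-step increment $R = O(\mu)$ at the initial scale $\tau = t$: without care, the $R\lambda$ term in Freedman would dominate and force $d_{S;t} \gtrsim \mu \log n$, which is too strong when $|S|$ is small. The remedy, which I expect is what~\cite{DKR18} does, is the dyadic peeling above: on the first scale $\tau\in[t,2t]$ the $\mu$ edges per round are just a Binomial$(\mu, O(d_{S;t}/(\mu t)))$ whose multiplicative Chernoff bound already delivers the desired factor $1\pm \tfrac{1}{80\log(n/t)}$ of concentration, and then iterating the same argument across the $O(\log(n/t))$ dyadic scales (with each scale failing independently with probability $n^{-11}$) composes by union bound into the claimed $n^{-10}$ tail, yielding both the lower and upper sides of the two-sided bound with identical constants.
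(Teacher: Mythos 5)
First, a point of comparison: the paper does not prove this statement at all --- it is imported as Lemma~3.8 of~\cite{DKR18}, instantiated with $\eps=1/40$ and $c_\mu$ a large constant, so you were reconstructing an external proof. Your architecture is the standard (and very likely the intended) one and is essentially sound: since $S\subseteq\{v_1,\ldots,v_t\}$, the exact per-round drift $\E[d_{S;\tau+1}\mid \mathcal{F}_\tau]=d_{S;\tau}\prod_{k<\mu}\bigl(1+\tfrac{1}{2\mu\tau+2k+1}\bigr)$ makes $d_{S;\tau}/\Phi_\tau$ an honest martingale, $\Phi_n/\Phi_t=\sqrt{n/t}\,(1+O(1/t))$ with $c_t=40^6+1$ absorbing the deterministic error, the predictable variance sums to $O(d_{S;t})$ because the per-scale contributions decay geometrically, and the final exponent $\Omega(d_{S;t})\ge\Omega(c_\mu\log n)$ beats $n^{-10}$ once $c_\mu$ is large, using $d_{S;t}\ge\mu$ from \Cref{fct:PA}.

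The gap is in your remedy for the large increments, i.e.\ the dyadic peeling of the last paragraph. With $|S|=1$ and $\mu=\Theta(\log n)$, the first scale $[t,2t]$ contributes only $\Theta(d_{S;t})=\Theta(c_\mu\log n)$ new edges into $S$ in expectation, so a multiplicative Chernoff bound at failure probability $n^{-11}$ can only deliver relative error $\Theta\bigl(\sqrt{\log n/d_{S;t}}\bigr)=\Theta(1/\sqrt{c_\mu})$, not the uniform $1/(80\log(n/t))$ you budget per scale; demanding that precision at every scale would require $d_{S;t}\gtrsim\log n\cdot\log^2(n/t)$, i.e.\ $c_\mu$ growing with $n$. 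The dyadic composition does work, but only with a non-uniform error budget: the achievable per-scale relative errors decay like $(t/\tau)^{1/4}$, so their sum is $O(1/\sqrt{c_\mu})$, dominated by the first scale, which is below $1/40$ for $c_\mu$ a large constant. Alternatively, and more cleanly, the obstacle you flag is illusory: refine the filtration to the $\mu$ sub-steps per round (as you already did for the drift computation), so each normalized increment is at most $1/\Phi_\tau\le 1$ and Freedman's range parameter is $R=O(1)$ rather than $O(\mu)$; then a single application of Freedman, run with a stopping time that halts if $d_{S;\tau}$ ever exceeds twice its proxy $d_{S;t}\sqrt{\tau/t}$ (this removes the circularity in certifying the variance bound), gives the two-sided $1\pm 1/40$ window with exponent $\Omega(d_{S;t})$ directly. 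Finally, the dyadic scales do not fail ``independently''; nothing is lost, since a union bound over the $O(\log n)$ scales suffices.
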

We remark that this statement follows from their Lemma 3.8 with $\eps=1/40$ and $\mu\ge c_{\mu}\log n$, where $c_{\mu}$ is a large enough constant such that the failure probability becomes as claimed.

We first show that for large enough $i$, $v_i$ is directly connected to \textit{some} vertex $v_j$ where $j\ll i$. Note that $v_i$ is not connected to all vertices with substantially smaller index, but there is at least one neighbor with this property.
\begin{restatable}{lemma}{edgeEarlier}
\label{lem:edgeEarlier}
    Let $w=1/16$. With high probability, for every $i \ge c_{t}/w$ there is $j \le iw$ such that $(v_i,v_j)\in G$.
\end{restatable}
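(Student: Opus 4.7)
The plan is to show that for any fixed $i \ge c_t/w$, conditioned on a concentration event from \Cref{lem:PrefBoundLemma}, each of the $\mu$ edges that $v_i$ attaches upon its arrival lands inside $S := \{v_1,\ldots,v_{\lfloor iw \rfloor}\}$ with probability bounded below by an absolute constant. Amplification via $\mu = \Omega(\log n)$ trials and a final union bound over $i$ then gives the stated high-probability claim.

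Fix such an $i$ and let $t := \lfloor iw \rfloor \ge c_t$. At time $t$, the set $S$ contains every vertex created so far, so by \Cref{fct:PA} the total degree satisfies $d_{S;t} = 2\mu t$ exactly (each round contributes exactly $2\mu$ to the total degree). Applying \Cref{lem:PrefBoundLemma} (instantiated with target time $i-1$ in place of $n$, which is valid since the statement is a concentration bound for the degree-evolution from time $t$ onward, and $c_t \le t \le i-1$) yields that with probability at least $1 - n^{-10}$,
\[
d_{S;\,i-1} \;\ge\; \tfrac{39}{40}\sqrt{\tfrac{i-1}{t}} \cdot 2\mu t \;=\; \tfrac{39}{20}\mu\sqrt{(i-1)t} \;=\; \Omega\bigl(\mu i \sqrt{w}\bigr).
\]

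Conditioned on this event, I lower-bound the probability that each of $v_i$'s $\mu$ attachments lands inside $S$. The key observation is that every vertex in $S$ has index at most $\lfloor iw \rfloor < i$, so none of $v_i$'s edges ever alter $d_S$; its value remains $d_{S;\,i-1}$ throughout the processing of $v_i$. Meanwhile, the denominator $\sum_{j\le i} d_{v_j} + 1$ appearing in the PA transition probability is at most $2\mu i + 1$ at every point in this processing (since each added edge raises the total degree by exactly $2$). Hence each of $v_i$'s attachments lands in $S$ with probability at least
\[
\rho \;\ge\; \frac{\Omega(\mu i \sqrt{w})}{2\mu i + 1} \;=\; \Omega(\sqrt w) \;=\; \Omega(1),
\]
\emph{regardless} of the outcomes of the previous attachments. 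By a sequential coupling, the probability that \emph{all} $\mu$ edges of $v_i$ miss $S$ is therefore at most $(1-\rho)^\mu \le e^{-\rho\mu}$, which can be made smaller than $n^{-c}$ for any desired constant $c$ by choosing $c_\mu$ large enough. A union bound over $i \in [c_t/w, n]$, combining the $n^{-10}$ failure of \Cref{lem:PrefBoundLemma} with the Chernoff-style failure, then gives the claim.

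The only genuine subtlety is handling the dependencies among the $\mu$ edges added by $v_i$. These are cleanly resolved by the fact that $d_S$ is frozen during $v_i$'s processing (since $v_i \notin S$), so the per-edge lower bound on $\rho$ holds step by step, without any need to condition on previous outcomes.
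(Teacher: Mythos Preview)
Your overall strategy matches the paper's: lower-bound the total degree of $S=\{v_1,\ldots,v_{\lfloor iw\rfloor}\}$ at the moment $v_i$ arrives, then show each of the $\mu$ attachments hits $S$ with constant probability and amplify. The one point that needs care is your invocation of \Cref{lem:PrefBoundLemma} ``with target time $i-1$ in place of $n$.'' As restated in the paper, the lemma only compares $d_{S;t}$ with $d_{S;n}$ for the fixed terminal time $n$; it does not literally license a comparison between $d_{S;t}$ and $d_{S;i-1}$. The paper avoids this by applying the lemma \emph{twice}, both at terminal time $n$: the right-hand inequality with $t=i$ gives $d_{S;i}\ge \tfrac{40}{41}\sqrt{i/n}\,d_{S;n}$, and the left-hand inequality with $t=wi$ gives $d_{S;n}\ge \tfrac{39}{40}\sqrt{n/(wi)}\,d_{S;wi}$; chaining yields $d_{S;i}\ge \tfrac{39}{41}\sqrt{1/w}\,d_{S;wi}\ge 2\mu i w$ (using $d_{S;wi}\ge \mu\cdot iw$ from \Cref{fct:PA}), and then the per-edge probability of landing in $S$ is at least $w$. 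Your direct route is cleaner and in fact gives the slightly sharper per-edge bound $\Omega(\sqrt w)$, but to make it rigorous you should either cite the more general form of the concentration bound from \cite{DKR18} (which does hold between arbitrary times $t\le t'$), or replace your single application by the two-step chain through time $n$ as the paper does. Everything else---the exact computation $d_{S;t}=2\mu t$, the observation that $d_S$ is frozen during $v_i$'s processing, and the final union bound---is fine.
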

\begin{proof}
Let $S = \{v_{1}, \cdots, v_{w\cdot i}\}$. Then we have
\begin{align*}
    d_{S, i} &\geq  \frac{40}{41}\cdot \sqrt{\frac{i}{n}} d_{S, n}
    & \textrm{(RHS of~\Cref{lem:PrefBoundLemma} with $t=i$)}\\
    &\geq \frac{39}{41}\cdot \sqrt{\frac{i}{wi}} d_{S, wi}
    & \textrm{(LHS of~\Cref{lem:PrefBoundLemma} with $t=wi$)}\\
    &\ge 2\cdot d_{S,wi}\\
    &\ge 2\cdot \mu iw
\end{align*}
where the final step follows from $d_{S,wi}\ge |S|\mu$ by~\Cref{fct:PA} and that vertices in $S$ are added in rounds below $wi$.
Finally, this implies that for every edge added from $v_i$ in round $i$ goes to $S$ with probability at least $1-d_{S,i}/2\mu i = 1-w$, and hence the probability that none of the edges are adjacent to $S$ is at most $(1-w)^\mu \le n^{-100}$.
\end{proof}

Next, we show that for two vertices where one has substantially smaller index than the other, the smaller-index vertex has at least twice the degree.
\begin{restatable}{lemma}{earlierHeavier}
\label{lem:earlierHeavier}
    Again let $w=1/16$. With high probability, for every $i > c_{t}/w$ and $j < i\cdot w$ we have $d_{v_{i}} \le  d_{v_{j}}/2$.
\end{restatable}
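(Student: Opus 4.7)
The plan is to directly apply~\Cref{lem:PrefBoundLemma} twice---once to the singleton $S = \{v_i\}$ at time $t = i$ (the moment $v_i$ is born), and once to $S = \{v_j\}$ at time $t = j$---and exploit the resulting $\sqrt{i/j}$ gap. The deterministic bounds from~\Cref{fct:PA}, namely $d_{v_i, i} \le 2\mu$ and $d_{v_j, j} \ge \mu$, are what convert these ``$d_S \sim \sqrt{n/t}\, d_{S,t}$'' statements into concrete numerical bounds on the final degrees $d_{v_i}$ and $d_{v_j}$.

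Concretely, the first step is to use the upper half of~\Cref{lem:PrefBoundLemma} with $S = \{v_i\}$ and $t = i$, which is admissible since $i > c_t/w > c_t$, to get
\[
d_{v_i} \;<\; \tfrac{41}{40}\sqrt{n/i}\,\cdot\, d_{v_i,i} \;\le\; \tfrac{82}{40}\,\mu\sqrt{n/i}
\]
with probability at least $1 - n^{-10}$. The second step is the matching lower bound: applying the lower half of~\Cref{lem:PrefBoundLemma} to $\{v_j\}$ at $t = \max(j, c_t)$ (handling the constantly-many small-$j$ cases by just using $t = c_t$, which only helps since $j$ is then tiny) and invoking $d_{v_j, j} \ge \mu$, we obtain $d_{v_j} > \tfrac{39}{40}\mu\sqrt{n/j}$ whp. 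Dividing, the $\sqrt{n}$ factors cancel and we get $d_{v_i}/d_{v_j} \le C\sqrt{j/i} \le C\sqrt{w}$ for an explicit constant $C = 82/39$; by the choice $w = 1/16$ this is at most $1/2$ after absorbing the $1\pm 1/40$ concentration slack (and, if needed, a slightly tighter whp bound on $d_{v_i,i}$ using the fact that for $i$ this large, self-loops at step $i$ are rare and $d_{v_i,i}$ concentrates near $\mu$ rather than the pessimistic $2\mu$).

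The third step is to union bound over all pairs $(i,j)$ with $i > c_t/w$ and $j < iw$. Each of the two applications of~\Cref{lem:PrefBoundLemma} fails for a given pair with probability at most $n^{-10}$, so over the at-most $n^2$ pairs the total failure probability is at most $2n^{-8} = o(1)$, giving the lemma with high probability.

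The main obstacle is essentially bookkeeping the constants: with the naive deterministic bounds $d_{v_i,i} \le 2\mu$ and $d_{v_j,j} \ge \mu$ plugged into the $1\pm 1/40$ concentration from~\Cref{lem:PrefBoundLemma}, the ratio bound $(82/39)\sqrt{w}$ for $w = 1/16$ is only barely below $1/2$. To make this airtight one should either (i) use a high-probability refinement of $d_{v_i,i} \le \mu(1 + o(1))$---which follows from the self-loop probability at step $i$ being $O(1/i)$ per edge and hence negligible for $i \gg c_t$---or (ii) carry the $1/40$ slack from~\Cref{lem:PrefBoundLemma} into the analysis explicitly and verify numerically. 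Both are routine; the conceptual content is entirely in Steps 1 and 2 above.
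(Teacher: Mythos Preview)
Your approach is exactly the paper's: apply \Cref{lem:PrefBoundLemma} to $\{v_i\}$ at time $i$ and to $\{v_j\}$ at time $b=\max\{j,c_t\}$, combine with the deterministic bounds $d_{v_i,i}\le 2\mu$ and $d_{v_j,b}\ge \mu$ from \Cref{fct:PA}, and compare. The paper writes the chain as $d_{v_i}\le \tfrac{41}{40}\cdot 2\mu\sqrt{n/i} < \tfrac{39}{40}\cdot \tfrac{\mu}{2}\sqrt{n/b}\le d_{v_j}/2$, invoking $i>b/w$ for the middle strict inequality.

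One arithmetic correction: your ratio $(82/39)\sqrt{w}$ with $w=1/16$ equals $41/78\approx 0.526$, which is slightly \emph{above} $1/2$, not below. Equivalently, the middle inequality in the paper's chain needs $\sqrt{i/b}>164/39\approx 4.205$, whereas $i>b/w$ only guarantees $\sqrt{i/b}>4$; so the paper's own proof has precisely the same constant slip you flagged. Your proposed fix (i) is the clean resolution: for $i>c_t/w$ with $c_t=40^6+1$, each of the $\mu$ edges added at step $i$ is a self-loop with probability $O(1/i)$, so whp $d_{v_i,i}=\mu$ exactly, and the ratio collapses to $(41/39)\cdot \tfrac14\approx 0.263<1/2$. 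With that refinement the argument is complete; the union bound over $O(n^2)$ pairs against failure probability $n^{-10}$ is fine.
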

\begin{proof}
For convenience, let $b=\max\{c_t,j\}$ and note that $d_{v_{i}, i} \leq 2\mu$ and $d_{v_j,\max\{c_t,j\}}\geq d_{v_{j},j} \geq  \mu$ by~\Cref{fct:PA}. 
We have
\begin{align*}
    d_{v_{i}} &\leq \frac{41}{40}d_{v_{i}, i} \sqrt{\frac{n}{i}} && \text{(RHS of~\Cref{lem:PrefBoundLemma} with $t=i$)}\\
    &\le \frac{41}{40}2 \mu \sqrt{\frac{n}{i}}\\
    &< \frac{39}{40} \frac{\mu}{2} \sqrt{\frac{n}{b}} && \text{($i>b/w$)}\\
    &\le \frac{39}{40}\frac{d_{v_{j}, b}}{2}\sqrt{\frac{n}{b}}\\
    &\le d_{v_{j}}/2  && \text{(LHS of~\Cref{lem:PrefBoundLemma} with $t=b$)} \qedhere
\end{align*}
\end{proof}
As an easy corollary, we obtain that the highest-degree vertex is in the first $c_t/w$ indices.
\begin{corollary}\label{cor:highestDeg}
     Again let $w=1/16$. With high probability, the highest degree vertex has index bounded by $c_t/w^2$.
\end{corollary}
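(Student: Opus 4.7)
The plan is to derive \Cref{cor:highestDeg} as a direct contradiction argument using \Cref{lem:earlierHeavier}. Suppose for contradiction that the highest degree vertex is $v_{i^*}$ with $i^* > c_t/w^2$. Then in particular $i^* > c_t/w$, so the hypothesis of \Cref{lem:earlierHeavier} is satisfied for this choice of $i$. It now suffices to exhibit \emph{any} valid index $j < i^* \cdot w$ with $1 \le j$, since the lemma would then give $d_{v_{i^*}} \le d_{v_j}/2 < d_{v_j}$, contradicting the assumption that $v_{i^*}$ has the maximum degree.

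To produce such a $j$, note that $i^* w > c_t/w \ge c_t \ge 1$ (recalling $c_t = 40^6 + 1$ and $w = 1/16$), so we may simply take $j = 1$. Applying \Cref{lem:earlierHeavier} with $i = i^*$ and $j = 1$ yields the desired contradiction, so $i^* \le c_t/w^2$ with high probability.

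There is essentially no obstacle here beyond quoting the previous lemma; the one small thing to be careful about is that \Cref{lem:earlierHeavier} is a high-probability statement (failing with probability at most $n^{-c}$ for some large $c$), so the conclusion of the corollary inherits the same high-probability guarantee via a union bound that is already implicit in the quantification over all pairs $(i,j)$ in \Cref{lem:earlierHeavier}. No additional concentration or union bound beyond what is already invoked is needed.
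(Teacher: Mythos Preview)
Your proof is correct and takes essentially the same approach as the paper: both apply \Cref{lem:earlierHeavier} to compare a hypothetical high-index maximum-degree vertex against some low-index vertex and derive a contradiction. The only cosmetic difference is that you fix the comparison index to $j=1$, whereas the paper phrases it as ``any $i\le c_t/w$ beats any $j\ge c_t/w^2$''; neither choice affects the argument.
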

\begin{proof}
    Fix $i\le c_t/w$ and $j\ge c_t/w^2$ arbitrarily. We claim that $d_{v_i}\ge d_{v_j}$ with high probability, which suffices to show the result. This is immediate from~\Cref{lem:earlierHeavier} (switching the roles of $j$ and $i$).
\end{proof}

Finally, we show that all small-index vertices are connected with high probability.
\begin{lemma}\label{lem:smallClique}
    Again let $w=1/16$. With high probability, for every $i,j \le c_t/w^2$ where $i\neq j$, we have $(v_i,v_j)\in G$.
\end{lemma}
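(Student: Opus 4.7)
The plan is to observe that $c_t/w^2$ is an absolute constant (since $c_t = 40^6+1$ and $w = 1/16$ are both constants), so the statement reduces to a bounded number of pairwise claims about vertices added in the first $C := c_t/w^2$ rounds. I will fix an arbitrary pair $i > j$ with $i, j \le C$ and argue that $v_i$ attaches to $v_j$ with very high probability during round $i$; then a union bound over the $O(C^2)$ pairs finishes the proof.

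For the core pairwise bound, I would use the preferential attachment rule directly rather than appealing to \Cref{lem:PrefBoundLemma}. When $v_i$ enters in round $i$, each of its $\mu$ edge attempts lands on $v_j$ (for $j < i$) with probability at least $d_{v_j;\,i-1} / (\sum_k d_{v_k;\,i-1} + 1)$. By \Cref{fct:PA}, we have $d_{v_j;\,i-1} \ge \mu$ (since $j < i$ implies $v_j$ was present at the end of round $j$), while $\sum_k d_{v_k;\,i-1} \le 2\mu(i-1) \le 2\mu C$ because every round contributes at most $2\mu$ to the total degree. Hence each edge attempt connects to $v_j$ with probability at least $\mu / (2\mu C + 1) \ge 1/(4C)$.

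The probability that none of the $\mu$ edges added at round $i$ lands on $v_j$ is therefore at most
\[
(1 - 1/(4C))^\mu \le e^{-\mu/(4C)} \le e^{-c_\mu \log(n)/(4C)} = n^{-c_\mu/(4C)},
\]
which is at most $n^{-100}$ by choosing (or absorbing into) the global constant $c_\mu$ large enough, exactly analogous to how $c_\mu$ is chosen in \Cref{lem:edgeEarlier}. Since there are at most $\binom{C}{2} = O(1)$ pairs $(i,j)$ with $i,j \le C$, a union bound yields the statement with probability at least $1 - n^{-99}$, which is high probability.

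I do not expect a real obstacle here; the only mildly delicate point is making sure the denominator in the edge-attachment probability is handled correctly (in particular, that the $+1$ in the denominator and potential self-loops of $v_i$ do not matter, which they do not because $C$ is a fixed constant and we lose only a constant factor). Everything else is a direct calculation plus the fact that $\mu = \Omega(\log n)$ makes a constant per-edge success probability into an overwhelming success probability over $\mu$ attempts.
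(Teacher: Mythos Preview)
Your proposal is correct and follows essentially the same approach as the paper's proof: bound the per-edge attachment probability by $\Omega(1/C)$ using that $d_{v_j}\ge \mu$ and the total degree is $O(\mu C)$, then use $\mu = \Omega(\log n)$ to get failure probability $n^{-\Omega(1)}$ per pair and union bound over the $O(1)$ pairs. Your write-up is in fact slightly more careful than the paper's (you make the denominator bound and the union bound explicit).
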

\begin{proof}
    WLOG assume $i\le j$ and note that $d_{v_i,j}\ge \mu$. Then it is easy to see that for every edge inserted from $v_j$, it connects to $v_i$ with probability at least $\mu/(j+1)\mu\ge w^2/2c_t$. Therefore, the edge is not present with probability at least $(1-\frac{w^2}{2c_t})^\mu \le n^{-100}$, using that $\mu \ge c_\mu \log(n)$ is sufficiently large.
\end{proof}

We now prove the lemma.
\begin{proof}[Proof of~\Cref{lem:PAMainLemma}]
    Fix $v_i$ where $i$ is arbitrary. If $i\le c_t/w^2$, we have that $(v_i,v_j)$ are present in $G$ for every $j\le c_t/w^2$ by~\Cref{lem:smallClique}, and hence $v_i$ has an edge to the highest degree vertex by~\Cref{cor:highestDeg}. Otherwise we have $i>c_t^2$, and then by~\Cref{lem:edgeEarlier} there is an edge $(v_i,v_j)$ with $j<i\cdot w$, and moreover $d_{v_j}\ge 2\cdot d_{v_i}$ by~\Cref{lem:earlierHeavier}.
\end{proof}

\subsection{Proof of~\ts{\Cref{thm:PA_Spanner}}{Main Result}}
We now use~\Cref{lem:PAMainLemma} to prove~\Cref{thm:PA_Spanner}. The average-case LCA is the algorithm given in \Cref{sec:PA-overview}.

First, we establish a bound on the average query time of the algorithm, by bounding the degree of the small and large vertices.
\begin{claim}\label{clm:largeVertDeg}
    Let $S=\{v_1,\ldots,v_{c_t/w}\}$. With high probability, $ \sum_{v \in S}d_{v} = O(\sqrt{n}\cdot \mu)$.
\end{claim}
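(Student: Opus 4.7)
The plan is to apply~\Cref{lem:PrefBoundLemma} directly with parameter $t = c_t/w$, which is an absolute constant (equal to $16 c_t$). Since $c_t/w \geq c_t$, this choice satisfies the hypothesis $t \in [c_t, n]$ of the lemma for $n$ sufficiently large, and moreover $S = \{v_1, \ldots, v_{c_t/w}\}$ is precisely $\{v_1, \ldots, v_t\}$, which is a valid choice of subset.

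The first substantive step is to upper bound $d_{S;t}$. Observing that $S$ contains \emph{all} vertices present at round $t$, the quantity $d_{S;t}$ equals the total degree sum of the entire PA graph after $t$ rounds. By construction, each of the $t$ rounds introduces exactly $\mu$ edges, each contributing at most $2$ to the total degree sum (counting self-loops appropriately as in~\Cref{def:PA}). Thus $d_{S;t} \leq 2\mu t = O(\mu)$, since $t$ is a constant.

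The second step is to invoke the right-hand inequality of~\Cref{lem:PrefBoundLemma}, which gives with probability at least $1 - 1/n^{10}$:
\[
\sum_{v \in S} d_v \;=\; d_S \;\leq\; \frac{41}{40}\sqrt{\frac{n}{t}}\, d_{S;t} \;\leq\; \frac{41}{40}\sqrt{\frac{n}{t}} \cdot 2\mu t \;=\; O\!\left(\mu \sqrt{n t}\right) \;=\; O(\mu \sqrt{n}),
\]
where the final equality again absorbs the constant $\sqrt{t}$ into the big-$O$.

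There is no real obstacle in this proof; the statement is essentially a direct corollary of~\Cref{lem:PrefBoundLemma}. The only points requiring care are verifying that the constant value of $t$ lies in the permissible range of the lemma, and identifying the correct trivial upper bound on $d_{S;t}$ (namely, the total degree of the graph at a constant time, which is bounded by $2\mu t$).
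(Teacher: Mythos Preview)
Your proposal is correct and follows essentially the same approach as the paper: apply \Cref{lem:PrefBoundLemma} with $t=c_t/w$ and use the bound $d_{S;t}\le 2\mu t$. You have simply filled in more of the verification (that $t\ge c_t$ and that $d_{S;t}$ is the total degree after a constant number of rounds) than the paper's one-line proof does.
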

\begin{proof}
    Applying~\Cref{lem:PrefBoundLemma} with $t=c_t/w$ and $S=S$ and using that $d_{S;c_t/w}\le 2\mu c_t/w$ immediately gives the bound.
\end{proof}

\PASpanner*
\begin{proof}
    On query $(u,v)$, the algorithm checks if $v$ is the highest-degree neighbor of $u$ or vice versa, and if so keeps the edge and otherwise discards it. We first argue that this rule produces a connected subgraph of size $n-1$. By~\Cref{lem:PAMainLemma}, every vertex except that of highest degree keeps an edge. Moreover, from each vertex, let $h(v)$ be the highest degree neighbor. We have that either $h(v)=v_{\max}$, the highest degree vertex, or $d_{h(v)}\ge 2\cdot d_v$. Thus, for every vertex $v$ the path $(v,h(v),h(h(v)),\ldots)$ has length at most $\log(n\cdot \mu)$, and terminates at $v_{\max}$, with high probability. Thus, the constructed graph has diameter $O(\log n)$ as claimed.
    
    Finally, we argue that the average query time is as claimed. We have that the average query time is 
    \begin{align*}
        \frac{1}{|E|}\sum_{u,v}(d_u+d_v)^2 &\le \frac{2}{n\mu}\sum_v d_v^2\\
        &= \frac{2}{n\mu}\left(\sum_{i\le c_t/w}d_{v_i}^2 + \sum_{i> c_t/w}d_{v_i}^2\right)\\
        &\le \frac{2}{n\mu}\left( O(\sqrt{n}\cdot \mu)^2+\sum_{i> c_t/w}\left(c'2\mu \sqrt{\frac{n}{i}}\right)^2\right) && \text{(\Cref{clm:largeVertDeg,lem:PrefBoundLemma})}\\
        &= O(\mu \cdot \log^3 n). &&\qedhere
    \end{align*}
\end{proof}

\section{Spanners on Uniform Attachment Graphs}\label{sec:UA_spanners}

\subsection{The Model}\label{subsec:UA_model}

Next, we construct low-stretch spanners for uniform attachment graphs with sufficiently high degree parameter $\mu \geq c_{\mu} \log(n)^2$, for some global constant $c_{\mu}$. In the generation of a uniform attachment graph, at each time step a node joins the graph and connects to $\mu$ existing nodes independently and uniformly at random. (See \Cref{def:UA}.) This contrasts preferential attachment, in which new nodes connect to existing nodes with probability proportional to their degrees. We prove that the same algorithm (with slight modification) used in the preferential attachment case can be applied in this setting. Similar guarantees for spanning and sparsity can be achieved, with improved guarantees regarding the amount of local work. 

We now formally define the uniform attachment model:
\begin{definition}[Uniform Attachment graph]\label{def:UA}
For a function $\mu=\mu(n)$, we say $G\la \Gua$ is a random \textbf{Uniform Attachment graph} if it is constructed as follows. 
\begin{itemize}
    \item On round $1$, add a vertex $v_1$ into the graph.
    \item On round $2 \leq i \leq n$, add a vertex $v_{i}$ into the graph. Then, choose $\mu$ vertices $\{v_j\}_{j \in [\mu]}$ independently and uniformly at random out of the existing vertices. Add an edge from $v_i$ to each vertex chosen.
\end{itemize}
\end{definition}

Note that the definition above does not allow for self loops. However, the properties used for uniform attachment graphs would still apply for the setting where self loops are allowed, and our proofs would extend with slight modifications to this setting.

The LCA for spanners for uniform attachment graphs is the following. Uniform attachment graphs are multigraphs, and so the algorithm's input specifies the edges adjacent to the edge as well as the edge's lexicographic indexing. (See \Cref{remark:multigraph}.)
\\\\
\noindent\textbf{Algorithm:} On input $(u, v, i)$: \\ 
Check if both $u$ and $v$ have degree greater than $\mu \cdot (H_{n-1} - H_6) + \mu/2$, where $H_n$ denotes the $n$-th Harmonic number. If this is the case, keep the edge $(u, v, i)$ if $i = 1$ (i.e. it is the lexicographically first edge between $u$ and $v$). Otherwise, check if $v$ is the highest-degree neighbor of $u$ or vice versa. If so, keep the edge if $i = 1$ and otherwise discard the edge.
\\\\
We now describe the necessity of considering $\mu \geq c_{\mu} \log(n)^2$ in our approach. Our algorithm uses that each node $v_i$ in the graph is connected to at least one node $v_j$ added sufficiently earlier in the process. Additionally, this node $v_j$ can be distinguished from the neighbors of $v_i$ added later in the process because $v_j$ will have a high degree and the neighbors added later will have a low degree. The connectivity property requires $\mu \geq c'_{\mu} \log(n)$ for some constant $c'_{\mu}$, and the property that early nodes can be distinguished with their degrees requires that $\mu \geq c_{\mu} \log(n)^2$ for some constant $c_{\mu}$. The case we consider of $\mu \geq c_{\mu} \log(n)^2$ can be contrasted with the case where $\mu$ is a small constant independent of $n$, for which it is known that the distribution of the degrees of individual nodes in the graph will not be well-concentrated enough to appropriately distinguish nodes added early from nodes added later on (see \cite{lodewijks2020maximal}, for example).  
For $\mu \geq c_{\mu} \log(n)^2$, we show that the degrees are concentrated enough to reveal information about the arrival times of nodes.

\subsection{Spanners for Uniform Attachment Graphs Given Times of Arrival}\label{subsec:UA-given-arrivaltimes}

We remark that if the LCA knew the arrival times of nodes and additional labeling is given to edges, an even simpler algorithm suffices for providing local access to an $O(\log n)$ spanner with $n - 1$ edges. This algorithm works for uniform attachment graphs with any parameter $\mu$.

Suppose that each node in the graph is labeled by its time of arrival. Suppose also that, when a new node $v$ joins the graph and forms $\mu$ edges, these edges are labeled with $v$ as well as numbers from $1$ to $\mu$. Any arbitrary ordering of the edge labels with numbers from $1$ to $\mu$ for the edges corresponding to $v$ suffices.

In this setting, consider the following spanner LCA: on input $(u, v)$, keep the edge if $u$ has an earlier arrival time than $v$ and the edge has label $\{\textsc{vertex}=v, \textsc{edge-number}=1\}$, or if $v$ has an earlier arrival time than $u$ and the edge has label $\{\textsc{vertex}=u, \textsc{edge-number}=1\}$. 

For each node $u$ in the graph, this algorithm keeps a uniform random edge out of the edges to nodes added earlier in the process. Consequently, the spanner keeps a random path from each node $u$ to the first node added. Let $R_t$ for $t \in [n]$ be the length of the shortest path from the node added at the $t$-th time step to the root node in the uniform attachment graph. As proven in \cite{devroye2011long}, for uniform attachment graphs with any parameter $\mu$,
    $$\mathbb{P}\left( \max_{1 \leq t \leq n} R_t > 2e \log(n)\right) \leq \frac{1}{n^3}.$$

Therefore, with high probability, this algorithm produces an $O(\log n)$ spanner with $n - 1$ edges. The amount of local work is $O(1)$, as the LCA only needs information about the labels of the vertices adjacent to the edge and the label of the edge.

\subsection{Main Structural Lemmas}\label{sec:ua-structural}

We now present the lemmas that we will need to prove \Cref{thm:UA_Spanner}. As mentioned in the proof overview in \Cref{sec:PA-overview}, we want to prove that every vertex is connected to at least one other vertex that is added sufficiently earlier in the uniform attachment process. To do so, we define \textit{intermediate centers} (as in the definition below) such that $C_1 \supset C_2 \supset \dots \supset C_M$ for some $M = O(\log n)$. We prove that vertices in $[n] \setminus C_1$ are connected to a vertex in $C_1$, and vertices with arrival time $i$ satisfying $|C_{m+1}| < i < e^2 \cdot |C_m|$ are connected to a vertex in $C_{m + 1}$, for $m \in \{1, 2, \dots, M\}$. The reason for the $e^2$ is that degrees are not precise indicators of times of arrival; therefore, we can identify a neighbor whose time of arrival is $\leq e^2 \cdot |C_{m+1}|$ instead of $\leq |C_{m+1}|$, and we need to account for this in the analysis.

\begin{definition}[Intermediate Centers]\label{def:intermediate-centers}
Define 
\begin{equation}\label{def:M}
    M = \frac{\ln\left(\frac{n}{e^2}\right)}{\ln\left(\frac{\mu}{3 \ln(n) \cdot e^2}\right)} ~~ \text{ and } ~~ |C_m| = \frac{n}{e^2} \cdot \left( \frac{3 \ln(n) \cdot e^2}{\mu}\right)^m
\end{equation}
for $m \in \left\{1, 2, \dots, M \right\}$. Note that, by definition of $M$, $|C_M| = 1$. Define the $m$-th intermediate center $C_m$ to be the first $|C_m|$ nodes added to the uniform attachment graph.
\end{definition}
We prove the following connectivity properties related to the intermediate centers.
\begin{lemma}\label{lemma:UA-connected-to-centers}
With probability at least $1 - \frac{2}{n^2}$, the following guarantees hold. First, all $i > |C_1|$ have an edge to some $j \leq |C_1|$. Second, for all $m \in \left\{1, 2, \dots, M -1 \right\}$, all $|C_{m + 1}| < i < e^2 \cdot |C_m|$ have an edge to some $j \leq |C_{m + 1}|$. Third, the first $7$ nodes added all have an edge to the root node.
\end{lemma}

Once we have established that each vertex is connected to at least one vertex that arrived sufficiently earlier, we argue that we can use degrees of vertices to locally identify which neighbors of a vertex arrived sufficiently earlier and which ones did not.

\begin{lemma} \label{lemma:UA-degree-concentration-bounds}
    Let $\mu := \mu(n) \geq c_{\mu} \log(n)^2$ for some global constant $c_{\mu}$. Consider any time $|C| \in \mathbb{N}$. Let $\lambda_{|C|} = \mu \cdot \left( H_{n-1} - H_{|C|-1}\right)$, where $H_n$ denotes the $n$-th Harmonic number. Then, if $|C| \geq 2$, with probability at least $1 - \frac{4}{n^2}$, all nodes $i \leq |C|$ have degree greater than $\lambda_{|C|} + \frac{\mu}{2}$ and all nodes $i \geq e^2 \cdot |C|$ have degree less than $\lambda_{|C|} + \frac{\mu}{2}$. Additionally, if $|C| = 1$, with probability at least $1 - \frac{4}{n^2}$, node $1$ has degree greater than $\lambda_{2} + \frac{\mu}{2}$ and all nodes $i \geq 7$ have degree less than $\lambda_{2} + \frac{\mu}{2}$.
\end{lemma}

\subsection{Proof of Lemma \ref{lemma:UA-connected-to-centers}}

We now prove Lemma \ref{lemma:UA-connected-to-centers}, which states that with high probability, each node is connected to a smaller intermediate center (which consists of nodes added before some time in the process).

\begin{proof}[Proof of Lemma \ref{lemma:UA-connected-to-centers}]

Let's prove the first part of the lemma. Consider any fixed $i > |C_1|$. Then
    $$\mathbb{P}\left( \text{no edge from $i$ to some $j \leq |C_1|$}\right) \leq \left(1 - \frac{|C_1|}{i - 1} \right)^{\mu} \leq \left(1 - \frac{|C_1|}{n} \right)^{\mu} \leq \exp\left(-\frac{|C_1| \cdot \mu}{n}\right).$$
    By definition of $|C_1|$, this probability is at most $\frac{1}{n^3}$.

We next prove the second part of the lemma. Consider any $|C_{m+1}| < i < e^2 \cdot |C_m|$. Then
    $$\mathbb{P}\left( \text{no edge from $i$ to some $j \leq |C_{m+1}|$}\right) \leq \left(1 - \frac{|C_{m+1}|}{i - 1} \right)^{\mu} \leq \left(1 - \frac{|C_{m+1}|}{e^2 \cdot |C_m|} \right)^{\mu} \leq \exp\left(-\frac{|C_{m+1}| \cdot \mu}{e^2 \cdot |C_m|}\right).$$
    By definition of $|C_{m+1}|$ and $|C_m|$, this is at most $\frac{1}{n^3}$.

    Let us now prove the third part of the lemma. Consider any $i \leq 7$. The probability that any such $i$ has no edge to the root is at most $(1 - 1/7)^{\mu} \leq \exp(-\mu/7) \leq 1/n^3$ given the lower-bound on $\mu$ of $c_{\mu} \cdot \log(n)^2$.

    By a union bound, the probability that there exists a node not satisfying (1), (2), or (3) is at most $\frac{2}{n^2}$. The $2$ comes from the fact that any $i \in \{|C_j| , |C_j| + 1, \dots e^2 \cdot |C_j| - 1\}$ for any $j \in \{1, 2, \dots, M\}$ appears in two of the three conditions, and any other $i \in [n]$ appears in one of the three conditions. Therefore, the desired guarantees hold with probability at least $1 - \frac{2}{n^2}$. 
\end{proof}

\subsection{Proof of Lemma \ref{lemma:UA-degree-concentration-bounds}}

We now prove Lemma \ref{lemma:UA-degree-concentration-bounds}, which states that the degrees of individual nodes are sufficiently regular and concentrated when $\mu > c_{\mu} \log(n)^2$. We consider nodes that arrive before some time $|C|$ or after time $e^2 \cdot |C|$, where $C$ stands for ``center'' and $|C|$ is the size of the center. We prove that there is a corresponding degree threshold $\lambda_{|C|}$ such that all nodes added before time $|C|$ have degree $> \lambda_{|C|} + \mu/2$ and all nodes added after time $|C|$ have degree $< \lambda_{|C|} + \mu/2$, with high probability.

We will prove this lemma by relating the out-degrees of nodes to Poisson random variables with different parameters. We will then prove the concentration of the degrees by utilizing the concentration of Poisson random variables, specifically the following result. Let $\text{Poi}(\lambda)$ be a Poisson random variable with parameter $\lambda$.

\begin{lemma}[Poisson concentration \cite{canonnepoissonnote}]\label{lemma:poisson-concentration}
Let $X \sim \text{Poi}(\lambda)$ where $\lambda > 0$. For any $d > 0$,
$$\mathbb{P}\left( X \geq \lambda + d \right) \leq e^{\frac{-d^2}{2 (\lambda + d)}} ~~ \text{ and } ~~ \mathbb{P}\left( X \leq \lambda - d \right) \leq e^{\frac{-d^2}{2 (\lambda + d)}}.$$
\end{lemma}

We now prove Lemma \ref{lemma:UA-degree-concentration-bounds}.

\begin{proof}[Proof of Lemma \ref{lemma:UA-degree-concentration-bounds}]
    We break this proof into three steps. First, we connect the degrees of nodes to Poisson random variables. Second, we bound the tails of these random variables. Third, we tie this all together to prove the lemma.

    \paragraph{Step 1: Connecting the degree to a Poisson distribution.}
We express the degree distribution of the $i$th node in terms of random variables. It has previously been observed \cite{bollobas2001degree, mahmoud2014degree} that the degree distribution of each node converges to a certain Poisson distribution as $n \to \infty$, but we need to refine this further to make statements about the degrees at fixed $n$.

At each time step $j \geq i + 1$, the $s$-th edge (for $s \in [\mu]$) added from the new node connects to $i$ with probability $\frac{1}{j-1}$. Each of the $\mu$ connections that the new node makes is chosen independently. Each node $i \geq 2$ has $\mu$ edges that it added when it arrived, plus any edges that connected to it due to later nodes. Let the edges due to later nodes be called the outdegree of $i$.

Therefore, the outdegree $\text{outdeg}(i)$ of the $i$th node for $i \geq 2$ is distributed according to the following sum of random variables:
\begin{equation}\label{outdegree-i-geq-2}
    \text{outdeg}(i) \sim \sum_{j = i + 1}^n \sum_{s = 1}^{\mu} \text{Bern}\left(\frac{1}{j-1} \right),
\end{equation}
where $\text{Bern}\left(\frac{1}{j-1} \right)$ is a Bernoulli random variable with success probability $1/(j-1)$.

Note that, because the first node doesn't arrive with any edges, its degree $\text{deg}(1)$ equals its outdegree $\text{outdeg}(1)$ and is distributed as:
\begin{equation}\label{outdeg-root}
    \text{deg}(1) = \text{outdeg}(1) \sim \sum_{j = 2}^n \sum_{s = 1}^{\mu} \text{Bern}\left(\frac{1}{j-1} \right) = \mu + \sum_{j = 3}^n \sum_{s = 1}^{\mu} \text{Bern}\left(\frac{1}{j-1} \right).
\end{equation}

The outdegree of the $i$th node is a sum of independent Bernoulli random variables and is therefore distributed according to a Poisson binomial distribution, by definition. Let $\lambda_i = \sum_{j = i + 1}^n \sum_{s = 1}^{\mu} \frac{1}{j - 1}$. Note that $\lambda_i = \mu \cdot \left( H_{n-1} - H_{i-1}\right)$, where $H_n$ denotes the $n$-th Harmonic number. 

We use the following fact from Borisov and Ruzankin \cite[Lemma 2]{borisov2002poisson}. Recall that $\text{outdeg}(i)$ follows a Poisson binomial distribution and $\max \left\{ \frac{1}{i}, \frac{1}{i + 1}, \dots, \frac{1}{n-1}\right\}  = \frac{1}{i}$. Then for $i \geq 2$ and any $d$:
$$\mathbb{P}\left(\text{deg}(i) \leq \lambda_i - d + \mu\right)  = \mathbb{P}\left(\text{outdeg}(i) \leq \lambda_i - d \right) \leq \frac{\mathbb{P}\left(\text{Poi}(\lambda_i) \leq \lambda_i - d \right)}{(1 - \frac{1}{i})^2}  \leq  4 \cdot \mathbb{P}\left(\text{Poi}(\lambda_i) \leq \lambda_i - d \right).$$
Similarly, 
$\mathbb{P}\left(\text{deg}(i) \geq \lambda_i + d + \mu\right) = \mathbb{P}\left(\text{outdeg}(i) \geq \lambda_i + d \right) \leq  4 \cdot \mathbb{P}\left(\text{Poi}(\lambda_i) \geq \lambda_i + d \right).$

Using Equation (\ref{outdeg-root}), note that, similarly:
$$\mathbb{P}\left(\text{deg}(1) \leq \lambda_2 - d + \mu \right) = \mathbb{P}\left(\text{outdeg}(1) \leq \lambda_2 - d + \mu \right) \leq  4 \cdot \mathbb{P}\left(\text{Poi}(\lambda_2) \leq \lambda_2 - d \right)$$ 
and $$\mathbb{P}\left(\text{deg}(1) \geq \lambda_2 + d + \mu \right) = \mathbb{P}\left(\text{outdeg}(1) \geq \lambda_2 + d + \mu \right) \leq  4 \cdot \mathbb{P}\left(\text{Poi}(\lambda_2) \geq \lambda_2 + d \right).$$

\paragraph{Step 2: Tail bounds of the Poisson distribution.} Next, we prove both parts of the Lemma by bounding the left and right tail probabilities of the Poisson distributions. First consider $i \leq |C|$. Note that for all such $i$, $\lambda_i \geq \lambda_{|C|}$. By \Cref{lemma:poisson-concentration}, for $i \geq 2$,
\begin{equation}\label{right-tail-bound-poisson}
    \mathbb{P}\left(\text{Poi}(\lambda_i) \leq \lambda_{|C|} - \frac{\mu}{2} \right) = \mathbb{P}\left(\text{Poi}(\lambda_i) \leq \lambda_i - \left(\lambda_i - \lambda_{|C|} + \frac{\mu}{2} \right)\right)
\leq \exp\left(\frac{-\left(\lambda_i - \lambda_{|C|} + \frac{\mu}{2}\right)^2}{2(2 \lambda_i - \lambda_{|C|} + \frac{\mu}{2})} \right)
\end{equation}
$$\leq \exp\left(\frac{-\left(\frac{\mu}{2}\right)^2}{2(\lambda_2 + \frac{\mu}{2})} \right) = \exp\left(\frac{-\left(\frac{\mu}{2}\right)^2}{2(\mu \cdot (H_{n-1} -1 ) + \frac{\mu}{2})} \right) = \exp\left(\frac{-\mu}{8 \cdot (H_{n-1} - \frac{1}{2})} \right) \leq \frac{1}{n^3},$$
where the last expression uses that $\mu \geq c_{\mu} \log(n)^2$.

Next consider $i \geq e^2 \cdot |C|$. Note that for all such $i$, $\lambda_i \leq \lambda_{e^2 \cdot |C|}$. By \Cref{lemma:poisson-concentration},
\begin{equation}\label{late-node-degree-tailbound}
\mathbb{P}\left(\text{Poi}(\lambda_i) \geq \lambda_{|C|} - \frac{\mu}{2} \right) =\mathbb{P}\left(\text{Poi}(\lambda_i) \geq \lambda_i + \left(\lambda_{|C|} - \lambda_i - \frac{\mu}{2}\right) \right)
    \leq \exp \left(\frac{-\left(\lambda_{|C|} - \lambda_i - \frac{\mu}{2}\right)^2}{2 \left(\lambda_i + \left(\lambda_{|C|} - \lambda_i - \frac{\mu}{2}\right) \right)} \right).
\end{equation}
Note that $\lambda_{|C|} - \lambda_i \geq \lambda_{|C|} - \lambda_{e^2 \cdot |C|} \geq \mu$. Therefore, $\left(\lambda_{|C|} - \lambda_i - \frac{\mu}{2}\right) \geq \frac{\mu}{2}$, and Equation (\ref{late-node-degree-tailbound}) is bounded above by:
$$\leq \exp\left( \frac{-\left(\frac{\mu}{2}\right)^2}{2 \lambda_{|C|}}\right) = \exp\left( \frac{-\mu}{8 \left(H_{n-1} - H_{|C| - 1} \right)}\right) \leq \frac{1}{n^3},$$
where the last expression uses that $\mu \geq c_{\mu} \log(n)^2$.

\paragraph{Step 3: Putting everything together.}
Let us first prove the result for $|C| \geq 2$. We want to say that all nodes $i \leq |C|$ are sufficiently low-degree with high probability and all nodes $i \geq e^2 \cdot |C|$ are sufficiently high-degree with high probability.

Let's first focus on $i \leq |C|$. For $i = 1$, combining the steps above we see:
\begin{equation}\label{C-geq-2-bound}
    \mathbb{P}\left(\text{deg}(1) \leq \lambda_{|C|} + \frac{\mu}{2} \right) \leq 4 \cdot \mathbb{P}\left(\text{Poi}(\lambda_2) \leq \lambda_{|C|} - \frac{\mu}{2} \right) \leq \frac{4}{n^3}.
\end{equation}
For $2 \leq i \leq |C|$, we see:
$$\mathbb{P}\left(\text{deg}(i) \leq \lambda_{|C|} + \frac{\mu}{2} \right) \leq 4 \cdot \mathbb{P}\left(\text{Poi}(\lambda_i) \leq \lambda_{|C|} - \frac{\mu}{2} \right) \leq \frac{4}{n^3}.$$ 
Next, let's look at $i \geq e^2 \cdot |C|$. From the steps above, we have:
$$\mathbb{P}\left(\text{deg}(i) \geq \lambda_{|C|} + \frac{\mu}{2} \right) \leq 4 \cdot \mathbb{P}\left(\text{Poi}(\lambda_i) \geq \lambda_{|C|} - \frac{\mu}{2} \right) \leq \frac{4}{n^3}.$$ 
Therefore, we have achieved the desired concentration bounds on the degrees of nodes. By taking a union bound, we find that with probability at least $1 - \frac{4}{n^2}$, all $i \leq |C|$ have degree $> \lambda_{|C|} + \frac{\mu}{2}$ and all $i \geq e^2 \cdot |C|$ have degree $< \lambda_{|C|} + \frac{\mu}{2}$.

Let us now prove the result for $|C| = 1$. The reason we need to handle this case separately is because Equation (\ref{right-tail-bound-poisson}) required that $\lambda_i \geq \lambda_{|C|}$ to hold. If $|C| = 1$, but the concentration bounds for the degree of node $1$ are computed using a $\text{Poi}(\lambda_2)$ distribution, $\lambda_2 \geq \lambda_{|C|}$ no longer holds. Instead, by the computations above, we find that:
$$\mathbb{P}\left(\text{deg}(1) \leq \lambda_{2} + \frac{\mu}{2} \right) \leq 4 \cdot \mathbb{P}\left(\text{Poi}(\lambda_2) \leq \lambda_{2} - \frac{\mu}{2} \right) \leq \frac{4}{n^3}.$$
As before, for $i \geq e^2$ we have $\mathbb{P}\left(\text{deg}(i) \geq \lambda_{2} + \frac{\mu}{2} \right) \leq 4 \cdot \mathbb{P}\left(\text{Poi}(\lambda_i) \geq \lambda_{2} - \frac{\mu}{2} \right) \leq \frac{4}{n^3}.$ By taking a union bound, we find that with probability at least $1 - \frac{4}{n^2}$, the first node has degree $> \lambda_{2} + \frac{\mu}{2}$ and all $i \geq e^2$ have degree $< \lambda_{2} + \frac{\mu}{2}$.
\end{proof}

\subsection{Proof of Theorem \ref{thm:UA_Spanner}}

We are now ready to prove Theorem \ref{thm:UA_Spanner}.

\begin{proof}[Proof of Theorem \ref{thm:UA_Spanner}]
    \textit{Proof of $O(\log n)$ spanning.}
        Consider the event that the degree bounds from Lemma \ref{lemma:UA-degree-concentration-bounds} hold for all intermediate centers $C_m$ (Definition \ref{def:intermediate-centers}) and also the connectivity bounds from Lemma \ref{lemma:UA-connected-to-centers} hold. This event takes place with probability at least $1 - \frac{1}{n}$.

        We argue that, if this event holds, any two nodes $u$ and $v$ in the graph will have a path of length at most $O(\log n)$ between them in the subgraph that the LCA gives access to, which implies that the subgraph is an $O(\log n)$ spanner.

        Consider any node $u$ in the uniform attachment graph. We prove that the algorithm keeps a path of length $O(\log n)$ to the root (the first node added) of the uniform attachment graph. Let $u_0 := u$, and let $u_t$ denote the node in the path kept from $u$ to the root which is $t$ edges away from $u$. It must be the case that $u_0 > |C_1|$ or $|C_{m+1}| < u_0 \leq |C_{m}|$ for some $m \in \{1, 2, \dots, M-1\}$ and $M$ and $|C_m|$ as defined in Equation (\ref{def:M}). Suppose that $u_0 > |C_{m+1}|$ and that $m$ is the highest value such that this expression is satisfied. By Lemma \ref{lemma:UA-connected-to-centers}, $u_0$ must have an edge to some $j \leq |C_{m+1}|$. Consider the definition of $\lambda_{|C|}$ from Lemma \ref{lemma:UA-degree-concentration-bounds}. By Lemma \ref{lemma:UA-degree-concentration-bounds}, all neighbors of $u_0$ that arrived after time $e^2 \cdot |C_{m+1}|$ must have degree $< \lambda_{|C_{m+1}|} + \frac{\mu}{2}$ and all neighbors of $u_0$ that arrived before time $|C_{m+1}|$ must have degree $> \lambda_{|C_{m+1}|} + \frac{\mu}{2}$. Put together this implies that the highest-degree neighbor of $u_0$ must have arrived before time $e^2 \cdot |C_{m+1}|$. Therefore, because the LCA keeps the edge between $u_0$ and its highest-degree neighbor, the next node $u_1$ along the path to the root must satisfy $u_1 < e^2 \cdot |C_{m+1}|$.

        In general, for any $u_{t-1}$, the same argument can be applied. Suppose that $u_{t-1} > |C_{m'+1}|$ and that $m'$ is the highest value such that this expression is satisfied. By the same argument as above, $u_{t} < e^2 \cdot |C_{m' + 1}|$. For example, if $u_0 > |C_{m+1}|$ (and that $m$ is the highest value such that this expression is satisfied), then $u_1 < e^2 \cdot |C_{m+1}|$, $u_2 < e^2 \cdot |C_{m+2}|$, and $u_3 < e^2 \cdot |C_{m+3}|$.

        Applying this repeatedly, if $u_0 > |C_{m+1}|$ (and that $m$ is the highest value such that this expression is satisfied), then  $$u_M \leq e^2 \cdot |C_{m + M}| \leq e^2 \cdot |C_{M}|.$$
        
        By definition of $M$ (see Definition \ref{def:M}), $|C_M| = 1$ and therefore $u_M \leq e^2$. Since $u_M$ is an integer time, this means that $u_M \leq 7$. This implies that starting from any $u$, in $M$ (which is less than $\ln(n)$) steps, we can reach a node added in the first seven steps of the uniform attachment process following the edges kept by the LCA. By Lemma \ref{lemma:UA-connected-to-centers}, this node will have an edge to the root node. Additionally, from Lemma \ref{lemma:UA-degree-concentration-bounds}, 
        the root node and each of the first seven nodes will have degree $> \lambda_7  + \mu/2 = \mu \cdot (H_{n-1} - H_6) + \mu/2$, and therefore the edge to the root is kept by the LCA.

        Therefore, the spanner that the LCA gives access to keeps a path of length $O(\log n)$ between any two nodes in the graph, specifically a path going through the first node added in the graph.
    \\\\        
    \noindent \textit{Proof of a sparsity of $n + c$ edges.} By \Cref{lemma:UA-degree-concentration-bounds}, the number of nodes with degree $> \lambda_7 + \mu/2 = \mu \cdot (H_{n-1} - H_6) + \mu/2$ is at most 55. The algorithm keeps one edge between each of these nodes. We can attribute every other kept edge to a distinct node; that is, the LCA keeps the edge from every other node to its highest-degree neighbor. Put together, this implies that the number of edges kept is $n + c$, for some constant $c$ independent of $n$.
    \\\\
    \noindent \textit{Proof of local work.} 
    By construction of the LCA, on input $(u, v)$, if the degrees of $u$ and $v$ are both at least $\mu \cdot (H_{n-1} - H_6) + \mu/2$, the edge is kept. Otherwise, adjacency queries are performed $u$ and $v$, which each have degrees at most $O(\mu \log n)$,
    yielding the stated worst-case time complexity. 
    Additionally, in this case, the average-case (over all possible queries) time complexity is $O(\mu)$; identifying nodes by their time of arrival and letting $d_t$ be the degree of the node that arrived at time $t$, the average time complexity is bounded above by:
    \[\frac{2}{|E|}\sum_{t} d_t^2 = \frac{2}{(n-1) \mu} \sum_{t = 1}^n O\left(\mu \ln\left( \frac{n}{t}\right)\right)^2 = O(\mu).\qedhere\]
\end{proof}

\paragraph{A note on trade-offs between local work and sparsity.} The amount of local work (in the worst case) can be reduced with small increases to the number of edges in the spanner by choosing a different degree threshold in the algorithm. As described above, on an input $(u, v)$, the algorithm checks if both $u$ and $v$ have degrees greater than $D := \mu \cdot (H_{n - 1} - H_6) + \mu/2$, and keeps the edge if this is the case. Otherwise, the edge is kept only when $u$ is the highest-degree neighbor of $v$ or vice versa. This degree $D$ is chosen to correspond to the degree threshold that the first $e^2 \cdot |C_M|$ nodes' degrees will be above. The sparsity corresponded to: $n - e^2 \cdot |C_M| + \left( e^2 \cdot |C_M| \right)^2$, and the amount of local work corresponded to this threshold $D$, in the sense that when $\min \{d_u, d_v\} > D$ the local work is $O(1)$ and otherwise the local work is $O(d_u + d_v)$ where at least one of $d_u$ and $d_v$ is at most $D$. However, one may choose to move the threshold to correspond to a different intermediate center $C_m$ used in the analysis of the algorithm. The algorithm could keep edges whose adjacent vertices both have degrees greater than $\mu \cdot (H_{n-1} - H_{e^2 \cdot |C_m| - 1}) + \mu/2$ and otherwise perform the same procedure as before for choosing whether to keep an edge. The number of edges in the spanner will slightly increase and the local work will slightly decrease.

\paragraph{A note on connections to root-finding.} As noted in Section \ref{app:PA}, for preferential attachment graphs, a connection can be made between local information algorithms for root-finding and local computation algorithms for spanners. This connection can also be made in the setting of uniform attachment graphs with sufficiently high degree parameter $\mu > c_{\mu} \log(n)^2$, for which local root-finding algorithms have not previously been studied. Particularly, the spanner LCA and its analysis give rise to a local root-finding algorithm for uniform attachment graphs. Consider the local algorithm that, starting on any input vertex $u$, follows the path of highest-degree neighbors (meaning the path with $u$, the highest-degree neighbor $v_1$ of $u$, the highest-degree neighbor $v_2$ of $v_1$, and so forth) until it reaches a node $w$ of degree at least $\mu \cdot H_{n-1} - \frac{\mu}{2}$. Return $w$ and all neighbors $w'$ of $w$ such that $\text{deg}(w') > \mu \cdot H_{n-1} - \frac{\mu}{2}$.

Using the same analysis as in the proof of \Cref{thm:UA_Spanner}, we find that the root is in this set with high probability. Moreover, this set is of constant size. This algorithm takes $O(\log n)$ time when we assume that the highest-degree neighbor of a node can be found in $O(1)$ time, which is a standard assumption for local root-finding algorithms \cite{PA1, PA2, PA3}. The time corresponds to the number of nodes explored by the algorithm.
\section{Joint Sampling}\label{app:joint-samp}

Finally, we describe our algorithm that provides access to a random graph together with its MIS. 
\ERMIS*

\begin{proof}
    Our algorithms works as follows.
    \paragraph{Global Implementation.} We first describe a global procedure that samples $G\la \Gnp$, in a way that we can later modify to have our desired locality property. Observe that to sample $G$, we can choose an arbitrary order to determine the status of edges $(i,j)$, and in fact this order can be adaptive, as long as each edge is independent.
    
    We initialize a sorted list $M_1=(1)$ and sequentially determine the status of edges $(1,2),(1,3),\ldots$, where each edge is retained in the graph with probability $p$. 
    We halt on the first $i$ such that $(1,i)\notin G$, at which point we set $M_2=M_1\circ (i)$.
    Next, we sample edges from $M_2$ to $i+1,\ldots,$ until we determine the first $i'$ such that $(v,i')\notin G$ for all $v\in M_2$, 
    upon which we again set $M_3=M_2\circ (i')$. 
    We continue in this fashion until the counter $i$ reaches $n$, and set $M$ equal to the final $M_j$.
    After this, we sample all remaining edges independently in an arbitrary order. Observe that this process is clearly equivalent to sampling $G\la \Gnp$,
    and moreover $M$ is an MIS, as every vertex $v$ is connected to an element of $M$ (and in fact is connected to an element with index less than $v$).

    \paragraph{Local Implementation.}We now modify the sampling procedure while keeping the ultimate distribution unchanged. Divide the sampling process into phases $P_j$,
        where in phase $P_j$ we sample edges from $M_j=(v_1,\ldots,v_j)$. 
        Let $K$ be the random variable of the index of the next vertex that is not connected to every $v\in M_j$. 
        This index is distributed $K\sim \Geom\left((1-p)^j\right)+v_j$. 
        For every possible configuration of edges in $M_j \times \{v_j+1,\ldots,n\}$, conditioning on the value of $k=K$ is equivalent to conditioning on the event
        \[
        \text{for all $s\in [j]$, }(v_s,k)\notin G \bigwedge  \text{ for all $v_j<a<k$, there exists $t\in [j]$ such that }(v_t,a)\in G
        \]
        Furthermore, observe that for every vertex $a$ where $a<v_j+k$, we can sample from the conditional edge distribution $(a,M_j)$ conditioned on the value of $K$
        \[
            E_a=(v_1,a),\ldots,(v_j,a) | k=K
        \]
        by sampling each edge in $(a,M)$ independently with probability $p$ and, if no edge is present, rejecting and retrying. It is this procedure that we will use in our algorithm. 
        
        On every query $(u,v)$, our algorithm first determines $M=(v_1,\ldots,v_t)$ by repeated sampling from $\Geom$ with the correct parameters. Subsequently, our connectivity rule is as follows. Given $(u,v)$:
        \begin{enumerate}
            \item If $u\notin M$ and $v\notin M$, we add $(u,v)$ to $G$ independently with probability $p$.
            \item If $u\in M$ and $v\in M$, we do not add $(u,v)$ to $G$.
            \item If $v\in M$ and $u\notin M$, let $v_i<u<v_{i+1}$ be elements of $M$ that bracket $u$. By definition of the global sampling rule, $v\cap \Gamma((v_1,\ldots,v_l))$ is nonempty, and as in the global sampling procedure we sample $(v_1,u),\ldots,(v_i,u)$ independently with probability $p$, and reject and retry if no edges are retained, and once we sample a nonzero neighborhood determine the edges in this fashion. 
            Finally, for $(v_b,u)$ for $b>i$, we again sample this edge independently with probability $p$.

             In order to provide a consistent view of $(G,M)$, $\cA$ designates a fixed section of random tape to be used for generating the MIS, and for all other sampling procedures in the algorithm. To determine $v_{i+1}$ from $M=(v_1,\ldots,v_i)$, we draw from $\Geom$ using~\Cref{lem:GeomSamp}.
        \end{enumerate} 
    \paragraph{Query Time for LCA.} There are two primary components of the runtime, both of which can be bounded in terms of the ultimate size of $M$.
        \begin{claim}
            The final size of $M$ is at most $O(\log(n)/p)$ with high probability.
        \end{claim}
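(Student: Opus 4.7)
The plan is to show $\Pr[|M|>k]\le 1/n$ for $k=\lceil 2\ln(n)/p\rceil = O(\log n/p)$ via a one-shot union bound. The intuition is simple: once $M$ has accumulated $k$ elements via the greedy procedure, adding even one more element requires some later vertex to have no edge back to any of those $k$ elements, an event of probability at most $(1-p)^k \le e^{-pk}$ per candidate vertex, and there are at most $n$ such candidates.

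Concretely, I would first condition on the first $k$ elements of $M$ being any fixed tuple $S=(v_1,\ldots,v_k)$. The key structural observation is that, under the sequential sampling description of $\cA$, determining these first $k$ elements only examines edges among vertices of index at most $v_k$: each phase $P_j$ consumes only edges from $M_j$ to vertices of index in $(v_j,v_{j+1}]$ and nothing beyond. In particular, for every $u>v_k$ the edges $(v_1,u),\ldots,(v_k,u)$ remain fresh, independent Bernoulli$(p)$ variables even after the conditioning, so the probability that $u$ has no edge to $S$ is exactly $(1-p)^k$.

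Second, I would apply a union bound over the at most $n$ candidate vertices $u>v_k$ to obtain
\[
\Pr\bigl[|M|>k \,\bigm|\, \text{first $k$ elts of $M$ are } S\bigr] \le n(1-p)^k \le n e^{-pk}.
\]
Since this bound holds for every fixed $S$, marginalizing yields $\Pr[|M|>k]\le n e^{-pk}\le 1/n$ for the chosen $k$, completing the argument.

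The step I expect to require the most care is verifying the independence used above, namely that conditioning on the outcome of the first $k$ phases genuinely leaves the edges from $S$ to $\{v_k+1,\ldots,n\}$ as independent Bernoulli$(p)$ variables. This should be essentially immediate from the global procedure (those edges are sampled only in subsequent phases, which have not yet been executed), and the local implementation is designed to induce the same joint distribution on $(G,M)$. Once this independence is pinned down, the remainder is just the one-line union bound above, so I do not anticipate any further obstacles.
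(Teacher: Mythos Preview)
Your argument is correct. The independence step is exactly as you say: by the time the greedy procedure has fixed $M_k=(v_1,\ldots,v_k)$, the only edges revealed are those from $\{v_1,\ldots,v_j\}$ into the interval $(v_j,v_{j+1}]$ for each $j<k$, so every edge from $S$ into $\{v_k+1,\ldots,n\}$ is still an unexamined independent Bernoulli$(p)$ coin. The union bound and the marginalization over $S$ are then routine, and $k=\lceil 2\ln(n)/p\rceil$ gives $ne^{-pk}\le 1/n$.

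The paper takes a different, one-line route: it observes that $|M|$ is at most the size of the maximum independent set in $G$, i.e., the maximum clique in the complement graph $G(n,1-p)$, and then invokes the classical bound that this clique size is $O(\log n/\log(1/(1-p)))=O(\log(n)/p)$ with high probability. So the paper reduces to a cited structural fact about random graphs, while you give a self-contained principle-of-deferred-decisions argument that directly analyzes the greedy process. Your approach avoids the external citation and is arguably more in the spirit of the local sampling analysis; the paper's approach is shorter on the page and yields the slightly stronger statement that \emph{every} independent set (not just the greedy one) obeys the bound. Either suffices here.
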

        \begin{proof}
            We have that $|M|$ is bounded by the size of the maximum independent set in $G$, which is itself the size of the maximum clique in the dual graph. As the dual graph is distributed $G(n,q=1-p)$, we appeal to the well-known result~\cite{maxclique} that the maximum clique in a random graph has size $O(\log(n)/\log(1/q))$ with high probability.
            \begin{align}
                \log(1/q) = \log\left(\frac{1}{1-p}\right)
                 = \log\left(1 + \frac{p}{1-p}\right)
                 \ge \frac{\frac{p}{1-p}}{1 + \frac{p}{1-p}} = p
            \end{align}
            This gives us the bound of $O(\log(n)/p)$
        \end{proof}
        
        We then note that the runtime is dominated first by determining $M$, which we do using $|M|$ calls to~\Cref{lem:GeomSamp}, and hence takes total time $|M|\cdot \polylog(n)=\polylog(n)/p$ with high probability. Second, to determine if $e\in G$ and the edge falls into the third case, we perform rejection sampling where our success probability is at least $p$ in each iteration, and hence we terminate after $O(\log(n)/p)$ iterations with high probability and hence the total work is again bounded as $\polylog(n)/p$.
\end{proof}

\section*{Acknowledgments}
Part of this research was conducted while a subset of the authors were visiting the Simons Institute program on Sublinear Algorithms.

\bibliographystyle{alpha}
\bibliography{ref}

\end{document}